\newcommand{\typeof}{1} %
\newcommand{\longv}[1]{\ifthenelse{\equal{\typeof}{1}}{#1}{}}
\newcommand{\shortv}[1]{\ifthenelse{\equal{\typeof}{0}}{#1}{}}
\title{Complexity Analysis\\in Presence of Control Operators\\and Higher-Order Functions\thanks{An extended version of this paper including more details is available~\cite{EV}.}}
\title{Complexity Analysis\\in Presence of Control Operators\\and Higher-Order Functions}
\author{Ugo Dal Lago \and Giulio Pellitta}
\institute{Universit\`a di Bologna \& INRIA Sophia Antipolis\\ \email{\{dallago,pellitta\}@cs.unibo.it}}
\author{Ugo Dal Lago \and Giulio Pellitta}
\date{}
\newcommand{\condinc}[2]{#1}
\newcommand{\BLLPtoLLP}[1]{\langle #1\rangle}
\newcommand{\BLLPtoLLPpn}[1]{\langle\!\langle #1\rangle\!\rangle}
\newcommand{\remove}[1]{}
\newcommand{\SUM}{\sum}
\newcommand{\WITH}{\with}
\newcommand{\TENSOR}{\otimes}
\newcommand{\PLUS}{\oplus}
\newcommand{\PAR}{\parr}
\newcommand{\AX}{\mathsf{Ax}}
\newcommand{\CUT}{\mathsf{Cut}}
\newcommand{\NOT}[1]{{#1}^\perp}
\newcommand{\SUBTYPE}{\sqsubseteq}
\newcommand{\SUPTYPE}{\sqsupseteq}
\newcommand{\AND}{\wedge}
\newcommand{\nfone}{N}
\newcommand{\nftwo}{M}
\newcommand{\nfthree}{O}
\newcommand{\nffour}{K}
\newcommand{\pfone}{P}
\newcommand{\pftwo}{Q}
\newcommand{\pfthree}{R}
\newcommand{\pffour}{S}
\newcommand{\contextone}{\Gamma}
\newcommand{\contexttwo}{\Delta}
\newcommand{\contextthree}{\Theta}
\newcommand{\contextfour}{\Xi}
\newcommand{\contextfive}{\Psi}
\newcommand{\contextsix}{\Phi}
\newcommand{\contextseven}{\Upsilon}
\newcommand{\contexteight}{\Pi}
\newcommand{\negcontextone}{\mathcal{N}}
\newcommand{\negcontexttwo}{\mathcal{M}}
\newcommand{\negcontextthree}{\mathcal{O}}
\newcommand{\negcontextfour}{\mathcal{K}}
\newcommand{\proofone}{\pi}
\newcommand{\prooftwo}{\rho}
\newcommand{\proofthree}{\sigma}
\newcommand{\prooffour}{\lambda}
\newcommand{\pfd}[1]{#1^\diamond}
\newcommand{\polone}{p}
\newcommand{\poltwo}{q}
\newcommand{\polthree}{r}
\newcommand{\polfour}{s}
\newcommand{\polfive}{t}
\newcommand{\polnine}{h}
\newcommand{\polten}{k}
\newcommand{\poleleven}{m}
\newcommand{\resleq}{\sqsubseteq}
\newcommand{\reslt}{\sqsubset}
\newcommand{\resgt}{\sqsupset}
\newcommand{\resgeq}{\sqsupseteq}
\newcommand{\varone}{x}
\newcommand{\vartwo}{y}
\newcommand{\varthree}{z}
\newcommand{\varfour}{u}
\newcommand{\varfive}{v}
\newcommand{\varsix}{c}
\newcommand{\varseven}{d}
\newcommand{\vareight}{a}
\newcommand{\varnine}{b}
\newcommand{\varten}{f}
\newcommand{\lvarone}{x}
\newcommand{\mvarone}{\alpha}
\newcommand{\typeone}{A}
\newcommand{\typetwo}{B}
\newcommand{\pltone}{\mathbf{P}}
\newcommand{\plttwo}{\mathbf{Q}}
\newcommand{\pltthree}{\mathbf{R}}
\newcommand{\nltone}{\mathbf{N}}
\newcommand{\nlttwo}{\mathbf{M}}
\newcommand{\nltthree}{\mathbf{L}}
\newcommand{\nltfour}{\mathbf{O}}
\newcommand{\ltone}{\mathbf{A}}
\newcommand{\lttwo}{\mathbf{B}}
\newcommand{\ATOMONE}{V}
\newcommand{\ATOMTWO}{U}
\newcommand{\COATOMONE}{X}
\newcommand{\COATOMTWO}{Y}
\newcommand{\LET}[3]{#1\{#2:=#3\}}
\newcommand{\subst}[3]{#1\{#2/#3\}}
\newcommand{\DEF}{:=}
\newcommand{\contsum}{\uplus}
\newcommand{\intone}{n}
\newcommand{\intthree}{k}
\newcommand{\intfour}{l}
\newcommand{\prodvar}{m}
\newcommand{\idxone}{i}
\newcommand{\idxtwo}{j}
\newcommand{\muvarone}{\alpha}
\newcommand{\muvartwo}{\beta}
\newcommand{\muvarthree}{\gamma}
\newcommand{\muvarfour}{\delta}
\newcommand{\termone}{t}
\newcommand{\termtwo}{u}
\newcommand{\termthree}{v}
\newcommand{\termfour}{w}
\newcommand{\K}{\mathsf{K}}
\newcommand{\PCF}{\mathsf{PCF}}
\newcommand{\muPCF}{\mu\mathsf{PCF}}
\newcommand{\BLL}{\mathsf{BLL}}
\newcommand{\LLP}{\mathsf{LLP}}
\newcommand{\BLLP}{\mathsf{BLLP}}
\newcommand{\BLLPLM}{\mathsf{BLLP}_{\lambda\mu}}
\newcommand{\BLLPLMM}{\mathsf{BLLP}_{\lambda\mu}^\mathsf{mult}}
\newcommand{\LABEL}[3]{[#1]_{#2}^{#3}}
\newcommand{\DLABEL}[5]{\mbox{${}_{#1}^{#2}{[{#3}]}_{#4}^{#5}$}}
\newcommand{\arrow}[2]{\multimap_{#1}^{#2}}
\newcommand{\any}{*}
\newcommand{\pneg}[3]{\neg_{#1}^{#2} {#3}}
\newcommand{\sumlf}[3]{\sum_{#1<#2}#3}
\newcommand{\streq}{\sim}
\newcommand{\pof}{\;\triangleright\;}
\newcommand{\seq}{\vdash}
\newcommand{\callcc}{\texttt{callcc}}
\newcommand{\FellC}{\ensuremath{\mathcal{C}}}
\newcommand{\Scheme}{\ensuremath{\mathsf{Scheme}}}
\newcommand{\grem}[2][unknown]{\textcolor{blue}{\textbf{Giulio} [data #1]: #2}}
\newcommand{\srvone}{S}
\newcommand{\srvp}[2]{S^{#1}(#2)}
\newcommand{\cutelimarr}{\longrightarrow}
\newcommand{\redcutelimarr}{\Longrightarrow}
\newcommand{\logred}{\longmapsto}
\newcommand{\commeq}{\cong}
\newcommand{\tj}[4]{#1\;\vdash\;#2:#3\;|\;#4}
\newcommand{\tjp}[5]{#1\;\vdash_{#2}\;#3:#4\;|\;#5}
\newcommand{\envone}{\mathscr{E}}
\newcommand{\envtwo}{\mathscr{F}}
\newcommand{\stone}{\mathscr{S}}
\newcommand{\sttwo}{\mathscr{T}}
\newcommand{\closone}{\mathcal{C}}
\newcommand{\clostwo}{\mathcal{D}}
\newcommand{\confone}{C}
\newcommand{\conftwo}{D}
\newcommand{\toconf}{\hookrightarrow}
\newcommand{\tjc}[2]{\vdash #1:#2}
\newcommand{\pair}[2]{\langle#1,#2\rangle}
\newcommand{\emptyenv}{\emptyset}
\newcommand{\emptystk}{\varepsilon}
\newtheorem*{Def*}{Definition}
\newtheorem*{Pro*}{Proposition}
\newtheorem*{Thm*}{Theorem}
\newtheorem*{Rem*}{Remark}
\newtheorem*{Lem*}{Lemma}
\newtheorem*{Exa*}{Example}
\newtheorem*{Not*}{Notation}
\newtheorem*{Cor*}{Corollary}
\newenvironment{proof}{\begin{trivlist}
       \item[\hskip \labelsep {\bfseries Proof.}]}{\hfill $\Box$ \end{trivlist}}
\newcommand{\FV}[1]{\mathit{FV}(#1)}
\newenvironment{varitemize}
{
\begin{list}{\shortv{\labelitemii}\longv{\labelitemi}}
{
\setlength{\itemsep}{0pt}
 \setlength{\topsep}{0pt}
 \setlength{\parsep}{0pt}
 \setlength{\partopsep}{0pt}
 \setlength{\leftmargin}{15pt}
 \setlength{\rightmargin}{0pt}
 \setlength{\itemindent}{0pt}
 \setlength{\labelsep}{5pt}
 \setlength{\labelwidth}{10pt}}}
{
 \end{list}
}
\newcommand{\midd}{\; \; \mbox{\Large{$\mid$}}\;\;}
\newtheorem{theorem}{Theorem}[section]
\newtheorem{lemma}[theorem]{Lemma}
\newtheorem{proposition}[theorem]{Proposition}
\begin{document}

\linespread{0.915}

\maketitle

\begin{abstract}
A polarized version of Girard, Scedrov and Scott's Bounded Linear Logic is introduced and its
normalization properties studied. Following Laurent~\cite{laurent2003polarized}, the logic naturally
gives rise to a type system for the $\lambda\mu$-calculus, 
whose derivations reveal bounds on the time complexity of the underlying term. This is the first example of a type system for 
the $\lambda\mu$-calculus guaranteeing time complexity bounds for typable programs.
\end{abstract}

\section{Introduction}
Among non-functional properties of programs, bounds on the amount of resources 
(like computation time and space) programs need when executed are particularly significant. The problem of deriving such
bounds is indeed crucial in safety-critical systems, but is undecidable whenever non-trivial
programming languages are considered. If the units of measurement become concrete and close to the physical
ones, the problem becomes even more complicated and architecture-dependent. A typical example is the one of WCET techniques adopted
in real-time systems~\cite{wcet}, which \remove{do }not only need to deal with how many machine instructions a program corresponds
to, but also with how much time each instruction costs when executed by possibly complex architectures
(including caches, pipelining, etc.), a task which is becoming even harder with the current trend
towards multicore architectures.

A different approach consists in analysing the \emph{abstract} complexity of programs. As an example,
one can take the number of instructions executed by the program as a measure of its execution
time. This is of course a less informative metric, which however becomes more accurate if the actual 
time taken \emph{by each instruction} is kept low. One advantage of this analysis is the independence 
from the specific hardware platform executing the program at hand: the latter only needs
to be analysed once. A variety of \emph{complexity analysis} techniques have been employed in this context, from
abstract interpretation~\cite{speed} to type systems~\cite{HOAA10} to program logics~\cite{deBakker80} 
to interactive theorem proving. Properties of programs written in higher-order functional languages are for 
various reasons well-suited to be verified by way of type systems. This includes not only safety properties 
(e.g. well-typed programs do not go wrong), but more complex ones, including resource bounds~\cite{HOAA10,BaillotTeruiIC,GaboardiRonchi,DalLago2011}. 

In this paper, we delineate a methodology for complexity analysis of higher-order programs \emph{with control
operators}. The latter are constructs which are available in most concrete functional programming languages (including
\textsf{Scheme} and \textsf{OCaml}), and allow control to flow in non-standard ways. The technique 
we introduce takes the form of a type system for de Groote's $\lambda\mu$-calculus~\cite{de1994relation} derived 
from Girard, Scedrov and Scott's Bounded Linear Logic~\cite{GSS92TCS} ($\BLL$ in the following). 
We prove it to be sound: typable programs can indeed be reduced in a number of steps lesser or equal to 
a (polynomial) bound which can be read from the underlying type derivation.
A similar result can be given when the cost model is the one induced by an abstract machine. To the authors'
knowledge, this is the first example of a complexity analysis methodology coping well not only with higher-order
functions, but also with control operators.

In the rest of this section, we explain the crucial role Linear Logic has in this work, in the meantime delineating
its main features.
\subsection{Linear Logic and Complexity Analysis}
Linear Logic~\cite{Girard87} is one of the most successful tools for characterizing complexity classes in
a higher-order setting, through the Curry-Howard correspondence. Subsystems of it can indeed be shown to correspond to the polynomial 
time computable functions~\cite{GSS92TCS,Girard98IC,Lafont04TCS} or the logarithmic 
space computable functions~\cite{Schoepp07LICS}. Many of the introduced fragments can then be turned into
type systems for the $\lambda$-calculus~\cite{BaillotTeruiIC,GaboardiRonchi}, some of them being
relatively complete in an intensional sense~\cite{DalLago2011}.

The reason for this success lies in the way Linear Logic decomposes intuitionistic implication into a linear implication,
which has low complexity, and an \emph{exponential modality}, which marks those formulas to which structural rules
can be applied. This gives a proper status to proof duplication, without which cut-elimination can be 
performed in a linear number of steps. By tuning the rules governing the exponential modality, then, one can define
logical systems for which cut-elimination can be performed within appropriate resource bounds. Usually, this is
coupled with an encoding of all functions in a complexity class $\mathcal{C}$ into the system at hand, which makes
the system a \emph{characterization} of $\mathcal{C}$.

Rules governing the exponential modality $!$ can be constrained in (at least) two different 
ways:\begin{varitemize}
\item
  On the one hand, one or more of the rules governing $!$ (e.g., dereliction or digging) 
  can be \emph{dropped} or \emph{restricted}. This is what happens, for example, in Light Linear Logic~\cite{Girard98IC} or Soft
  Linear Logic~\cite{Lafont04TCS}.
\item
  On the other, the logic can be further refined and \emph{enriched} so as to control the number of times structural
  rules are applied. In other words, rules for $!$ are still all there, but in a refined form. This is
  what happens in Bounded Linear Logic~\cite{GSS92TCS}. Similarly, one could control so-called modal impredicativity
  by a system of levels~\cite{BaillotMazza10}.
\end{varitemize} 
The first approach corresponds to cutting the space of proofs with an axe: many proofs, and among
them many corresponding to efficient algorithms, will not be part of the system because they require one of
the forbidden logical principles. The second approach is milder in terms of the class of good programs that are ``left
behind'': there is strong evidence that with this approach one can obtain a quite expressive logical system~\cite{DalLago2009,DalLago2011}.

Not much is known about whether this approach scales to languages in which not only functions but also
first-class continuations and control operators are present. Understanding the impact of these features to the complexity
of programs is an interesting research topic, which however has received little attention in the past.
\subsection{Linear Logic and Control Operators}
On the other hand, more than twenty years have passed since classical logic has been shown to be amenable to the
Curry-Howard paradigm~\cite{Griffin90}. And, interestingly enough, classical axioms (e.g. Pierce's law or the law of
the Excluded Middle) can be seen as the type of control operators like \textsf{Scheme}'s \texttt{callcc}. In the meantime, the various facets of this
new form of proofs-as-programs correspondence have been investigated in detail, and many extensions of the $\lambda$-calculus
for which classical logic naturally provides a type discipline have been introduced (e.g. \cite{Parigot,CurienHerbelin}).

Moreover, the decomposition provided by Linear Logic is known to scale up to classical logic~\cite{Girard91}. Actually, Linear Logic
was known to admit an involutive notion of negation from its very inception~\cite{Girard87}. A satisfying embedding of Classical
Logic into Linear Logic, however, requires restricting the latter by way of polarities~\cite{phdlaurent}: this way one is left with
a logical system with most of the desirable dynamical properties.

In this paper, we define $\BLLP$, a polarized version of Bounded Linear Logic. The kind of
enrichment resource polynomials provide in $\BLL$ is shown to cope well with polarization.
Following the close relationship between polarized linear logic and the $\lambda\mu$-calculus~\cite{laurent2003polarized}, 
$\BLLP$ gives rise to a type system for the $\lambda\mu$-calculus. Proofs and typable $\lambda\mu$-terms 
are both shown to be reducible to their cut-free or normal forms in a number of steps bounded by a polynomial
weight. Such a result for the former translates to a similar result for the latter, since any reduction step in 
$\lambda\mu$-terms corresponds to one or more reduction steps in proofs. The analysis is then extended to the reduction of 
$\lambda\mu$-terms by a Krivine-style abstract machine~\cite{de1998environment}.
\section{Bounded Polarized Linear Logic as A Sequent Calculus}
In this section, we define $\BLLP$ as a sequent calculus. Although this section is self-contained,
some familiarity with both bounded~\cite{GSS92TCS} and polarized~\cite{laurent2003polarized} linear logic would
certainly help. \shortv{Some more details can be found in an extended version of the
present paper~\cite{EV}.}
\subsection{Polynomials and Formulas}\label{sect:polform}
A \emph{resource monomial} is any (finite) product of binomial coefficients in the form
$\displaystyle\prod_{\idxone=1}^{\prodvar} \left({\varone_\idxone \atop \intone_\idxone}\right)$, 
where the $\varone_\idxone$ are distinct variables and the $\intone_\idxone$ are non-negative integers.
A \emph{resource polynomial} is any finite sum of resource monomials.
Given resource polynomials $\polone, \poltwo$ write $\polone\resleq \poltwo$ to denote that 
$\poltwo-\polone$ is a resource polynomial. If $\polone\resleq \polthree$ and $\poltwo\resleq \polfour$ 
then also $\poltwo\circ \polone \resleq \polfour\circ\polthree$. Resource polynomials are closed
by addition, multiplication, bounded sums and composition~\cite{GSS92TCS}. 

A \emph{polarized formula} is a formula (either positive or negative) generated by the following grammar
\shortv{
\begin{align*}
  \pfone&::=\ATOMONE\midd\pfone\TENSOR\pfone\midd 1
    \midd !_{\varone<\polone}\nfone;\\
  \nfone&::=\NOT{\ATOMONE}\midd\nfone\PAR\nfone\midd \bot 
    \midd ?_{\varone<\polone}\pfone;
\end{align*}}
\longv{
\begin{align*}
  \pfone&::=\ATOMONE(\vec \polone) \midd \pfone\TENSOR \pfone \midd \remove{\pfone\PLUS \pfone \midd} 1 
  \midd \exists \ATOMONE \pfone \midd !_{\varone<\polone}\nfone;\\
  \nfone&::=\NOT{\ATOMONE}(\vec \polone) \midd \nfone\PAR\nfone \midd \remove{\nfone\WITH\nfone \midd} \bot
  \midd \forall \ATOMONE \nfone \midd ?_{\varone<\polone}\pfone.
\end{align*}
}where $\ATOMONE$ ranges over a countable sets of atoms. 
Throughout this paper, formulas (but also terms, contexts, etc.) are considered modulo $\alpha$-equivalence.
Formulas (either positive or negative) are ranged over by metavariables like $\typeone,\typetwo$. Formulas
like $\NOT{\ATOMONE}$ are sometime denoted as $\COATOMONE,\COATOMTWO$.

In a polarized setting, contraction can be performed on any negative formula. As a consequence, we need the notion
of a \emph{labelled\footnote{Keep in mind that linear logic contains 
a subset of formulas which is isomorphic to (polarized) classical logic. $\LABEL{\nfone}{\varone}{\polone}$ 
(resp. $\LABEL{\pfone}{\varone}{\polone}$) can be thought of roughly as $?_{\varone < \polone}\NOT{\nfone}$ 
(resp. $!_{\varone < \polone}\NOT{\pfone}$), i.e., in a sense we can think of labelled formulas 
as formulas hiding an implicit exponential modality.} formula} $\LABEL{\typeone}{\varone}{\polone}$, 
namely the \emph{labelling} of the formula $\typeone$ with respect to $\varone$ and $\polone$. 
All occurrences of $\varone$ in $\typeone$ are bound in 
$\LABEL{\typeone}{\varone}{\polone}$. Metavariables for labellings of positive (respectively, negative) 
formulas are $\pltone,\plttwo,\pltthree$ (respectively, $\nltone,\nlttwo,\nltthree$). Labelled formulas are sometimes
denoted with metavariables $\ltone,\lttwo$ when their polarity is not essential. Negation, as usual in classical linear
system, can be applied to any (possibly labelled) formula, \emph{\`a la} De Morgan. When the resource
variable $\varone$ does not appear in $\typeone$, then we do not need to mention it when writing
$\LABEL{\typeone}{\varone}{\polone}$, which becomes $\LABEL{\typeone}{}{\polone}$. Similarly for
$!_{\varone<\polone}\nfone$ and $?_{\varone<\polone}\pfone$. 

Both the space of formulas and the space of
labelled formulas can be seen as partial orders by stipulating that two (labelled) 
formulas can be compared \remove{only if}iff they have \emph{exactly} the
same skeleton and the polynomials occurring in them can be compared.
\shortv{
As an example,
\begin{align*}
  !_{\varone<\polone}\nfone\SUBTYPE\; !_{\varone<\poltwo}\nftwo &\mbox{ iff } \poltwo\resleq \polone \AND \nfone \SUBTYPE \nftwo;\\
  ?_{\varone<\polone}\pfone\SUBTYPE\; ?_{\varone<\poltwo}\pftwo &\mbox{ iff } \polone\resleq \poltwo \AND \pfone\SUBTYPE \pftwo.
\end{align*}}\longv{
Formally,
\begin{align*}
  \ATOMONE(\polone_1, \dots, \polone_\intone)&\SUBTYPE \ATOMONE(\poltwo_1, \dots, \poltwo_\intone) \mbox{ iff } 
    \forall \idxone.\polone_\idxone\resleq \poltwo_\idxone;\\
  \NOT{\ATOMONE}(\polone_1, \dots, \polone_\intone)&\SUBTYPE \NOT{\ATOMONE}(\poltwo_1, \dots, \poltwo_\intone) \mbox{ iff } 
    \forall \idxone.\poltwo_\idxone\resleq \polone_\idxone;\\
  1 &\SUBTYPE 1;\\
  \bot &\SUBTYPE \bot;\\
  \pfone \TENSOR \pftwo &\SUBTYPE \pfthree \TENSOR \pffour \mbox{ if{}f }\pfone \SUBTYPE \pfthree \AND \pftwo \SUBTYPE \pffour;\\
  \nfone \PAR \nftwo &\SUBTYPE \nfthree \PAR \nffour \mbox{ if{}f } \nfone \SUBTYPE \nfthree \AND \nftwo \SUBTYPE \nffour;\\
  !_{\varone<\polone}\nfone&\SUBTYPE !_{\varone<\poltwo}\nftwo \mbox{ if{}f } \poltwo\resleq \polone \AND \nfone \SUBTYPE \nftwo;\\
  ?_{\varone<\polone}\pfone&\SUBTYPE ?_{\varone<\poltwo}\pftwo \mbox{ if{}f } \polone\resleq \poltwo \AND \pfone\SUBTYPE \pftwo;\\
  \forall \ATOMONE.\nfone&\SUBTYPE \forall \ATOMONE.\nftwo \mbox{ if{}f } \nfone \SUBTYPE \nftwo;\\
  \exists \ATOMONE.\pfone&\SUBTYPE \exists \ATOMONE.\pftwo \mbox{ if{}f } \pfone \SUBTYPE \pftwo.
\end{align*}
}
In a sense, then, polynomials occurring next to atoms or
to the \emph{whynot} operator are in positive position, while those 
occurring next to the \emph{bang} operator are in negative position. In all the other
cases, $\resleq$ is defined component-wise, in the natural way, e.g. $\pfone\TENSOR\pftwo\resleq\pfthree\TENSOR\pffour$
iff both $\pfone\resleq\pfthree$ and $\pftwo\resleq\pffour$. Finally $\LABEL{\nfone}{\varone}{\polone}\SUBTYPE \LABEL{\nftwo}{\varone}{\poltwo}$ 
iff $\nfone \SUBTYPE \nftwo \AND \polone \resgeq \poltwo$. And dually, $\LABEL{\pfone}{\varone}{\polone}\SUBTYPE \LABEL{\pftwo}{\varone}{\poltwo}$ 
iff $\nfone \SUBTYPE \nftwo \AND \polone \resleq \poltwo$.
\longv{
  \begin{lemma}\label{Lem:subneg}
    $\typeone\SUBTYPE\typetwo$ iff $\NOT{\typetwo}\SUBTYPE\NOT{\typeone}$.
    Moreover, $\ltone\SUBTYPE\lttwo$ iff $\NOT{\lttwo}\SUBTYPE\NOT{\ltone}$. 
  \end{lemma}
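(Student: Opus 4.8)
The plan is to prove both biconditionals by a single structural induction. Since $\typeone\SUBTYPE\typetwo$ is defined only when $\typeone$ and $\typetwo$ share the same skeleton, and since De Morgan negation merely dualizes each atom, connective, modality and quantifier and therefore \emph{preserves} that skeleton, I can induct on the common skeleton of $\typeone$ and $\typetwo$; at each stage $\NOT{\typetwo}$ and $\NOT{\typeone}$ again share a skeleton, so the right-hand comparison is always well-formed.

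For the base cases I would read off the definition directly. When $\typeone=\ATOMONE(\vec\polone)$ and $\typetwo=\ATOMONE(\vec\poltwo)$ we have $\typeone\SUBTYPE\typetwo$ iff $\polone_\idxone\resleq\poltwo_\idxone$ for every $\idxone$, whereas $\NOT{\typeone}=\COATOMONE(\vec\polone)$, $\NOT{\typetwo}=\COATOMONE(\vec\poltwo)$ and the rule for negated atoms \emph{reverses} the polynomial order, so $\NOT{\typetwo}\SUBTYPE\NOT{\typeone}$ holds iff again $\polone_\idxone\resleq\poltwo_\idxone$ for every $\idxone$; the two conditions coincide. The negated-atom case is symmetric, and $1,\bot$ are immediate since $1^\perp=\bot$, $\bot^\perp=1$, and each is comparable only with itself.

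The inductive cases all follow the same template, exploiting that negation swaps each connective with its dual while $\SUBTYPE$ is defined componentwise. For $\typeone=\pfone\TENSOR\pftwo$ and $\typetwo=\pfthree\TENSOR\pffour$, negation produces $\PAR$ of the negated components, and I would close the case by applying the induction hypothesis to $\pfone,\pfthree$ and to $\pftwo,\pffour$; the $\PAR$ case is dual, and $\forall/\exists$ are analogous with a single subformula. The only cases in which a polynomial must be carried through are the modalities: for $!_{\varone<\polone}\nfone\SUBTYPE\;!_{\varone<\poltwo}\nftwo$ the defining condition is $\poltwo\resleq\polone\AND\nfone\SUBTYPE\nftwo$, and since $(!_{\varone<\polone}\nfone)^\perp=\;?_{\varone<\polone}\NOT{\nfone}$, the dual comparison $?_{\varone<\poltwo}\NOT{\nftwo}\SUBTYPE\;?_{\varone<\polone}\NOT{\nfone}$ unfolds to $\poltwo\resleq\polone\AND\NOT{\nftwo}\SUBTYPE\NOT{\nfone}$, which by the induction hypothesis is exactly the original condition; the $?$ case is symmetric.

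The \emph{moreover} claim for labelled formulas reduces to the unlabelled one. Unfolding $\LABEL{\nfone}{\varone}{\polone}\SUBTYPE\LABEL{\nftwo}{\varone}{\poltwo}$ to $\nfone\SUBTYPE\nftwo\AND\polone\resgeq\poltwo$, and using that negation of a labelled formula negates the inner formula while retaining the same polynomial (so $\NOT{\lttwo}$ and $\NOT{\ltone}$ are labelled formulas of the opposite polarity), the desired equivalence follows from the already-proved statement applied to $\nfone,\nftwo$ together with the order reversal built into the labelled-positive rule. I do not expect a genuine obstacle here: the induction is routine, and the one point demanding care throughout is checking that the contravariance of the polynomials next to $!$, against their covariance next to $?$ and next to atoms, lines up precisely with the fact that De Morgan negation exchanges $!$ with $?$ and positive atoms with negative ones, so that antitonicity propagates consistently rather than producing a reversed inequality somewhere.
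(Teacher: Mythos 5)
Your proof is correct and follows essentially the same route as the paper's: a structural induction for the unlabelled biconditional (with the contravariance of polynomials at $!$ and at negated atoms cancelling against the connective swap performed by De Morgan negation), followed by a direct unfolding of the definitions to reduce the labelled case to the unlabelled one. The paper merely asserts the induction without detail and spells out only the labelled case, so your version is just a more explicit rendering of the same argument.
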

  \begin{proof}
    $\typeone\SUBTYPE\typetwo$ iff $\NOT{\typetwo}\SUBTYPE\NOT{\typeone}$ can be proved by induction
    on the structure of $\typeone$. Consider the second part of the statement, now.
    Suppose that $\typeone, \typetwo$ are positive, and call them $\pfone, \pftwo$ respectively. Then
    \begin{align*} 
      \LABEL{\pfone}{\varone}{\polone} \SUBTYPE \LABEL{\pftwo}{\varone}{\poltwo}&\Leftrightarrow\pfone \SUBTYPE \pftwo\wedge\polone \resleq \poltwo\\
        &\Leftrightarrow\NOT{\pftwo}\SUBTYPE\NOT{\pfone}\wedge\polone \resleq \poltwo\\
        &\Leftrightarrow\LABEL{\NOT{\pftwo}}{\varone}{\poltwo}\SUBTYPE\LABEL{\NOT{\pfone}}{\varone}{\polone}.
    \end{align*}
    The case when $\typeone,\typetwo$ are negative is similar.
  \end{proof}
}
  
Certain operators on resource polynomials can be lifted to formulas. 
As an example, we want to be able to \emph{sum} labelled formulas provided they have a 
proper form:
$$
\LABEL{\nfone}{\varone}{\polone}\contsum\LABEL{\subst{\nfone}{\varone}{\vartwo+\polone}}{\vartwo}{\poltwo}\DEF
\LABEL{\nfone}{\varone}{\polone+\poltwo}.
$$
We are assuming, of course, that $\varone,\vartwo$ are not free in either $\polone$ or $\poltwo$.
This construction can be generalized to \emph{bounded} sums:
suppose that a labelled formula is in the form
$$
  \LABEL{\nftwo}{\vartwo}{\polthree}=\LABEL{\nfone\{\varone/\vartwo+\sum_{\varfour<\varthree}
    \subst{\polthree}{\varthree}{\varfour}\}}{\vartwo}{\polthree},
$$
where $\vartwo$ and $\varfour$ are not free in $\nfone$ nor in $\polthree$ and $\varthree$ 
is not free in $\nfone$. Then the labelled formula $\sum_{\varthree<\poltwo}\LABEL{\nftwo}{\vartwo}{\polthree}$
is defined as $\LABEL{\nfone}{\varone}{\sum_{\varthree<\poltwo}\polthree}$.
See \cite[\S 3.3]{GSS92TCS} for more details about the above constructions.

\longv{
  An \emph{abstraction formula} of arity $n$ is simply a formula $\typeone$, where
  the $n$ resource variables $\varone_1,\ldots,\varone_\intone$ are meant to be bound.
  $\LET{\typeone}{\ATOMONE}{\typetwo}$ is the result of 
  substituting a second order abstraction term $\typetwo$ (of arity $\intone$) for all 
  free occurrences of the propositional variable $\ATOMONE$ (of the same arity) in $\typeone$. This can be defined formally
  by induction on the structure of $\typeone$, but the only interesting clauses are the following two:
  \begin{align*}
    \LET{\ATOMONE(\polone_1,\dots,\polone_\intone)}{\ATOMONE}{\typetwo}&=\subst{\typetwo}{\varone_1,\ldots,\varone}{\polone_1,\dots,\polone_\intone}\\
    \LET{\NOT{\ATOMONE}(\polone_1,\dots,\polone_\intone)}{\ATOMONE}{\typetwo}&=\subst{\NOT{\typetwo}}{\varone_1,\ldots,\varone}{\polone_1,\dots,\polone_\intone}
  \end{align*}
}

\subsection{Sequents and Rules}\label{sect:sequents}
The easiest way to present $\BLLP$ is to give a sequent calculus for it. Actually, proofs will be
structurally identical to proofs of Laurent's $\LLP$. Of course, only \emph{some} of $\LLP$ proofs
are legal $\BLLP$ proofs --- those giving rise to an exponential blow-up cannot be decorated according
to the principles of Bounded Linear Logic.

A \emph{sequent} is an expression in the form $\seq\contextone$, where $\contextone=\ltone_1,\ldots
\ltone_n$ is a multiset of labelled formulas such
that at most one among $\ltone_1,\ldots,\ltone_n$ is positive. If $\contextone$ only
contains (labellings of) negative formulas, we indicate it with metavariables like $\negcontextone,\negcontexttwo$.
The operator $\contsum$ can be extended to one on multi-sets of formulas component-wise, so we can write
expressions like $\negcontextone \contsum \negcontexttwo$: this amounts to summing the polynomials
occurring in $\negcontextone$ and those occurring in $\negcontexttwo$. Similarly for bounded sums.

The rules of the sequent calculus for $\BLLP$ are in Figure~\ref{fig:sequentcalc}.
\begin{figure*}
\fbox{\shortv{\scriptsize}
\begin{minipage}{.98\textwidth}
\centering
  \ \vspace{5pt} \\
  \AxiomC{$\nltone \SUBTYPE \nlttwo$}
  \AxiomC{$\NOT{\nlttwo} \SUPTYPE \pltone$}
  \RightLabel{$\AX$}
  \BinaryInfC{$\seq \nltone, \pltone$}
  \DisplayProof
  \qquad
  \AxiomC{$\seq \contextone, \nltone$}
  \AxiomC{$\seq \negcontextone, \NOT{\nltone}$}
  \RightLabel{$\CUT$}
  \BinaryInfC{$\seq \contextone, \negcontextone$}
  \DisplayProof
  \\ \vspace{5pt}
  \AxiomC{$\seq \contextone, \LABEL{\nfone}{\varone}{\polone}, \LABEL{\nftwo}{\varone}{\poltwo}$}
  \AxiomC{$\polone\resleq\polthree$\quad$\poltwo\resleq\polthree$}
  \RightLabel{$\PAR$}
  \BinaryInfC{$\seq \contextone, \LABEL{\nfone\PAR\nftwo}{\varone}{\polthree}$}
  \DisplayProof
  \qquad
  \AxiomC{$\seq \negcontextone, \LABEL{\pfone}{\varone}{\polone}$}
  \AxiomC{$\seq \negcontexttwo, \LABEL{\pftwo}{\varone}{\poltwo}$\quad$\polthree\resleq\polone$\quad$\polthree\resleq\poltwo$}
  \RightLabel{$\TENSOR$}
  \BinaryInfC{$\seq \negcontextone, \negcontexttwo, \LABEL{\pfone\TENSOR \pftwo}{\varone}{\polthree}$}
  \DisplayProof
  \\ \vspace{5pt}
  \AxiomC{$\seq \negcontextone, \LABEL{\nfone}{\varone}{\polone}$}
  \AxiomC{$\negcontexttwo \SUBTYPE \SUM_{y<q}\negcontextone$}
  \RightLabel{!}
  \BinaryInfC{$\seq \negcontexttwo, \LABEL{!_{\varone<\polone}\nfone}{\vartwo}{\poltwo}$}
  \DisplayProof
  \qquad
  \AxiomC{$\seq \negcontextone, \LABEL{\subst{\pfone}{\vartwo}{0}}{\varone}{\subst{\polone}{y}{0}}$}
  \AxiomC{$\nltone\SUBTYPE\LABEL{?_{\varone<\polone}\pfone}{\vartwo}{1}$}
  \RightLabel{$?d$}
  \BinaryInfC{$\seq \negcontextone,\nltone$}
  \DisplayProof
  \\ \vspace{5pt}
  \AxiomC{$\seq \contextone$}
  \RightLabel{$?w$}
  \UnaryInfC{$\seq \contextone,\nltone$}
  \DisplayProof
  \qquad
  \AxiomC{$\seq \contextone,\nltone,\nlttwo$}
  \AxiomC{$\nltthree\SUBTYPE\nltone\contsum\nlttwo$}
  \RightLabel{$?c$}
  \BinaryInfC{$\seq \contextone,\nltthree$}
  \DisplayProof
  \qquad
  \AxiomC{$\seq \contextone$}
  \RightLabel{$\bot$}
  \UnaryInfC{$\seq \contextone, \LABEL{\bot}{\varone}{\polone}$}
  \DisplayProof
  \qquad
  \AxiomC{$\vphantom{\seq \contextone}$}
  \RightLabel{$1$}
  \UnaryInfC{$\seq \LABEL{1}{\varone}{\polone}$}
  \DisplayProof
  \shortv{
  \\ \ \vspace{5pt}}
  \longv{
  \\ \vspace{5pt}
  \AxiomC{$\seq \contextone, \LABEL{\nfone}{\varone}{\polone}$}
  \AxiomC{$\ATOMONE\not\in\FV{\nfone}$}
  \RightLabel{$\forall$}
  \BinaryInfC{$\seq \contextone, \LABEL{\forall \ATOMONE \nfone}{\varone}{\polone}$}
  \DisplayProof
  \qquad
  \AxiomC{$\seq\negcontextone, \LABEL{\LET{\pfone}{\ATOMONE}{\pftwo}}{\varone}{\polone}$}
  \RightLabel{$\exists$}
  \UnaryInfC{$\seq\negcontextone, \LABEL{\exists \ATOMONE \pfone}{\varone}{\polone}$}
  \DisplayProof\\
  \vspace{5pt} \ }\\
\end{minipage}}
\caption{$\BLLP$, Sequent Calculus Rules}\label{fig:sequentcalc}
\end{figure*}
Please observe that:
\begin{varitemize}
\item
  The relation $\SUBTYPE$ is implicitly applied to both
  formulas and polynomials whenever possible in such a way that ``smaller''
  formulas can always be derived (see Section \ref{sect:malleability}).
\item
  As in $\LLP$, structural rules can act on any negative formula, and
  not only on exponential ones. Since all formulas occurring in sequents are 
  labelled, however, we can still keep track of how many times formulas are
  ``used'', in the spirit of $\BLL$.
\item
  A byproduct of taking sequents as multisets of \emph{labeled} formulas 
  is that multiplicative rules themselves need to deal with labels. As 
  an example, consider rule $\TENSOR$: the resource polynomial labelling
  the conclusion $\pfone\TENSOR\pftwo$ is anything smaller or equal to the
  polynomials labeling the two premises.
\end{varitemize}
\shortv{
The sequent calculus we have just introduced could be extended with 
second-order quantifiers and additive logical connectives. For the sake 
of simplicity, however, we have kept the language of formulas very simple here. 
The interested reader can check \cite{phdlaurent} for a treatment of these 
connectives in a polarized setting or \cite{EV} for \remove{some }more details.}
\longv{
The sequent calculus we have just introduced could be extended with 
additive logical connectives. For the sake of simplicity, however, we have 
kept the language of formulas very simple here.
}

As already mentioned, $\BLLP$ proofs can be seen as obtained by decorating proofs from Laurent's
$\LLP$~\cite{laurent2003polarized} with resource polynomials. Given a proof $\proofone$, 
$\BLLPtoLLP{\proofone}$ is the $\LLP$ proof obtained by erasing all
resource polynomials occurring in $\proofone$. If $\proofone$ and $\prooftwo$ are two 
$\BLLP$ proofs, we write $\proofone\streq\prooftwo$ iff 
$\BLLPtoLLP{\proofone}\equiv\BLLPtoLLP{\prooftwo}$, i.e., iff $\proofone$ and $\prooftwo$ 
are two decorations of the same $\LLP$ proof.

Even if structural rules can be applied to all negative formulas, only certain
proofs will be copied or erased along the cut-elimination process, as we will soon realize.
A \emph{box} is any proof which ends with an occurrence of the $!$ rule. In non-polarized
systems, only boxes can be copied or erased, while here the process can be applied to 
\emph{$\TENSOR$-trees}, which are proofs inductively defined as follows:
\begin{varitemize}
\item
  Either the last rule in the proof is $\AX$ or $!$ or $1$;
\item
  or the proof is obtained from two $\TENSOR$-trees by applying the rule $\TENSOR$.
\end{varitemize}
A $\TENSOR$-tree is said to be \emph{closed} if it does not contain
any axiom nor any box having auxiliary doors (i.e., no formula in the context of the $!$ rules).

\subsection{Malleability}\label{sect:malleability}
The main reason for the strong (intensional) expressive power 
of $\BLL$~\cite{DalLago2009} is its \emph{malleability}: the conclusion of any proof $\proofone$ 
can be modified in many different ways without altering its structure.
Malleability is not only crucial to make the system expressive, but also to
prove that $\BLLP$ enjoys cut-elimination. In this section, we give four 
different ways of modifying a sequent in such a way as to preserve its 
derivability. Two of them are anyway expected and also hold in $\BLL$, while
the other two only make sense in a polarized setting.

First of all, taking smaller formulas (i.e., more general --- cf. \cite[\S 3.3, p. 21]{GSS92TCS})
preserves derivability:
\begin{lemma}[Subtyping]\label{lem:subtyping}
  If $\proofone\pof\seq \contextone, \ltone$
  and $\ltone\SUPTYPE\lttwo$, then there is
  $\prooftwo\pof\seq\contextone,\lttwo$ such that $\proofone\streq\prooftwo$.
\end{lemma}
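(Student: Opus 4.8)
The plan is to argue by induction on the structure of $\proofone$, i.e.\ on its last rule. The guiding observation is that $\ltone\SUPTYPE\lttwo$ (that is, $\lttwo\SUBTYPE\ltone$) relates two labelled formulas with \emph{exactly the same skeleton}, differing only in the resource polynomials decorating them. Consequently, in every case I will build $\prooftwo$ by applying the \emph{same} rule as $\proofone$, merely readjusting the polynomial annotations (and, recursively, the annotations inside the premises). Since erasing all polynomials then yields the very same $\LLP$ derivation, $\BLLPtoLLP{\proofone}\equiv\BLLPtoLLP{\prooftwo}$, so the requirement $\proofone\streq\prooftwo$ comes for free and never has to be checked separately. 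Within each case I distinguish whether the distinguished occurrence $\ltone$ is the \emph{principal} formula of the last rule or a \emph{side} formula inherited from a premise; and because the definition of $\SUBTYPE$ on positive formulas is dual to the one on negative formulas, Lemma~\ref{Lem:subneg} lets me treat only one polarity of each De~Morgan pair explicitly.

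Many cases are routine and rest solely on transitivity of $\SUBTYPE$ and $\resleq$. The side-formula subcases are uniform: I apply the induction hypothesis to the premise carrying the occurrence, then reapply the rule; its side conditions (all inequalities $\polone\resleq\poltwo$ on polynomials or $\SUBTYPE$ on formulas) survive by transitivity. The base rules $\AX$, $1$, $\bot$, $?w$ are immediate, e.g.\ for $\AX$ shrinking $\pltone$ to $\pltthree\SUBTYPE\pltone$ keeps $\NOT{\nlttwo}\SUPTYPE\pltthree$ since $\pltthree\SUBTYPE\pltone\SUBTYPE\NOT{\nlttwo}$. Crucially, several rules already carry a $\SUBTYPE$ side condition into which the shrinking can simply be absorbed: for $\CUT$ one uses the induction hypothesis on the relevant premise; for $?c$, $?d$, and the \emph{context} part of $!$, shrinking the principal/side formula is handled purely by transitivity (e.g.\ for $?c$, $\nltfour\SUBTYPE\nltthree\SUBTYPE\nltone\contsum\nlttwo$; for the $!$ context, any $\negcontexttwo'\SUBTYPE\negcontexttwo\SUBTYPE\SUM_{\vartwo<\poltwo}\negcontextone$ still satisfies the premise's constraint, so no recursion is even needed).

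The genuinely structural cases are the multiplicatives $\TENSOR$ and $\PAR$ (and, in the full system, the quantifier rules, which are analogous). Here the definition of $\SUBTYPE$ on the composite principal formula unfolds into $\SUBTYPE$ on its immediate subformulas together with a comparison of the outermost label. For $\PAR$, a smaller $\LABEL{\nfone\PAR\nftwo}{\varone}{\polthree}$ has the shape $\LABEL{\nfone'\PAR\nftwo'}{\varone}{\polthree'}$ with $\nfone'\SUBTYPE\nfone$, $\nftwo'\SUBTYPE\nftwo$ and $\polthree\resleq\polthree'$; I apply the induction hypothesis to each slot of the single premise (keeping its inner labels fixed) and then reinstantiate the $\PAR$ rule with the relaxed bound $\polthree'$, the side conditions $\polone\resleq\polthree'$, $\poltwo\resleq\polthree'$ following by transitivity. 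The $\TENSOR$ case is dual: a positive principal formula shrinks its body \emph{and} its label downward, and the argument mirrors the $\PAR$ one via Lemma~\ref{Lem:subneg}.

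The main obstacle is the principal case of the box rule $!$. Making $\LABEL{!_{\varone<\polone}\nfone}{\vartwo}{\poltwo}$ smaller requires three simultaneous moves dictated by the (positive) ordering: shrinking the body to $\nfone'\SUBTYPE\nfone$, \emph{enlarging} the inner bound to $\polone\resleq\polone'$, and \emph{shrinking} the outer label to $\poltwo'\resleq\poltwo$. The first two are delivered by one application of the induction hypothesis to the premise, since $\LABEL{\nfone'}{\varone}{\polone'}\SUBTYPE\LABEL{\nfone}{\varone}{\polone}$ holds exactly when $\nfone'\SUBTYPE\nfone$ and $\polone'\resgeq\polone$. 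The delicate point is re-establishing the side condition $\negcontexttwo\SUBTYPE\SUM_{\vartwo<\poltwo'}\negcontextone$ after the outer bound has dropped from $\poltwo$ to $\poltwo'$. This reduces to showing that the bounded sum of labelled formulas is monotone in its bound, namely $\SUM_{\vartwo<\poltwo}\negcontextone\SUBTYPE\SUM_{\vartwo<\poltwo'}\negcontextone$ whenever $\poltwo'\resleq\poltwo$ (the larger range accumulates a pointwise larger polynomial, which for the negative formulas of $\negcontextone$ means a $\SUBTYPE$-smaller multiset); chaining this with $\negcontexttwo\SUBTYPE\SUM_{\vartwo<\poltwo}\negcontextone$ recovers the constraint. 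Verifying this monotonicity at the level of multisets of labelled formulas — leaning on the closure and monotonicity properties of resource polynomials under bounded sums recalled from~\cite{GSS92TCS} — is the technically heaviest step, and is the one I would prove as a separate auxiliary fact before running the induction.
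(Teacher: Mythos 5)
Your proposal is correct and follows essentially the same route as the paper: a straightforward induction on the last rule of $\proofone$, discharging most cases by transitivity of $\SUBTYPE$ and $\resleq$, and resolving the principal $!$ case exactly as the paper does, namely by unfolding the ordering on $\LABEL{!_{\varone<\polone}\nfone}{\vartwo}{\poltwo}$ into an induction-hypothesis call on the premise plus the anti-monotonicity of the bounded sum in its bound ($\SUM_{\vartwo<\poltwo}\negcontextone\SUBTYPE\SUM_{\vartwo<\poltwo'}\negcontextone$ for $\poltwo'\resleq\poltwo$) chained with the rule's side condition. The observation that $\proofone\streq\prooftwo$ is automatic because each case reapplies the same rule is also exactly the paper's (implicit) justification.
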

\longv{
  \begin{proof}
    By a simple induction on $\proofone$. The crucial cases:
    \begin{varitemize}
    \item
      If the last rule used is an axiom:
      \begin{prooftree}
        \AxiomC{$\nltone \SUBTYPE \nlttwo$}
        \AxiomC{$\NOT{\nlttwo} \SUPTYPE \pltone$}
        \RightLabel{$\AX$}
        \BinaryInfC{$\seq \nltone, \pltone$}
      \end{prooftree}
      If $\lttwo\SUBTYPE\nltone$, then we
      know that $\lttwo\SUBTYPE\nltone\SUBTYPE\nlttwo$, from
      which it follows that $\lttwo\SUBTYPE\nlttwo$.
      We can thus take $\prooftwo$ as
      \begin{prooftree}
        \AxiomC{$\lttwo \SUBTYPE \nlttwo$}
        \AxiomC{$\NOT{\nlttwo} \SUPTYPE \pltone$}
        \RightLabel{$\AX$}
        \BinaryInfC{$\seq \lttwo, \pltone$}
      \end{prooftree}
      If $\lttwo\SUBTYPE\pltone$, then
      we know that $\lttwo\SUBTYPE\pltone\SUBTYPE\NOT{\nlttwo}$, from 
      which it follows that $\lttwo\SUBTYPE\NOT{\nlttwo}$.
      We can thus take $\prooftwo$ as
      \begin{prooftree}
        \AxiomC{$\lttwo \SUBTYPE \nlttwo$}
        \AxiomC{$\NOT{\nlttwo} \SUPTYPE \lttwo$}
        \RightLabel{$\AX$}
        \BinaryInfC{$\seq \lttwo, \nlttwo$}
      \end{prooftree}       
    \item 
      Suppose the last rule used is $!$:
      \begin{prooftree}
        \AxiomC{$\proofthree\pof\seq \negcontextone, \LABEL{\nfone}{\vartwo}{\polthree}$}
        \AxiomC{$\negcontexttwo\SUBTYPE\SUM_{\varone<\poltwo} \negcontextone$}
        \RightLabel{$!$}
        \BinaryInfC{$\seq \negcontexttwo, \LABEL{!_{\vartwo<\polthree}\nfone}{\varone}{\poltwo}$}
      \end{prooftree}
      If $\lttwo\SUBTYPE\LABEL{!_{\vartwo<\polthree}\nfone}{\varone}{\poltwo}$, then necessarily  
      $\lttwo=\LABEL{!_{\vartwo<\polfour}\nftwo}{\varone}{\polone}$, where $\nfone \SUPTYPE \nftwo$, 
      $\poltwo\resgeq \polone$ and $\polfour \resgeq \polthree$. Hence $\LABEL{\nfone}{\vartwo}{\polthree}
      \SUPTYPE\LABEL{\nftwo}{\vartwo}{\polfour}$ and, by induction hypothesis, there is $\prooffour$ such that 
      $\prooffour\streq\proofthree$ and $\prooffour\pof\seq\negcontextone, \LABEL{\nftwo}{\vartwo}{\polfour}$.
      As a consequence $\prooftwo$ can be simply defined as, because 
      $\SUM_{\varone<\poltwo}\negcontextone\SUBTYPE\SUM_{\varone<\polone}\negcontextone$:
      \begin{prooftree}
        \AxiomC{$\prooffour\pof\seq\negcontextone, \LABEL{\nftwo}{\vartwo}{\polfour}$}
        \AxiomC{$\negcontexttwo\SUBTYPE\SUM_{\varone<\polone} \negcontextone$}
        \RightLabel{$!$}
        \BinaryInfC{$\seq \negcontexttwo, \LABEL{!_{\vartwo<\polfour}\nftwo}{\varone}{\poltwo}$}
      \end{prooftree}
      If $\lttwo\SUBTYPE\nltone\in\negcontexttwo$, then we can just derive
      the thesis from transitivity of $\SUBTYPE$.
    \item 
      If the last rule used is $?d$:
      \begin{prooftree}
        \AxiomC{$\proofthree\pof\seq\negcontextone, \LABEL{\subst{\pfone}{\varone}{0}}{\vartwo}{\subst{\polthree}{\varone}{0}}$}
        \AxiomC{$\nltone\SUBTYPE\LABEL{?_{\vartwo<\polthree}\pfone}{\varone}{1}$}
        \RightLabel{$?d$}
        \BinaryInfC{$\seq \negcontextone,\nltone$}
      \end{prooftree}
      Then the induction hypothesis immediately yields the thesis.
    \end{varitemize}
    This concludes the proof.
  \end{proof}
}
Substituting resource variables for polynomials itself preserves typability:
\begin{lemma}[Substitution]\label{lem:subst}
  Let $\proofone\pof\seq\contextone$. Then there is a proof $\subst{\proofone}{\varone}{\polone}$
  of $\seq\subst{\contextone}{\varone}{\polone}$. Moreover, $\subst{\proofone}{\varone}{\polone}\streq \proofone$.
\end{lemma}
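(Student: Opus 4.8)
The plan is to proceed by induction on the structure of $\proofone$, defining $\subst{\proofone}{\varone}{\polone}$ to be the proof obtained by replacing every resource polynomial $\poltwo$ occurring in $\proofone$ with $\subst{\poltwo}{\varone}{\polone}$, while leaving the underlying $\LLP$ skeleton untouched. Because resource polynomials are closed under composition (Section~\ref{sect:polform}), each $\subst{\poltwo}{\varone}{\polone}$ is again a resource polynomial, so every label in the resulting derivation remains well-formed. The statement $\subst{\proofone}{\varone}{\polone}\streq\proofone$ is then immediate: erasing the polynomials from either proof yields the very same object $\BLLPtoLLP{\proofone}$, since the substitution never alters a rule name nor the shape of a formula.

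The content of the induction lies entirely in checking that the side conditions of each rule are preserved under the substitution. For this I would first record two facts. The first is monotonicity: if $\poltwo\resleq\polthree$ then $\subst{\poltwo}{\varone}{\polone}\resleq\subst{\polthree}{\varone}{\polone}$, which is the instance $\poltwo\circ\polone\resleq\polthree\circ\polone$ of the composition property already stated, and which lifts componentwise to formulas and sequents, so that $\contextone\SUBTYPE\contexttwo$ entails $\subst{\contextone}{\varone}{\polone}\SUBTYPE\subst{\contexttwo}{\varone}{\polone}$. The second is that $\subst{\cdot}{\varone}{\polone}$ commutes with the two derived operations on labelled formulas, namely the sum $\contsum$ and the bounded sum, in the sense that substituting $\polone$ for $\varone$ in $\SUM_{\vartwo<\poltwo}\negcontextone$ produces $\SUM_{\vartwo<\subst{\poltwo}{\varone}{\polone}}\subst{\negcontextone}{\varone}{\polone}$, provided the bound variable $\vartwo$ differs from $\varone$ and does not occur free in $\polone$. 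Granting these, the multiplicative and structural cases are routine: in $\TENSOR$, for example, the conditions $\polthree\resleq\polone$ and $\polthree\resleq\poltwo$ survive by monotonicity, and in $?c$ the condition $\nltthree\SUBTYPE\nltone\contsum\nlttwo$ survives because substitution commutes with $\contsum$.

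The only delicate point --- and the expected main obstacle --- is the interaction of $\subst{\cdot}{\varone}{\polone}$ with the resource variables bound by the exponential rules. In the $!$ rule the variable labelled by $\polone$ is bound inside $!_{\varone<\polone}\nfone$ and the premise is measured against a bounded sum $\SUM_{\vartwo<\poltwo}\negcontextone$, while in the $?d$ rule the premise already carries the nested substitutions $\subst{\pfone}{\vartwo}{0}$ and $\subst{\polone}{\vartwo}{0}$. I would deal with both by first $\alpha$-renaming the bound resource variable so that it is distinct from $\varone$ and absent from $\polone$ --- legitimate since everything is taken modulo $\alpha$-equivalence --- and then invoking the commutation facts above, together with the standard permutation $\subst{\subst{\cdot}{\vartwo}{0}}{\varone}{\polone}=\subst{\subst{\cdot}{\varone}{\polone}}{\vartwo}{0}$, valid once $\vartwo\neq\varone$ and $\vartwo\notin\FV{\polone}$. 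Applying the induction hypothesis to the substituted premise then yields exactly the substituted conclusion, with the associated side condition transformed into its substituted form by monotonicity. The quantifier cases are handled in the same fashion, the second-order variable $\ATOMONE$ being orthogonal to the resource substitution.
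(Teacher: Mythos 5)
Your proposal is correct and follows essentially the same route as the paper, which simply states that the lemma is proved ``by an easy induction on the structure of $\proofone$''; you have filled in the details that the paper leaves implicit (monotonicity of substitution with respect to $\resleq$ via closure under composition, commutation with $\contsum$ and bounded sums, and $\alpha$-renaming of bound resource variables in the exponential rules), all of which are the right ingredients.
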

\longv{
\begin{proof}
By an easy induction on the structure of $\proofone$.
\end{proof}}
\shortv{
Both Lemma~\ref{lem:subtyping} and Lemma~\ref{lem:subst} can be proved by easy inductions
on the structure of $\proofone$.
}

\longv{
  \begin{lemma}\label{Lem:submon}
    $\LABEL{\typeone}{\varone}{\polone} \SUPTYPE \LABEL{\typetwo}{\varone}{\polone} \Rightarrow \LABEL{\subst{\typeone}{\varone}{\vartwo+\poltwo}}{\vartwo}{\polone} \SUPTYPE \LABEL{\subst{\typetwo}{\varone}{\vartwo+\poltwo}}{\vartwo}{\polone}.$
  \end{lemma}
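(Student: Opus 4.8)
The plan is to reduce the statement about labelled formulas to a purely structural monotonicity property of the unlabelled order $\SUBTYPE$ under the substitution $\subst{\cdot}{\varone}{\vartwo+\poltwo}$, and then to establish the latter by induction on formulas. The two outer labels are both $\polone$, and by the standing convention $\varone\notin\FV{\polone}$, so the label polynomial is untouched by the substitution; the change of binder from $\varone$ to $\vartwo$ is exactly what the substitution records. Hence, unfolding the definition of $\SUBTYPE$ on labelled formulas, both the hypothesis $\LABEL{\typeone}{\varone}{\polone}\SUPTYPE\LABEL{\typetwo}{\varone}{\polone}$ and the desired conclusion collapse to their formula components (the label comparison $\polone$ versus $\polone$ being trivially satisfied in either polarity, since comparable formulas share a skeleton and hence a polarity). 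It therefore suffices to prove
\[
  \typetwo\SUBTYPE\typeone\;\Longrightarrow\;\subst{\typetwo}{\varone}{\vartwo+\poltwo}\SUBTYPE\subst{\typeone}{\varone}{\vartwo+\poltwo},
\]
i.e.\ that $\SUBTYPE$ on formulas is preserved by $\subst{\cdot}{\varone}{\vartwo+\poltwo}$.

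The crux is a monotonicity property at the level of resource polynomials: if $\edone\resleq\edtwo$ then $\subst{\edone}{\varone}{\vartwo+\poltwo}\resleq\subst{\edtwo}{\varone}{\vartwo+\poltwo}$. This follows directly from the definitions recalled in Section~\ref{sect:polform}. Indeed $\edone\resleq\edtwo$ means $\edtwo-\edone$ is a resource polynomial; since substitution is additive, $\subst{\edtwo}{\varone}{\vartwo+\poltwo}-\subst{\edone}{\varone}{\vartwo+\poltwo}=\subst{(\edtwo-\edone)}{\varone}{\vartwo+\poltwo}$, and as $\vartwo+\poltwo$ is itself a resource polynomial ($\vartwo$ being the monomial $\binom{\vartwo}{1}$) and resource polynomials are closed under composition, this difference is again a resource polynomial. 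This is just the instance of the composition-monotonicity already stated in Section~\ref{sect:polform} in which one argument is held fixed, so $\resleq$ is preserved in whichever direction the hypothesis supplies it.

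With this in hand I would prove the displayed implication by induction on the common skeleton of $\typeone$ and $\typetwo$. In the base cases $\typeone=\ATOMONE(\vec\edone)$, $\typetwo=\ATOMONE(\vec\edtwo)$ and the dual co-atom case, $\typetwo\SUBTYPE\typeone$ is a componentwise inequality between the arguments (covariant for atoms, contravariant for co-atoms), and the polynomial monotonicity above—applied in the direction furnished by the hypothesis—yields the corresponding inequality after substitution; the units $1,\bot$ are immediate. The steps for $\TENSOR,\PAR,\forall,\exists$ follow at once from the induction hypothesis on the immediate subformulas. For $!_{\varthree<\edone}\nfone$ versus $!_{\varthree<\edtwo}\nftwo$ (and dually for $?_{\varthree<\edone}\pfone$) one combines the induction hypothesis on the body with the polynomial monotonicity applied to the exponential bounds $\edone,\edtwo$.

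The only point needing care—and the closest thing to an obstacle—is the bookkeeping around the exponential binders: before descending into $!_{\varthree<\edone}\nfone$ or $?_{\varthree<\edone}\pfone$ one must $\alpha$-rename the bound resource variable $\varthree$ so that $\varthree\neq\vartwo$ and $\varthree\notin\FV{\poltwo}$, ensuring that $\subst{\cdot}{\varone}{\vartwo+\poltwo}$ commutes with the binder without capturing $\vartwo$ or a variable of $\poltwo$; the propositional binders of $\forall/\exists$ raise no such issue, being of a different sort from $\varone$. Once the renaming convention is fixed, each inductive case is a one-line appeal to the induction hypothesis together with the polynomial monotonicity fact, so no genuinely hard step remains.
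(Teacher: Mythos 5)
Your proposal is correct and follows essentially the same route as the paper: the paper's own proof is precisely the chain $\LABEL{\typeone}{\varone}{\polone}\SUPTYPE\LABEL{\typetwo}{\varone}{\polone}\Rightarrow\typeone\SUPTYPE\typetwo\Rightarrow\subst{\typeone}{\varone}{\vartwo+\poltwo}\SUPTYPE\subst{\typetwo}{\varone}{\vartwo+\poltwo}\Rightarrow$ the labelled conclusion, with the middle implication left unjustified. Your induction on the common skeleton, together with the observation that $\resleq$ is preserved under composition with $\vartwo+\poltwo$, simply supplies the details the paper omits.
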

  \begin{proof}
    $\LABEL{\typeone}{\varone}{\polone}\SUPTYPE \LABEL{\typetwo}{\varone}{\polone} \Rightarrow 
        \typeone \SUPTYPE \typetwo     \Rightarrow 
    \subst{\typeone}{\varone}{\vartwo+\poltwo}
    \SUPTYPE \subst{\typetwo}{\varone}{\vartwo+\poltwo} \Rightarrow 
\LABEL{\subst{\typeone}{\varone}{\vartwo+\poltwo}}{\vartwo}{\polone}
    \SUPTYPE \LABEL{\subst{\typetwo}{\varone}{\vartwo+\poltwo}}{\vartwo}{\polone}$
   \end{proof}
}  
As we have already mentioned, one of the key differences between ordinary Linear Logic and its polarized version is that in the latter,
arbitrary proofs can potentially be duplicated (and erased) along the cut-elimination process, while in the
former only special ones, namely boxes, can. This is, again, a consequence
of the fundamentally different nature of structural rules in the two systems. Since $\BLLP$ is a refinement
of $\LLP$, this means that the same phenomenon is expected. But beware: in a bounded setting, contraction is
not symmetric, i.e., the two copies of the proof $\proofone$ we are duplicating are not identical to $\proofone$.

What we need to prove, then, is that proofs can indeed be \emph{split} in $\BLLP$:
\longv{
  But preliminary to that is the following technical lemma:
  \begin{lemma}[Shifting Sums]\label{Lem:splitpro2}
    If $\sumlf{\varthree}{\poltwo}{\LABEL{\nftwo}{\vartwo}{\polthree}}=\LABEL{\nfone}{\vartwo}{\sum_{\varthree<\poltwo}\polthree}$,
    then the formula $\nltone=\subst{\LABEL{\nftwo}{\vartwo}{\polthree}}{\varthree}{\varthree+\poltwo}$ is such that
    $$
    \sumlf{\varthree}{\polone}{\nltone}=
    \LABEL{\subst{\nfone}{\varone}{\varone+\sum_{\varthree<\poltwo}\polthree}}{\vartwo}{\sum_{\varthree<\polone}
      \subst{\polthree}{\varthree}{\varthree+\poltwo}}
    $$ 
  \end{lemma}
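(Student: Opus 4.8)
The plan is to prove the identity by reducing the left-hand bounded sum $\sumlf{\varthree}{\polone}{\nltone}$ to the canonical \emph{proper form} demanded by the definition of $\contsum$ on labelled formulas, and then reading off its value. First I would unfold the hypothesis: by the definition of the bounded sum of labelled formulas, the assumption $\sumlf{\varthree}{\poltwo}{\LABEL{\nftwo}{\vartwo}{\polthree}}=\LABEL{\nfone}{\vartwo}{\sum_{\varthree<\poltwo}\polthree}$ tells us precisely that $\LABEL{\nftwo}{\vartwo}{\polthree}$ is already in proper form, i.e. $\nftwo=\subst{\nfone}{\varone}{\vartwo+\sumlf{\varfour}{\varthree}{\subst{\polthree}{\varthree}{\varfour}}}$, together with the associated freshness conditions $\vartwo,\varfour\notin\FV{\nfone}\cup\FV{\polthree}$, $\varthree\notin\FV{\nfone}$, and (as is standard for these forms, cf. \cite[\S 3.3]{GSS92TCS}) $\varone\notin\FV{\polthree}$ and $\vartwo,\varfour,\varone\notin\FV{\poltwo}$.

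Next I would compute $\nltone=\subst{\LABEL{\nftwo}{\vartwo}{\polthree}}{\varthree}{\varthree+\poltwo}$ explicitly. Since $\varthree\notin\FV{\nfone}$ and $\vartwo\neq\varthree$, the substitution $\subst{(\cdot)}{\varthree}{\varthree+\poltwo}$ reaches only the polynomial plugged in for $\varone$ and the label $\polthree$, so $\nltone=\LABEL{\subst{\nftwo}{\varthree}{\varthree+\poltwo}}{\vartwo}{\subst{\polthree}{\varthree}{\varthree+\poltwo}}$ with $\subst{\nftwo}{\varthree}{\varthree+\poltwo}=\subst{\nfone}{\varone}{\vartwo+\subst{(\sumlf{\varfour}{\varthree}{\subst{\polthree}{\varthree}{\varfour}})}{\varthree}{\varthree+\poltwo}}$. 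The heart of the argument is the reindexing of this inner bounded sum of resource polynomials:
\[
\subst{\Bigl(\sumlf{\varfour}{\varthree}{\subst{\polthree}{\varthree}{\varfour}}\Bigr)}{\varthree}{\varthree+\poltwo}
=\sum_{\varfour<\varthree+\poltwo}\subst{\polthree}{\varthree}{\varfour}
=\sum_{\varthree<\poltwo}\polthree\;+\;\sumlf{\varfour}{\varthree}{\subst{(\subst{\polthree}{\varthree}{\varthree+\poltwo})}{\varthree}{\varfour}},
\]
where I split the sum at $\poltwo$ and shift the upper block down by $\poltwo$ (using $\varthree\notin\FV{\poltwo}$ and the substitution identity $\subst{\polthree}{\varthree}{\varfour+\poltwo}=\subst{(\subst{\polthree}{\varthree}{\varthree+\poltwo})}{\varthree}{\varfour}$), while the first block is just a renaming of $\sum_{\varthree<\poltwo}\polthree$.

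Writing $\polthree'\DEF\subst{\polthree}{\varthree}{\varthree+\poltwo}$, the previous line gives $\subst{\nftwo}{\varthree}{\varthree+\poltwo}=\subst{\nfone}{\varone}{(\sum_{\varthree<\poltwo}\polthree)+\vartwo+\sumlf{\varfour}{\varthree}{\subst{\polthree'}{\varthree}{\varfour}}}$. By composition of substitutions, and using $\varone\notin\FV{\sum_{\varthree<\poltwo}\polthree}$ (which follows from $\varone\notin\FV{\polthree}$), this equals $\subst{\bigl(\subst{\nfone}{\varone}{\varone+\sum_{\varthree<\poltwo}\polthree}\bigr)}{\varone}{\vartwo+\sumlf{\varfour}{\varthree}{\subst{\polthree'}{\varthree}{\varfour}}}$. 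Hence, setting $\nfone'\DEF\subst{\nfone}{\varone}{\varone+\sum_{\varthree<\poltwo}\polthree}$, the labelled formula $\nltone=\LABEL{\subst{\nftwo}{\varthree}{\varthree+\poltwo}}{\vartwo}{\polthree'}$ is exactly in proper form for the bounded sum $\sumlf{\varthree}{\polone}{(\cdot)}$ with body $\nfone'$ and label polynomial $\polthree'$; the side conditions transfer under the ambient freshness, because $\vartwo,\varfour,\varthree\notin\FV{\nfone'}$ (the bound sum $\sum_{\varthree<\poltwo}\polthree$ binds $\varthree$, and $\vartwo,\varfour\notin\FV{\polthree}$) and $\vartwo,\varfour\notin\FV{\polthree'}$. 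Applying the definition of $\contsum$ one last time yields $\sumlf{\varthree}{\polone}{\nltone}=\LABEL{\nfone'}{\varone}{\sum_{\varthree<\polone}\polthree'}$, which is the claimed right-hand side $\LABEL{\subst{\nfone}{\varone}{\varone+\sum_{\varthree<\poltwo}\polthree}}{\vartwo}{\sum_{\varthree<\polone}\subst{\polthree}{\varthree}{\varthree+\poltwo}}$ once the bound label variable is renamed to $\vartwo$ — exactly the renaming already silently performed in the hypothesis, where the definition's output label $\varone$ is written $\vartwo$.

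I expect the main obstacle to be the reindexing step: correctly splitting $\sum_{\varfour<\varthree+\poltwo}\subst{\polthree}{\varthree}{\varfour}$ at $\poltwo$ and verifying that the shifted upper block coincides with $\sumlf{\varfour}{\varthree}{\subst{\polthree'}{\varthree}{\varfour}}$, all while keeping the two nested substitutions $\subst{(\cdot)}{\varthree}{\varthree+\poltwo}$ and $\subst{(\cdot)}{\varthree}{\varfour}$ from interfering. Everything else is routine bookkeeping with substitution composition and the freshness conditions, relying only on the closure of resource polynomials under bounded sums and composition~\cite{GSS92TCS}.
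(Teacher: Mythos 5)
Your proposal is correct and follows essentially the same route as the paper's own proof: unfold the hypothesis to expose the proper form of $\nftwo$, push the substitution $\subst{(\cdot)}{\varthree}{\varthree+\poltwo}$ inside, split the bounded sum $\sum_{\varfour<\varthree+\poltwo}\subst{\polthree}{\varthree}{\varfour}$ into the block below $\poltwo$ plus the shifted upper block, and recompose via substitution composition to recognize $\nltone$ as being in proper form with body $\subst{\nfone}{\varone}{\varone+\sum_{\varthree<\poltwo}\polthree}$. The reindexing identity you single out as the crux is exactly the step the paper's computation hinges on.
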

  \begin{proof}
    The fact $\sumlf{\varthree}{\poltwo}{\LABEL{\nftwo}{\vartwo}{\polthree}}$ exists implies that there exist $\nfone,\varone,\varfour$ such
    that
    $$
    \nftwo=\subst{\nfone}{\varone}{\vartwo+\sum_{\varfour<\varthree}\subst{\polthree}{\varthree}{\varfour}}
    $$
    and $\vartwo,\varthree\notin\FV{\nfone}$ and $\vartwo\notin\FV{\polthree}$. As a consequence:
    \begin{align*}
      \subst{\LABEL{\nftwo}{\vartwo}{\polthree}}{\varthree}{\varthree+\poltwo}
      &=\LABEL{\subst{\subst{\nfone}{\varone}{\vartwo+\sum_{\varfour<\varthree}
            \subst{\polthree}{\varthree}{\varfour}}}{\varthree}{\varthree+\poltwo}}{\vartwo}{\subst{\polthree}{\varthree}{\varthree+\poltwo}}\\
      &=\LABEL{\subst{\nfone}{\varone}{\vartwo+\sum_{\varfour<\varthree+\poltwo}
            \subst{\polthree}{\varthree}{\varfour}}}{\vartwo}{\subst{\polthree}{\varthree}{\varthree+\poltwo}}\\
      &=\LABEL{\subst{\nfone}{\varone}{\vartwo+\sum_{\varfour<\poltwo}\subst{\polthree}{\varthree}{\varfour}+\sum_{\varfour<\varthree}
            \subst{\polthree}{\varthree}{\varfour+\poltwo}}}{\vartwo}{\subst{\polthree}{\varthree}{\varthree+\poltwo}}\\
      &=\LABEL{\subst{\subst{\nfone}{\varone}{\varone+\sum_{\varfour<\poltwo}\subst{\polthree}{\varthree}{\varfour}}}{\varone}{\vartwo+\sum_{\varfour<\varthree}
            \subst{\polthree}{\varthree}{\varfour+\poltwo}}}{\vartwo}{\subst{\polthree}{\varthree}{\varthree+\poltwo}}\\
      &=\LABEL{\subst{\subst{\nfone}{\varone}{\varone+\sum_{\varfour<\poltwo}\subst{\polthree}{\varthree}{\varfour}}}{\varone}{\vartwo+\sum_{\varfour<\varthree}
            \subst{\subst{\polthree}{\varthree}{\varthree+\poltwo}}{\varthree}{\varfour}}}{\vartwo}{\subst{\polthree}{\varthree}{\varthree+\poltwo}}
    \end{align*}
    Call the last formula $\nltone$. As a consequence, $\sumlf{\varthree}{\polone}{\nltone}$ exists and is
    equal to
    $$
    \LABEL{\subst{\nfone}{\varone}{\varone+\sum_{\varfour<\poltwo}\subst{\polthree}{\varthree}{\varfour}}}{\vartwo}{\sum_{\varthree<\polone}\subst{\polthree}{\varthree}{\varthree+\poltwo}}.
    $$
    This concludes the proof.
  \end{proof}
}
\begin{lemma}[Splitting]\label{lem:splitting}
  If $\proofone\pof\seq \negcontextone, \LABEL{\pfone}{\varone}{\polone}$ is a $\otimes$-tree
  and $\polone \resgeq \polthree+\polfour$ then there exist $\negcontexttwo,\negcontextthree$ such that
  $\prooftwo\pof\seq \negcontexttwo,\LABEL{\pfone}{\varone}{\polthree}$,
  $\proofthree\pof\seq \negcontextthree,\LABEL{\subst{\pfone}{\varone}{\vartwo+\polthree}}{\vartwo}{\polfour}$. Moreover,
  $\negcontextone\SUBTYPE\negcontexttwo\contsum\negcontextthree$ and $\prooftwo \streq \proofone\streq \proofthree$.
\end{lemma}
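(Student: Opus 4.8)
The plan is to induct on the structure of the $\otimes$-tree $\proofone$, following its inductive definition: either the last rule is $\AX$, $!$ or $1$ (the base cases), or $\proofone$ is obtained by applying $\TENSOR$ to two $\otimes$-trees (the inductive step). In each case I build $\prooftwo$ and $\proofthree$ with exactly the same shape as $\proofone$, so that $\prooftwo\streq\proofone\streq\proofthree$ is immediate, and I verify $\negcontextone\SUBTYPE\negcontexttwo\contsum\negcontextthree$ using the hypothesis $\polone\resgeq\polthree+\polfour$ together with the defining identity $\LABEL{\nfone}{\varone}{\polthree}\contsum\LABEL{\subst{\nfone}{\varone}{\vartwo+\polthree}}{\vartwo}{\polfour}=\LABEL{\nfone}{\varone}{\polthree+\polfour}$ for sums of labelled formulas.

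In the base case $1$ we have $\negcontextone$ empty and $\pfone=1$, so I take $\prooftwo,\proofthree$ to be two instances of the $1$ rule and the context inequality is vacuous. In the base case $\AX$, $\proofone$ derives $\seq\nltone,\LABEL{\pfone}{\varone}{\polone}$ from $\nltone\SUBTYPE\nlttwo$ and $\NOT{\nlttwo}\SUPTYPE\LABEL{\pfone}{\varone}{\polone}$; I let $\prooftwo,\proofthree$ be the axioms $\seq\LABEL{\NOT{\pfone}}{\varone}{\polthree},\LABEL{\pfone}{\varone}{\polthree}$ and $\seq\LABEL{\subst{\NOT{\pfone}}{\varone}{\vartwo+\polthree}}{\vartwo}{\polfour},\LABEL{\subst{\pfone}{\varone}{\vartwo+\polthree}}{\vartwo}{\polfour}$ (both legal by reflexivity of $\SUBTYPE$). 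Then $\negcontexttwo\contsum\negcontextthree=\LABEL{\NOT{\pfone}}{\varone}{\polthree+\polfour}$ by the sum identity, while Lemma~\ref{Lem:subneg} turns $\NOT{\nlttwo}\SUPTYPE\LABEL{\pfone}{\varone}{\polone}$ into $\nlttwo\SUBTYPE\LABEL{\NOT{\pfone}}{\varone}{\polone}$; since $\polone\resgeq\polthree+\polfour$ gives $\LABEL{\NOT{\pfone}}{\varone}{\polone}\SUBTYPE\LABEL{\NOT{\pfone}}{\varone}{\polthree+\polfour}$ on the negative side, I conclude $\nltone\SUBTYPE\nlttwo\SUBTYPE\negcontexttwo\contsum\negcontextthree$.

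For the $\TENSOR$ step, $\pfone=\pftwo\TENSOR\pfthree$ and $\proofone$ ends by combining $\proofone_1\pof\seq\negcontextone_1,\LABEL{\pftwo}{\varone}{\poltwo}$ and $\proofone_2\pof\seq\negcontextone_2,\LABEL{\pfthree}{\varone}{\polfive}$ with $\polone\resleq\poltwo$, $\polone\resleq\polfive$ and $\negcontextone=\negcontextone_1,\negcontextone_2$. Because $\poltwo,\polfive\resgeq\polone\resgeq\polthree+\polfour$, the induction hypothesis applies to both sub-trees, producing splittings $(\negcontexttwo_1,\negcontextthree_1)$ and $(\negcontexttwo_2,\negcontextthree_2)$. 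I reassemble the two $\polthree$-halves with a single $\TENSOR$ (legal since the conclusion label $\polthree$ satisfies $\polthree\resleq\polthree$), obtaining $\prooftwo\pof\seq\negcontexttwo_1,\negcontexttwo_2,\LABEL{\pftwo\TENSOR\pfthree}{\varone}{\polthree}$, and likewise the two $\polfour$-halves, using $\subst{(\pftwo\TENSOR\pfthree)}{\varone}{\vartwo+\polthree}=(\subst{\pftwo}{\varone}{\vartwo+\polthree})\TENSOR(\subst{\pfthree}{\varone}{\vartwo+\polthree})$. Componentwise monotonicity of $\contsum$ yields $\negcontextone\SUBTYPE\negcontexttwo\contsum\negcontextthree$, and shape-preservation gives $\prooftwo\streq\proofone\streq\proofthree$.

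The box case is the technical heart. Here $\pfone=!_{\varthree<\edone}\nfone$ and $\proofone$ ends with a $!$ whose premise is $\proofone'\pof\seq\negcontextfour,\LABEL{\nfone}{\varthree}{\edone}$ and whose side condition is $\negcontextone\SUBTYPE\sumlf{\varone}{\polone}{\negcontextfour}$. For the $\polthree$-half I reuse $\proofone'$ and close it with a $!$ over the context $\negcontexttwo=\sumlf{\varone}{\polthree}{\negcontextfour}$. For the $\polfour$-half I first apply the Substitution Lemma (Lemma~\ref{lem:subst}) to get $\subst{\proofone'}{\varone}{\vartwo+\polthree}\pof\seq\subst{\negcontextfour}{\varone}{\vartwo+\polthree},\LABEL{\subst{\nfone}{\varone}{\vartwo+\polthree}}{\varthree}{\subst{\edone}{\varone}{\vartwo+\polthree}}$ and close it with a $!$ over $\negcontextthree=\sumlf{\vartwo}{\polfour}{\subst{\negcontextfour}{\varone}{\vartwo+\polthree}}$, the resulting box formula being exactly $\subst{\pfone}{\varone}{\vartwo+\polthree}$. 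The crux is the reindexing $\sumlf{\varone}{\polthree}{\negcontextfour}\contsum\sumlf{\vartwo}{\polfour}{\subst{\negcontextfour}{\varone}{\vartwo+\polthree}}=\sumlf{\varone}{\polthree+\polfour}{\negcontextfour}$, which is the componentwise instance of the Shifting Sums Lemma (Lemma~\ref{Lem:splitpro2}); combined with $\sumlf{\varone}{\polone}{\negcontextfour}\SUBTYPE\sumlf{\varone}{\polthree+\polfour}{\negcontextfour}$ — valid because $\polone\resgeq\polthree+\polfour$ and, on negative labellings, a larger summation bound produces larger polynomials hence a $\SUBTYPE$-smaller context — this gives $\negcontextone\SUBTYPE\sumlf{\varone}{\polone}{\negcontextfour}\SUBTYPE\negcontexttwo\contsum\negcontextthree$. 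Since Lemma~\ref{lem:subst} preserves $\streq$ and all three proofs end with $!$, we also get $\prooftwo\streq\proofone\streq\proofthree$. I expect this box case — the bounded-sum bookkeeping and, because bounded contraction is asymmetric, keeping every $\resleq$/$\SUBTYPE$ oriented correctly (positive labels growing while negative ones shrink) — to be the main obstacle.
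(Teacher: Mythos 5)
Your proof is correct and follows essentially the same route as the paper's: induction on the structure of the $\otimes$-tree, with the $\TENSOR$ case reduced to the induction hypothesis via $\polone_i\resgeq\polone\resgeq\polthree+\polfour$, and the box case handled by substituting $\varone\mapsto\vartwo+\polthree$ in the premise (Lemma~\ref{lem:subst}) and re-promoting over the shifted bounded sum, justified by Lemma~\ref{Lem:splitpro2}. The only cosmetic divergence is in the axiom case, where you instantiate the context formula as $\NOT{\pfone}$ and chain subtypings via Lemma~\ref{Lem:subneg}, whereas the paper keeps the original negative formula and invokes Lemma~\ref{Lem:submon} for the shifted copy --- both are sound.
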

\longv{
\begin{proof}
  By induction on $\proofone$:
  \begin{varitemize}
  \item If the last rule used is an axiom then it is in the form
    \begin{prooftree}
      \AxiomC{$\LABEL{\nfone_1}{\varone}{\poltwo_1}\SUBTYPE \LABEL{\nftwo}{\varone}{\polfive}$}
      \AxiomC{$\LABEL{\NOT{\nftwo}}{\varone}{\polfive}\SUPTYPE \LABEL{\pfone}{\varone}{\polone}$}
      \RightLabel{$\AX$}
      \BinaryInfC{$\seq \LABEL{\nfone}{\varone}{\poltwo}, \LABEL{\pfone}{\varone}{\polone}$}
    \end{prooftree}
    for some $\nftwo, \polfive$. We know that
    $$
    \polthree+\polfour\resleq \polone\resleq \polfive \resleq \poltwo.
    $$
    Observe that, we can form the following derivations
    \begin{prooftree}
      \AxiomC{$\LABEL{\nfone_1}{\varone}{\polthree}\SUBTYPE \LABEL{\nftwo}{\varone}{\polthree}$}
      \AxiomC{$\LABEL{\NOT{\nftwo}}{\varone}{\polthree}\SUPTYPE \LABEL{\pfone}{\varone}{\polthree}$}
      \RightLabel{$\AX$}
      \BinaryInfC{$\seq \LABEL{\nfone_1}{\varone}{\polthree}, \LABEL{\pfone}{\varone}{\polthree}$}
    \end{prooftree}
    \begin{prooftree}
      \AxiomC{$\LABEL{\subst{\nfone}{\varone}{\vartwo+\polthree}}{\vartwo}{\polfour}\SUBTYPE 
        \LABEL{\subst{\nftwo}{\varone}{\vartwo+\polthree}}{\vartwo}{\polfour}$}
      \AxiomC{$\LABEL{\subst{\NOT{\nftwo}}{\varone}{\vartwo+\polthree}}{\vartwo}{\polfour}\SUPTYPE 
        \LABEL{\subst{\pfone}{\varone}{\vartwo+\polthree}}{\vartwo}{\polfour}$}
      \RightLabel{$\AX$}
      \BinaryInfC{$\seq \LABEL{\subst{\nfone}{\varone}{\vartwo+\polthree}}{\vartwo}{\polfour}, 
        \LABEL{\subst{\pfone}{\varone}{\vartwo+\polthree}}{\vartwo}{\polfour}$}
    \end{prooftree}
    where in building the second one we made use, in particular, of Lemma~\ref{Lem:submon}.
  \item 
    If the last rule used is $\TENSOR$ then we can write $\proofone$ as 
    \begin{prooftree}
      \AxiomC{$\prooffour_1\pof\seq\negcontextone_1, \LABEL{\pfone_1}{\varone}{\polone_1}$}
      \AxiomC{$\prooffour_2\pof\seq\negcontextone_2, \LABEL{\pfone_2}{\varone}{\polone_2}$}
      \RightLabel{$\TENSOR$}
      \BinaryInfC{$\seq \negcontextone_1, \negcontextone_2,\LABEL{\pfone_1 \TENSOR \pfone_2}{\varone}{\polone}$}
    \end{prooftree}
    where $\polone\resleq\polone_1$ and $\polone\resleq\polone_2$. As a consequence,
    $\polone_1\resgeq\poltwo+\polthree$ and $\polone_2\resgeq\poltwo+\polthree$, and we can thus apply the induction
    hypothesis to $\prooffour_1,\prooffour_2$ easily reaching the thesis.
  \item
    If the last rule used is promotion $!$ then $\proofone$ has the following shape:
    \begin{prooftree}
      \AxiomC{$\prooffour\pof\seq \negcontextone, \LABEL{\nfone}{\varthree}{\poltwo}$}
      \AxiomC{$\negcontexttwo\SUBTYPE\SUM_{\varone<\polthree}\negcontextone$}
      \RightLabel{$!$}
      \BinaryInfC{$\seq \negcontexttwo, \LABEL{!_{\varthree<\poltwo}\nfone}{\varone}{\polone}$}
    \end{prooftree}
    Then $\prooftwo$ is simply
    \begin{prooftree}
      \AxiomC{$\prooffour\pof\seq \negcontextone, \LABEL{\nfone}{\varthree}{\poltwo}$}
      \RightLabel{$!$}
      \UnaryInfC{$\seq\SUM_{\varone<\polthree}\negcontextone, \LABEL{!_{\varthree<\poltwo}\nfone}{\varone}{\polthree}$}
    \end{prooftree}
    About $\proofthree$, observe that $\subst{\prooffour}{\varone}{\vartwo+\polthree}$ has
    conclusion
    $$
    \seq \subst{\negcontextone}{\varone}{\vartwo+\polthree}, 
      \LABEL{\subst{\nfone}{\varone}{\vartwo+\polthree}}{\varthree}{\subst{\poltwo}{\varone}{\vartwo+\polthree}}
    $$
    By Lemma~\ref{Lem:splitpro2}, it is allowed to form $\sumlf{\vartwo}{\polfour}{\subst{\negcontextone}{\varone}{\vartwo+\polthree}}$.
    As a consequence, $\proofthree$ is
    \begin{prooftree}
      \AxiomC{$\seq \subst{\negcontextone}{\varone}{\vartwo+\polthree}, 
        \LABEL{\subst{\nfone}{\varone}{\vartwo+\polthree}}{\varthree}{\subst{\poltwo}{\varone}{\vartwo+\polthree}}$}
      \RightLabel{$!$}
      \UnaryInfC{$\seq \sumlf{\vartwo}{\polfour}{\subst{\negcontextone}{\varone}{\vartwo+\polthree}}, 
        \LABEL{\subst{(!_{\varthree<\poltwo}\nfone)}{\varone}{\vartwo+\polthree}}{\vartwo}{\polfour}$}
    \end{prooftree}
    Observe that the conclusions of $\prooftwo$ and $\proofthree$ are in the appropriate relation, again because of Lemma~\ref{Lem:splitpro2}.
  \end{varitemize}
This concludes the proof.
\end{proof}
}
Observe that not every proof can be split, but only $\TENSOR$-trees can.
\shortv{
The proof of Lemma~\ref{lem:splitting} is not
trivial and requires some auxiliary results (see~\cite{EV} for more details).
}
A parametric version of splitting
is also necessary here:
\begin{lemma}[Parametric Splitting]\label{Lem:parsplitting}
  If $\proofone\pof\seq \negcontextone, \LABEL{\pfone}{\varone}{\polone}$, 
  where $\proofone$ is a $\TENSOR$-tree and $\polone \resgeq \sum_{\varone<\polthree}\polfour$, then there exists,
  $\prooftwo\pof\seq \negcontexttwo, \LABEL{\pfone}{\varone}{\polfour}$
  where 
  $\sum_{\varone<\polthree}\negcontexttwo\SUPTYPE\negcontextone$.
  and $\prooftwo \streq \proofone$.
\end{lemma}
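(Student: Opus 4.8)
The plan is to prove Parametric Splitting by induction on the structure of the $\TENSOR$-tree $\proofone$, closely mirroring the proof of the binary Splitting Lemma~\ref{lem:splitting} but systematically replacing the two-way sum $\polthree+\polfour$ and the split context $\negcontexttwo\contsum\negcontextthree$ by a single bounded sum $\sum_{\varone<\polthree}$ together with its pointwise extension to multisets of labelled formulas. I would deliberately \emph{not} try to derive the statement by iterating Lemma~\ref{lem:splitting}: since $\polthree$ is a symbolic resource polynomial rather than a concrete numeral, there is no numerical induction on the number of copies to perform, whereas structural induction on $\proofone$ handles the bounded sum uniformly. Because $\proofone$ is a $\TENSOR$-tree, only four cases arise: $\AX$, $1$, $\TENSOR$, and promotion $!$.

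The $1$ case is immediate ($\negcontextone$ is empty and the generic copy is the $1$-rule relabelled with $\polfour$), and the $\TENSOR$ case reduces directly to the induction hypothesis: the $\TENSOR$ rule forces $\polone\resleq\polone_1$ and $\polone\resleq\polone_2$ on its two premises, so $\polone_1,\polone_2\resgeq\sum_{\varone<\polthree}\polfour$, the induction hypothesis applies to both $\TENSOR$-subtrees, and the two resulting parametric contexts are recombined by a single $\TENSOR$ step. In the $\AX$ case $\negcontextone$ is a single negative formula; the generic copy is an axiom whose positive side $\pfone$ is relabelled with $\polfour$ and whose negative side receives the dual slice of the original budget, arranged exactly as in the axiom case of Lemma~\ref{lem:splitting} (via Lemma~\ref{Lem:submon}) so that $\sum_{\varone<\polthree}\negcontexttwo\SUPTYPE\negcontextone$. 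The key structural point throughout is that $\prooftwo$ keeps $\varone$ \emph{free}: it is a single proof parametric in $\varone$, and $\polfour$ may itself depend on $\varone$.

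The promotion case is where the real content lies and is, I expect, the main obstacle. Here $\proofone$ is a box concluding $\seq\negcontextone,\LABEL{!_{\varthree<\poltwo}\nfone}{\varone}{\polone}$ from a premise $\proofthree\pof\seq\negcontextone',\LABEL{\nfone}{\varthree}{\poltwo}$ with side condition $\negcontextone\SUBTYPE\sum_{\varone<\polone}\negcontextone'$. Rather than recursing into the box, I would build the generic copy by re-applying promotion with outer budget $\polfour$, obtaining a box $\prooftwo\pof\seq\negcontexttwo,\LABEL{!_{\varthree<\poltwo}\nfone}{\varone}{\polfour}$ whose context $\negcontexttwo$ is the $\polfour$-fold bounded sum of $\negcontextone'$. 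The remaining obligation $\sum_{\varone<\polthree}\negcontexttwo\SUPTYPE\negcontextone$ then amounts to the arithmetic fact that the nested bounded sum $\sum_{\varone<\polthree}\sum_{\cdot<\polfour}\negcontextone'$ collapses to $\sum_{\cdot<\sum_{\varone<\polthree}\polfour}\negcontextone'$, which by the hypothesis $\polone\resgeq\sum_{\varone<\polthree}\polfour$ dominates $\sum_{\cdot<\polone}\negcontextone'\SUPTYPE\negcontextone$. This collapse is precisely the Shifting Sums Lemma~\ref{Lem:splitpro2}, applied to the context $\negcontextone'$ and to the budget polynomials. The delicate points are guaranteeing that $\negcontextone'$ is in the parametric shape for which the bounded sum is defined, and that the freshness side conditions on the bound resource variables ($\varone$, $\vartwo$, and the fresh summation variable) hold so that the required substitutions commute — exactly the hypotheses discharged in Lemma~\ref{Lem:splitpro2}.

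Finally, $\prooftwo\streq\proofone$ holds in each case because every construction above only rewrites resource polynomials and never touches the underlying $\LLP$ skeleton $\BLLPtoLLP{\proofone}$; the $\streq$ bookkeeping is therefore identical to that of Lemma~\ref{lem:splitting}.
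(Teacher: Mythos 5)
Your proposal is correct and follows what is evidently the intended route: the paper states Parametric Splitting without proof, but your structural induction on the $\TENSOR$-tree is the direct parametric analogue of the paper's own proof of Lemma~\ref{lem:splitting}, with the same case split ($\AX$, $1$, $\TENSOR$, $!$), the same decision to re-promote rather than recurse into the box, and the same reliance on Lemma~\ref{Lem:splitpro2} to collapse $\sum_{\varone<\polthree}\sum_{\cdot<\polfour}$ into $\sum_{\cdot<\sum_{\varone<\polthree}\polfour}$ before comparing against $\sum_{\cdot<\polone}$ via the hypothesis $\polone\resgeq\sum_{\varone<\polthree}\polfour$. You also correctly identify the one genuinely delicate point, namely that the generic copy's premise must be shifted (by $\sum_{\varfour<\varone}\subst{\polfour}{\varone}{\varfour}$, via Lemma~\ref{lem:subst}) so that its context is in the parametric shape required for the outer bounded sum to be defined; your sketch does not fully discharge this, but it names exactly the lemma that does.
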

While splitting allows to cope with duplication, parametric splitting implies that an arbitrary $\TENSOR$-tree
proof can be modified so as to be lifted into a box through one of its auxiliary doors 
Please observe that $\polone^\proofone$ continues to be such an upper bound even if any natural number 
is substituted for any of its free variables, \remove{as }an easy consequence of Lemma~\ref{lem:subst}.
The following is useful when dealing with cuts involving the rule $?d$:
\begin{lemma}\label{lem:tensornew}
  If $\poltwo\resgeq 1$, then $\sumlf{\varthree}{\poltwo}{\LABEL{\nftwo}{\vartwo}{\polthree}}
  \SUBTYPE\subst{\LABEL{\nftwo}{\vartwo}{\polthree}}{\varthree}{0}$.
\end{lemma}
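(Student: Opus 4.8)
The plan is to reduce the claimed subtyping to a single comparison between the two resource polynomials labelling a common underlying formula. First I would unfold the left-hand side. The very fact that $\sumlf{\varthree}{\poltwo}{\LABEL{\nftwo}{\vartwo}{\polthree}}$ is written presupposes that $\nftwo$ has the shape required by the bounded-sum construction of Section~\ref{sect:polform}, namely $\nftwo=\subst{\nfone}{\varone}{\vartwo+\sum_{\varfour<\varthree}\subst{\polthree}{\varthree}{\varfour}}$ with $\vartwo,\varfour\notin\FV{\nfone}\cup\FV{\polthree}$ and $\varthree\notin\FV{\nfone}$. By definition this gives $\sumlf{\varthree}{\poltwo}{\LABEL{\nftwo}{\vartwo}{\polthree}}=\LABEL{\nfone}{\varone}{\sum_{\varthree<\poltwo}\polthree}$, a labelling of $\nfone$.

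Next I would unfold the right-hand side by pushing the substitution $\subst{\cdot}{\varthree}{0}$ through the labelling. Since $\varthree\notin\FV{\nfone}$, the resource variable $\varthree$ occurs free in $\nftwo$ only in the upper bound of the inner sum $\sum_{\varfour<\varthree}$; setting $\varthree$ to $0$ therefore turns that into the empty sum, so that $\subst{\nftwo}{\varthree}{0}=\subst{\nfone}{\varone}{\vartwo}$. Renaming the bound resource variable $\vartwo$ back to $\varone$ (formulas are taken modulo $\alpha$-equivalence, and $\vartwo\notin\FV{\polthree}$ so the label is untouched), I obtain $\subst{\LABEL{\nftwo}{\vartwo}{\polthree}}{\varthree}{0}=\LABEL{\nfone}{\varone}{\subst{\polthree}{\varthree}{0}}$, again a labelling of $\nfone$.

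Both sides are now labellings of the same negative formula $\nfone$ with the same bound variable, so by the definition of $\SUBTYPE$ on negative labelled formulas ($\nfone\SUBTYPE\nfone$ holds by reflexivity, and the labels must satisfy $\resgeq$) the goal reduces to the polynomial inequality $\sum_{\varthree<\poltwo}\polthree\resgeq\subst{\polthree}{\varthree}{0}$. This is exactly where $\poltwo\resgeq 1$ enters: since then $\poltwo-1$ is itself a resource polynomial, I can split off the $\varthree=0$ summand and write
$$
\sum_{\varthree<\poltwo}\polthree=\subst{\polthree}{\varthree}{0}+\sum_{\varthree<\poltwo-1}\subst{\polthree}{\varthree}{\varthree+1}.
$$
The residual bounded sum is a resource polynomial because $\subst{\polthree}{\varthree}{\varthree+1}$ is one (resource polynomials are closed under composition) and resource polynomials are closed under bounded sums; hence $\sum_{\varthree<\poltwo}\polthree-\subst{\polthree}{\varthree}{0}$ is a resource polynomial, i.e.\ $\sum_{\varthree<\poltwo}\polthree\resgeq\subst{\polthree}{\varthree}{0}$, which is the claim.

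The genuinely delicate point is the unfolding of the right-hand side: one must verify that $\subst{\cdot}{\varthree}{0}$ really collapses the inner bounded sum and interacts correctly with the variable conventions (here $\varthree\notin\FV{\nfone}$ guarantees it touches nothing but the sum bound), and that the subsequent $\alpha$-renaming leaves the label $\polthree$ undisturbed. Once both sides are exhibited as labellings of $\nfone$, the remaining polynomial inequality is routine, resting only on the closure properties of resource polynomials recalled in Section~\ref{sect:polform} and on the hypothesis $\poltwo\resgeq 1$ that makes $\poltwo-1$ available as a legal upper bound.
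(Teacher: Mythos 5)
Your proof is correct and follows essentially the same route as the paper's: unfold the bounded-sum definition to exhibit both sides as labellings of the same formula $\nfone$, observe that substituting $0$ for $\varthree$ collapses the inner sum $\sum_{\varfour<\varthree}$, and reduce to the polynomial comparison $\sum_{\varthree<\poltwo}\polthree\resgeq\subst{\polthree}{\varthree}{0}$. The only difference is that you spell out the final inequality (splitting off the $\varthree=0$ summand using $\poltwo\resgeq 1$), which the paper asserts without further comment.
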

\longv{
  \begin{proof}
    By hypothesis, we have that $\sumlf{\varthree}{\poltwo}{\LABEL{\nftwo}{\vartwo}{\polthree}}=\LABEL{\nfone}{\vartwo}{\polone}$ for some
    $\nfone,\vartwo,\polone$. As a consequence
    $$
    \nftwo\equiv\subst{\nfone}{\varone}{\vartwo+\sum_{\varfour<\varthree}\subst{\polone}{\varthree}{\varfour}}.
    $$
    Now:
    \begin{align*}
      \subst{\LABEL{\nftwo}{\vartwo}{\polthree}}{\varthree}{0}&\equiv\LABEL{\subst{\nfone}{\varone}{\vartwo+\sum_{\varfour<0}\subst{\polone}{\varthree}{\varfour}}}{\vartwo}{\subst{\polthree}{\varthree}{0}}\\
      &\equiv\LABEL{\subst{\nfone}{\varone}{\vartwo}}{\vartwo}{\subst{\polthree}{\varthree}{0}}\equiv\LABEL{\nfone}{\varone}{\subst{\polthree}{\varthree}{0}}\SUPTYPE\LABEL{\nfone}{\varone}{\sum_{\varthree<\poltwo}\polthree}
    \end{align*}
    This concludes the proof.
  \end{proof}
}
\section{Cut Elimination}
In this Section, we \longv{show how a cut-elimination procedure for $\BLLP$ can be defined}\shortv{give some ideas about how cuts can be eliminated from $\BLLP$ proofs}. 
\longv{
We start by
showing how \emph{logical} cuts can be reduced, where a cut is logical when the two immediate subproofs
end with a rule introducing the formula involved in the cut. 
We describe how logical cuts can be reduced in the critical cases in Figure~\ref{fig:logcutelim}, which needs to be further explained:
\begin{figure*}
  \fbox{
  \begin{minipage}{.98\textwidth}
  \centering
    \vspace{5pt} \ \\
    \textsl{Multiplicatives}\\
    \vspace{5pt}
    {\scriptsize
      \AxiomC{$\seq \contextone, \LABEL{\nfone}{\varone}{\polone}, \LABEL{\nftwo}{\varone}{\poltwo}$}
      \RightLabel{$\PAR$}
      \UnaryInfC{$\seq \contextone, \LABEL{\nfone\PAR\nftwo}{\varone}{\polfive}$}
      \AxiomC{$\seq \negcontextone, \LABEL{\NOT{\nfone}}{\varone}{\polthree}$}
      \AxiomC{$\seq \negcontexttwo, \LABEL{\NOT{\nftwo}}{\varone}{\polfour}$}
      \RightLabel{$\TENSOR$}
      \BinaryInfC{$\seq \negcontextone, \negcontexttwo, \LABEL{\NOT{\nfone}\TENSOR\NOT{\nftwo}}{\varone}{\polfive}$}
      \RightLabel{$\CUT$}
      \BinaryInfC{$\seq \contextone, \negcontextone, \negcontexttwo$}
      \DisplayProof}
    $\logred$
    {\scriptsize
      \AxiomC{$\seq \contextone, \LABEL{\nftwo}{\varone}{\polfive}, \LABEL{\nfone}{\varone}{\polfive}$}
      \AxiomC{$\seq \negcontextone, \LABEL{\NOT{\nfone}}{\varone}{\polfive}$}
      \RightLabel{$\CUT$}
      \BinaryInfC{$\seq \contextone, \negcontextone, \LABEL{\nftwo}{\varone}{\polfive}$}
      \AxiomC{$\seq \negcontexttwo, \LABEL{\NOT{\nftwo}}{\varone}{\polfive}$}
      \RightLabel{$\CUT$}
      \BinaryInfC{$\seq \contextone, \negcontextone, \negcontexttwo$}
      \DisplayProof}\\
    \vspace{10pt}
    \textsl{Dereliction}\\
    \vspace{5pt}
    {\scriptsize
      \AxiomC{$\proofone\pof\seq \negcontextone, \LABEL{\nfone}{\vartwo}{\polone}$}
      \RightLabel{$!$}
      \UnaryInfC{$\seq\negcontexttwo, \LABEL{!_{\vartwo<\polone}\nfone}{\varone}{\poltwo}$}
      \AxiomC{$\prooftwo\pof\seq \negcontextthree, \LABEL{\subst{\NOT{\nftwo}}{\varone}{0}}{\vartwo}{\subst{\polthree}{\varone}{0}}$}
      \RightLabel{$?d$}
      \UnaryInfC{$\seq \negcontextthree, \LABEL{?_{\vartwo<\polone} \NOT{\nfone}}{\varone}{\poltwo}$}
      \RightLabel{$\CUT$}
      \BinaryInfC{$\seq \negcontexttwo, \negcontextthree$}
      \DisplayProof
      $\logred$
      \AxiomC{$\proofthree\pof\seq\negcontexttwo, \LABEL{\subst{\nfone}{\varone}{0}}{\vartwo}{\subst{\polone}{\varone}{0}}$}
      \AxiomC{$\prooffour\pof\seq\negcontextthree,\LABEL{\subst{\nfone}{\varone}{0}}{\vartwo}{\subst{\polone}{\varone}{0}}$}
      \RightLabel{$\CUT$}
      \BinaryInfC{$\seq\negcontexttwo, \negcontextthree$}
      \DisplayProof}\\
    \vspace{10pt}
    \textsl{Contraction}\\
    \vspace{5pt}
    {\scriptsize
    \AxiomC{$\proofone\pof\seq\negcontextone, \LABEL{\NOT{\nfone}}{\varone}{\polthree}$}
    \AxiomC{$\prooftwo\pof\seq \contextone, \LABEL{\nfthree}{\varone}{\polone}, \LABEL{\subst{\nfthree}{\varone}{\vartwo+\polone}}{\vartwo}{\poltwo}$}
    \RightLabel{$?c$}
    \UnaryInfC{$\seq \contextone, \LABEL{\nfone}{\varone}{\polthree}$}
    \RightLabel{$\CUT$}
    \BinaryInfC{$\seq \negcontextone, \contextone$}
    \DisplayProof\\
    \vspace{3pt}
    $\logred$\\
    \vspace{3pt}
    \AxiomC{$\prooffour\pof\seq \negcontextthree, \LABEL{\subst{\NOT{\nfthree}}{\varone}{\vartwo+\polone}}{\vartwo}{\poltwo}$}
    \AxiomC{$\proofthree\pof\seq \negcontexttwo, \LABEL{\NOT{\nfthree}}{\varone}{\polone}$}
    \AxiomC{$\proofone\pof\seq \contextone, \LABEL{\nfthree}{\varone}{\polone}, \LABEL{\subst{\nfthree}{\varone}{\varone+\polone}}{\vartwo}{\poltwo}$}
    \RightLabel{$\CUT$}
    \BinaryInfC{$\seq \negcontexttwo,\contextone, \LABEL{\subst{\nfthree}{\varone}{\varone+\polone}}{\vartwo}{\poltwo},\LABEL{\subst{\nfone}{\varone}{\varone+\polone}}{\vartwo}{\poltwo}$}
    \RightLabel{$\CUT$}
    \BinaryInfC{$\seq \negcontexttwo,\negcontextthree,\contextone$}
    \doubleLine
    \RightLabel{$?c$}
    \UnaryInfC{$\seq \negcontextone, \contextone$}
    \DisplayProof}\\
    \vspace{10pt}
    \textsl{Digging}\\
    \vspace{5pt}
    {\scriptsize
    \AxiomC{$\proofone\pof\seq\negcontextone,\LABEL{\NOT{\nfone}}{\varone}{\polthree}$}
    \AxiomC{$\prooftwo\pof\seq \negcontextthree,\LABEL{\nfthree}{\vartwo}{\polfour},\LABEL{\nftwo}{\vartwo}{\polone}$}
    \RightLabel{$!$}
    \UnaryInfC{$\seq\negcontexttwo,\LABEL{\nfone}{\varone}{\polthree},\LABEL{!_{\vartwo<\polone}\nftwo}{\varone}{\poltwo}$}
    \RightLabel{$\CUT$}
    \BinaryInfC{$\seq\negcontextone,\negcontexttwo,\LABEL{!_{\vartwo<\polone}\nftwo}{\varone}{\poltwo}$}    
    \DisplayProof
    \vspace{3pt}
    $\logred$
    \vspace{3pt}
    \AxiomC{$\proofthree \pof \seq \negcontextfour, \LABEL{\NOT{\nfthree}}{\vartwo}{\polfour}$}
    \AxiomC{$\prooftwo\pof\seq \negcontextthree,\LABEL{\nfthree}{\vartwo}{\polfour},\LABEL{\nftwo}{\vartwo}{\polone}$}
        \RightLabel{$\CUT$}
    \BinaryInfC{$\seq \negcontextfour, \negcontextthree, \LABEL{\nftwo}{\vartwo}{\polone}$}
    \RightLabel{$!$}
    \UnaryInfC{$\seq\negcontextone,\negcontexttwo,\LABEL{!_{\vartwo<\polone}\nftwo}{\varone}{\poltwo}$}    
    \DisplayProof}\\
  \vspace{5pt} \ \\
  \end{minipage}}
  \caption{Some Logical Cut-Elimination Steps}\label{fig:logcutelim}
\end{figure*}
\begin{varitemize}
\item
  When reducing  multiplicative logical cuts, we extensively use the Subtyping Lemma.
\item
  In the dereliction reduction step, $\subst{\proofone}{\varone}{0}$ (obtained through Lemma~\ref{lem:subst}) has conclusion
  $\seq\subst{\negcontextone}{\varone}{0}, \LABEL{\subst{\nfone}{\varone}{0}}{\vartwo}{\subst{\polone}{\varone}{0}}$. 
  By Lemma~\ref{lem:tensornew}, $\negcontexttwo\SUBTYPE\sum_{\varone<\poltwo}\negcontextone\SUBTYPE\subst{\negcontextone}{\varone}{0}$,
  and as a consequence, there is $\proofthree\pof\seq\negcontexttwo,\LABEL{\subst{\nfone}{\varone}{0}}{\vartwo}{\subst{\polone}{\varone}{0}}$. 
  From $\LABEL{?_{\vartwo<\polone}\NOT{\nfone}}{\varone}{\poltwo}\SUPTYPE\LABEL{?_{\vartwo<\polthree}\NOT{\nftwo}}{\varone}{1}$, it follows
  that $\LABEL{\subst{\NOT{\nftwo}}{\varone}{0}}{\vartwo}{\subst{\polthree}{\varone}{0}} \longv{\linebreak[1]} \SUPTYPE
  \LABEL{\subst{\nfone}{\varone}{0}}{\vartwo}{\subst{\polone}{\varone}{0}}$, and there is a proof
  $\prooffour\pof\seq \negcontextthree,\LABEL{\subst{\nfone}{\varone}{0}}{\vartwo}{\subst{\polone}{\varone}{0}}$.
\item
  In the contraction reduction step, we suppose that $\proofone$ is a $\otimes$-tree. Then we can apply Lemma \ref{lem:splitting} and Lemma \ref{lem:subtyping}, 
  and obtain $\proofthree\pof\seq\negcontexttwo, \LABEL{\NOT{\nfthree}}{\varone}{\polone}$ and 
  $\prooffour\pof\seq\negcontextthree, \LABEL{\subst{\NOT{\nfthree}}{\varone}{\vartwo+\polone}}{\vartwo}{\poltwo}$ 
  such that $\negcontexttwo\contsum\negcontextthree\SUPTYPE\negcontextone$.
\item
  In digging, by Lemma~\ref{Lem:parsplitting} from $\proofone$ we can find $\proofthree \pof \seq \negcontextfour, \LABEL{\NOT{\nfthree}}{\vartwo}{\polfour}$, 
  where $\negcontextone \SUBTYPE \SUM_{\varone < \poltwo} \negcontextfour$.
\end{varitemize}}
\shortv{\emph{Logical cuts} (i.e., those in which the two immediate subproofs
end with a rule introducing the formula involved in the cut) 
can be reduced as in $\LLP$~\cite{phdlaurent}, but exploiting malleability whenever polynomials need to be
modified. This defines the reduction relation $\logred$ (see~\cite{EV} for more details).}
All instances of the $\CUT$ rule which are not logical are said to be \emph{commutative}, and induce
a relation \shortv{$\commeq$} on proofs. 
\longv{
As an example, the proof
\begin{prooftree}
  \AxiomC{$\proofone\pof\seq\contextone,\nltone,\LABEL{\nfone}{\varone}{\polone}, \LABEL{\nftwo}{\varone}{\poltwo}$}
  \RightLabel{$\PAR$}
  \UnaryInfC{$\seq \contextone,\nltone,\LABEL{\nfone\PAR\nftwo}{\varone}{\polthree}$}
  \AxiomC{$\prooftwo\pof\seq\negcontextone,\NOT{\nltone}$}
  \RightLabel{$\CUT$}
  \BinaryInfC{$\seq\contextone,\negcontextone,\LABEL{\nfone\PAR\nftwo}{\varone}{\polthree}$}
\end{prooftree}
is equivalent to
\begin{prooftree}
  \AxiomC{$\proofone\pof\seq\contextone,\nltone,\LABEL{\nfone}{\varone}{\polone}, \LABEL{\nftwo}{\varone}{\poltwo}$}
  \AxiomC{$\prooftwo\pof\seq\negcontextone,\NOT{\nltone}$}
  \RightLabel{$\CUT$}
  \BinaryInfC{$\seq\contextone,\negcontextone,\LABEL{\nfone}{\varone}{\polone}, \LABEL{\nftwo}{\varone}{\poltwo}$}
  \RightLabel{$\PAR$}
  \UnaryInfC{$\seq \contextone,\negcontextone,\LABEL{\nfone\PAR\nftwo}{\varone}{\polthree}$}
\end{prooftree}
This way we can define an equivalence relation $\commeq$ on the space of proofs.}
In general, not all cuts in a proof are logical, but any cut can be turned into a logical one:
\begin{lemma}\label{lemma:logcut}
  Let $\proofone$ be any proof containing an occurrence of the rule $\CUT$. Then, there are two proofs $\prooftwo$ and $\proofthree$
  such that $\proofone\commeq\prooftwo\logred\proofthree$\remove{. Moreover, }, where $\prooftwo$ can be effectively obtained from $\proofone$.
\end{lemma}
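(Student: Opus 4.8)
The plan is to isolate a single cut of $\proofone$ and to drag it upwards, using commutative conversions only, until it meets the introductions of its own cut formula on both sides; at that point the cut is logical and a $\logred$-step applies. Concretely, I would first pick an \emph{uppermost} occurrence of $\CUT$ in $\proofone$, i.e.\ one whose two immediate subproofs $\proofone_+ \pof \seq \negcontextone, \pfone$ (the premise carrying the positive cut formula $\pfone$) and $\proofone_- \pof \seq \contextone, \NOT{\pfone}$ are both cut-free. Such a cut exists because the ancestor relation on the finite set of $\CUT$-nodes of $\proofone$ has maximal elements, and working with an uppermost cut guarantees that no commutation ever crosses another $\CUT$.

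I would then argue by induction on $|\proofone_+| + |\proofone_-|$ that this cut can be brought into logical form within its $\commeq$-class. If the last rule of either subproof does not have the cut formula as its principal formula, that rule acts on a context (necessarily negative) formula, and I commute the cut above it, which is precisely a commutative-cut step and hence a move inside $\commeq$. The crucial quantitative observation is that the only genuinely binary rules of the calculus are $\CUT$ and $\TENSOR$, and that a positive formula is never contracted, so it occurs in exactly one premise of a $\TENSOR$; consequently every such commutation sends the cut into a single premise and strictly decreases $|\proofone_+| + |\proofone_-|$, so the induction is well-founded. When instead both last rules are principal on the cut formula, a polarity check identifies a logical cut: $\PAR$ against $\TENSOR$ (multiplicative), $?d$ against $!$ (dereliction), $\bot$ against $1$, or one of the structural cuts of Figure~\ref{fig:logcutelim} in which $\proofone_-$ ends in $?c$, $?w$, or in a $!$ with the cut formula at an auxiliary door. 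Setting $\prooftwo$ to be the proof reached at this point and $\proofthree$ its $\logred$-reduct gives the statement; effectiveness is immediate, since each commutation is a deterministic local rewrite.

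The delicate point, and the step I expect to be the real obstacle, is the treatment of the structural cases. There the logical reduction is the one driven by Lemma~\ref{lem:splitting} (contraction) or Lemma~\ref{Lem:parsplitting} (digging), both of which require the positive subproof $\proofone_+$ to be an honest $\TENSOR$-tree, not merely a proof whose last rule introduces $\pfone$. Exposing a $\TENSOR$-tree therefore means arranging that every rule of $\proofone_+$ acting on a negative context formula (a $\PAR$, a $?$-rule, a $\bot$, a $\forall$) has been pushed below the cut, so that what remains above it is built solely from $\AX$, $!$, $1$ and $\TENSOR$. I would establish this through the polarity discipline of $\LLP$ — tracing the linear positive formula $\pfone$ to its unique introduction and showing that the only rules separating it from the cut are permutable context rules — together with the malleability lemmas (Lemma~\ref{lem:subtyping} and Lemma~\ref{lem:subst}) to absorb the polynomial side-conditions created when the cut crosses $\TENSOR$ and $!$. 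Verifying that this always succeeds, so that a structural case never gets stuck in a both-principal configuration whose positive side fails to be a $\TENSOR$-tree, is where the bulk of the care is needed.
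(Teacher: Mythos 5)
Your proposal follows essentially the same route as the paper's own argument: trace the two cut formulas upward and shorten the paths by commutative conversions until the underlying logical cut is exposed, the commutations being well-founded because the cut always moves into a single premise of any binary rule. You are in fact more explicit than the paper's one-paragraph sketch on the two points that matter — the choice of an uppermost cut with a decreasing size measure, and the need to push all negative context rules below the cut so that the positive premise becomes a genuine $\TENSOR$-tree before the contraction and digging steps of Figure~\ref{fig:logcutelim} (via Lemmas~\ref{lem:splitting} and~\ref{Lem:parsplitting}) can be applied.
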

The proof of Lemma~\ref{lemma:logcut} goes as follows: given any instance of the $\CUT$ rule
\begin{prooftree}
  \AxiomC{$\proofone\pof \seq \contextone, \LABEL{\nfone}{\varone}{\polone}$}
  \AxiomC{$\prooftwo\pof \seq \negcontextone, \LABEL{\pfone}{\varone}{\polone}$}
  \RightLabel{$\CUT$}
  \BinaryInfC{$\seq \contextone, \negcontextone$}
\end{prooftree}
consider the path (i.e., the sequence of formula occurrences) starting from $\LABEL{\nfone}{\varone}{\polone}$ and going
upward inside $\proofone$, and the path starting from $\LABEL{\pfone}{\varone}{\polone}$ and going upward
inside $\prooftwo$. Both paths end either at an $\AX$ rule or at an instance of a rule introducing the main
connective in $\nfone$ or $\pfone$. The game to play is then to show that these two paths can always be \emph{shortened} by
way of commutations, thus exposing the underlying logical cut.

Lemma~\ref{lemma:logcut} is implicitly defining a cut-elimination procedure: given any instance of the $\CUT$ rule, turn it into
a logical cut by the procedure from Lemma~\ref{lemma:logcut}, then fire it. This way we are implicitly defining another reduction
relation $\cutelimarr$. The next question is the following: is this procedure going to terminate for every proof $\proofone$ 
(i.e., is $\cutelimarr$ strongly, or weakly, normalizing)? How many steps does it take to turn $\proofone$ to its cut-free form?

Actually, $\cutelimarr$ produces reduction sequences of very long length, but is anyway strongly normalizing. A relatively
easy way to prove it goes as follows: any $\BLLP$ proof $\proofone$ corresponds to a $\LLP$ sequent calculus proof $\BLLPtoLLP{\proofone}$, 
and the latter itself corresponds to a polarized proof net $\BLLPtoLLPpn{\proofone}$~\cite{laurent2003polarized}. Moreover, $\proofone\cutelimarr\prooftwo$
implies that $\BLLPtoLLPpn{\proofone}\mapsto\BLLPtoLLPpn{\prooftwo}$, where $\mapsto$ is the canonical cut-elimination
relation on polarized proof-nets. Finally, $\BLLPtoLLPpn{\proofone}$ is identical to $\BLLPtoLLPpn{\prooftwo}$
whenever $\proofone\commeq\prooftwo$. Since $\mapsto$ is known to be strongly normalizing, $\cutelimarr$ does not admit infinite
reduction sequences:
\begin{proposition}[Cut-Elimination]
The relation $\cutelimarr$ is strongly normalizing.
\end{proposition}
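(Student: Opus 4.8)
The plan is to reduce strong normalization of $\cutelimarr$ to that of the canonical cut-elimination relation $\mapsto$ on Laurent's polarized proof nets, which is already known to be strongly normalizing~\cite{laurent2003polarized,phdlaurent}. Concretely, I would set up the two erasing translations already alluded to: first $\BLLPtoLLP{\cdot}$, sending a $\BLLP$ proof to the underlying $\LLP$ sequent proof by discarding all resource polynomials, and then the standard desequentialization $\BLLPtoLLPpn{\cdot}$ turning an $\LLP$ proof into a polarized proof net. The whole argument then rests on a single \emph{simulation property}: every $\cutelimarr$ step of a $\BLLP$ proof is mirrored by a \emph{nonempty} sequence of $\mapsto$ steps of its net image. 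Since $\mapsto$ admits no infinite reduction, neither can $\cutelimarr$.

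To establish the simulation I would prove two complementary facts. The first is \emph{commutation invariance}: if $\proofone\commeq\prooftwo$ then $\BLLPtoLLPpn{\proofone}$ and $\BLLPtoLLPpn{\prooftwo}$ are literally the same net. This is immediate once one observes that commutative conversions only rearrange the relative order of rules without touching the logical links, and that they preserve the $\streq$-class, hence the underlying $\LLP$ proof and its net. The second is the \emph{logical simulation}: if $\proofone\logred\prooftwo$ then $\BLLPtoLLPpn{\proofone}\mapsto^{+}\BLLPtoLLPpn{\prooftwo}$, with at least one genuine net reduction. Combining the two with Lemma~\ref{lemma:logcut}, a single $\cutelimarr$ step decomposes as finitely many $\commeq$-commutations (which project to the identity on nets) followed by exactly one $\logred$ firing (which projects to $\mapsto^{+}$); hence each $\cutelimarr$ step yields a nonempty net reduction, and an infinite $\cutelimarr$ sequence would produce an infinite $\mapsto$ sequence, a contradiction.

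The main obstacle is the logical simulation, to be checked case by case along the reductions of Figure~\ref{fig:logcutelim}. The delicate point is specific to the bounded setting: each logical step is preceded by applications of malleability --- Subtyping (Lemma~\ref{lem:subtyping}), Substitution (Lemma~\ref{lem:subst}), Splitting (Lemma~\ref{lem:splitting}) and Parametric Splitting (Lemma~\ref{Lem:parsplitting}), together with the bounded-sum manipulations of Lemma~\ref{lem:tensornew} --- in order to reconcile the polynomials labelling the two premises. I would exploit the fact that all of these operations are $\streq$-preserving, so they leave $\BLLPtoLLPpn{\cdot}$ unchanged and serve only to make the $\BLLP$ decoration typecheck; the structural rewiring visible on the net is therefore exactly the one produced by the corresponding $\LLP$ reduction. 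It then remains to verify that in every case --- multiplicative ($\TENSOR/\PAR$), dereliction, contraction (where a $\TENSOR$-tree is duplicated by Splitting, matching box duplication on the net) and digging --- the firing strictly changes the net, so that the projection is a nonempty rather than empty sequence of $\mapsto$ steps. Once this is secured, strong normalization of $\cutelimarr$ follows from that of $\mapsto$.
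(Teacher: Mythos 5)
Your proposal follows exactly the paper's argument: project $\BLLP$ proofs onto Laurent's polarized proof nets via $\BLLPtoLLP{\cdot}$ and $\BLLPtoLLPpn{\cdot}$, observe that $\commeq$ leaves the net unchanged while each fired logical cut induces a genuine $\mapsto$ reduction, and conclude strong normalization of $\cutelimarr$ from that of $\mapsto$. Your additional remarks --- that the malleability lemmas are $\streq$-preserving and hence invisible on the net, and that the simulation must be nonempty --- are exactly the points the paper leaves implicit, so the two proofs coincide.
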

This does not mean that cut-elimination can be performed
in (reasonably) bounded time. Already in $\BLL$ this can take hyperexponential time: the
whole of Elementary Linear Logic~\cite{Girard98IC} can be embedded into it.
\subsection{Soundness}\label{subsec:soundness}
To get a soundness result, then, we somehow need to restrict the underlying
reduction relation $\cutelimarr$. Following~\cite{GSS92TCS}, one could indeed define a subset of 
$\cutelimarr$ just by imposing that in dereliction, contraction, or box cut-elimination steps, 
the involved $\TENSOR$-trees are closed. Moreover, we could stipulate that reduction is external, 
i.e., it cannot take place inside boxes. Closed and external reduction, however, is not enough to
simulate head-reduction in the $\lambda\mu$-calculus, and not being able to reduce under the
scope of $\mu$-abstractions does not make much sense anyway. We are forced, then, to consider an
extension of closed reduction. The fact that this new notion of reduction still guarantees
polynomial bounds is technically a remarkable strengthening with respect to $\BLL$'s Soundness 
Theorem~\cite{GSS92TCS}.

There is a quite natural notion of \emph{downward} path in proofs: from any occurrence of a 
negative formula $\nltone$, just proceed downward until you either find (the main premise
of) a $\CUT$ rule, or a conclusion. In the first case, the occurrence of $\nltone$ is said 
to be \emph{active}, in the second it is said to be \emph{passive}. Proofs can then
be endowed with a new notion of reduction: all dereliction, contraction or box digging cuts
can be fired only if the negative formula occurrences in its rightmost argument are all passive. 
In the literature, this is sometimes called a \emph{special cut} (e.g.~\cite{Baillot11}). Moreover,
reduction needs to be external, as usual. This notion of reduction, as we will see, is enough to 
mimic head reduction, and is denoted  with $\redcutelimarr$.

The next step consists in \remove{attributing} associating a weight, in the form of a resource polynomial, to
every proof, similarly to what happens in $\BLL$. The \emph{pre-weight} $\pfd{\proofone}$ 
of a proof $\proofone$ with conclusion $\seq\ltone_1,\ldots,\ltone_n$ consists in:
\begin{varitemize}
\item
  a resource polynomial $\polone^\proofone$.
\item
  $n$ disjoints sets of resource variables $\srvone_1^\proofone,\ldots,\srvone_n^\proofone$, each corresponding to
  a formula in $\ltone_1,\ldots,\ltone_n$; if this does not cause ambiguity, the
  set of resource variables corresponding to a formula $\ltone$ will be denoted by $\srvp{\proofone}{\ltone}$.
  Similarly for $\srvp{\proofone}{\contextone}$, where $\contextone$ is a multiset of formulas.
\end{varitemize}
If $\proofone$ has pre-weight $\polone^\proofone,\srvone_1^\proofone,\ldots,\srvone_n^\proofone$, then the \emph{weight} $\poltwo^\proofone$ of 
$\proofone$ is simply $\polone^\proofone$ where, however, all the variables in $\srvone_1^\proofone,\ldots,\srvone_n^\proofone$ are
substituted with $0$: $\subst{\polone^\proofone}{\cup_{i=1}^n\srvone_i^\proofone}{0}$.
\longv{
The pre-weight of a proof $\proofone$ is defined by induction on the structure of $\proofone$, following
the rules in Figure~\ref{figure:preweights}.
\begin{figure*}
\centering
\begin{tabular}{|c|c|}\hline\hline
  \vspace{-10pt} \\ 
  $\proofone$ & $\pfd{\proofone}$\\ 
  \vspace{-20pt} \\ 
  \\ \hline\hline \vspace{-5pt} & \\
  \longv{\scriptsize}
  \AxiomC{$\LABEL{\nftwo}{\varone}{\poltwo} \SUBTYPE \LABEL{\nfone}{\varone}{\polone}$}
  \AxiomC{$\LABEL{\NOT{\nfone}}{\varone}{\polone} \SUPTYPE \LABEL{\pfone}{\varone}{\polthree}$}
  \RightLabel{$\AX$}
  \BinaryInfC{$\seq \LABEL{\nftwo}{\varone}{\poltwo},\LABEL{\pfone}{\varone}{\polthree}$}
  \DisplayProof
  &
  \longv{\scriptsize}
  $\{\vartwo\},\emptyset,\vartwo$
  \\ \vspace{-5pt} & \\
  \longv{\scriptsize}
  \AxiomC{$\prooftwo\pof\seq \contextone, \LABEL{\nfone}{\varone}{\polone}$}
  \AxiomC{$\proofthree\pof\seq \negcontextone, \LABEL{\NOT{\nfone}}{\varone}{\polone}$}
  \RightLabel{$\CUT$}
  \BinaryInfC{$\seq \contextone, \negcontextone$}
  \DisplayProof
  &
  \longv{\scriptsize}
  $\srvp{\prooftwo}{\contextone},\srvp{\proofthree}{\negcontextone},
    \subst{\polone^\prooftwo}{\srvp{\prooftwo}{\LABEL{\nfone}{\varone}{\polone}}}{1}+
    \subst{\polone^\proofthree}{\srvp{\proofthree}{\LABEL{\NOT{\nfone}}{\varone}{\polone}}}{1}$
  \\ \vspace{-5pt} & \\
  \longv{\scriptsize}
  \AxiomC{$\prooftwo\pof\seq \contextone, \LABEL{\nfone}{\varone}{\polone}, \LABEL{\nftwo}{\varone}{\poltwo}$}
  \AxiomC{$\polone\resleq\polthree$\quad$\poltwo\resleq\polthree$}
  \RightLabel{$\PAR$}
  \BinaryInfC{$\seq \contextone, \LABEL{\nfone\PAR\nftwo}{\varone}{\polthree}$}
  \DisplayProof
  &
  \longv{\scriptsize}
  $
  \srvp{\prooftwo}{\contextone},\srvp{\prooftwo}{\LABEL{\nfone}{\varone}{\polone}}\cup
    \srvp{\prooftwo}{\LABEL{\nftwo}{\varone}{\polone}}\cup\{\vartwo\},\polone^\prooftwo+\vartwo
  $
  \\ \vspace{-5pt} & \\
  \longv{\scriptsize}
  \AxiomC{$\prooftwo\pof\seq \negcontextone, \LABEL{\pfone}{\varone}{\polone}$}
  \AxiomC{$\proofthree\pof\seq \negcontexttwo, \LABEL{\pftwo}{\varone}{\polone}$\quad$\polthree\resleq\polone$\quad$\polthree\resleq\poltwo$}
  \RightLabel{$\TENSOR$}
  \BinaryInfC{$\seq \negcontextone, \negcontexttwo, \LABEL{\pfone\TENSOR \pftwo}{\varthree}{\polthree}$}
  \DisplayProof
  &
  \longv{\scriptsize}
  $
  \srvp{\prooftwo}{\negcontextone},\srvp{\proofthree}{\negcontexttwo},
  \srvp{\prooftwo}{\LABEL{\pfone}{\varone}{\polone}}\cup
  \srvp{\proofthree}{\LABEL{\pftwo}{\varone}{\polone}},
    \polone^\prooftwo+\polone^\proofthree
  $
  \\ \vspace{-5pt} & \\
  \longv{\scriptsize}
  \AxiomC{$\prooftwo\pof\seq\nltone_1,\ldots,\nltone_n,\LABEL{\nftwo}{\varone}{\polone}$}
  \AxiomC{$\nlttwo_i\SUBTYPE\SUM_{y<q}\nltone_i$}
  \RightLabel{!}
  \BinaryInfC{$\seq\nlttwo_1,\ldots,\nlttwo_n,
    \LABEL{!_{\varone<\polone}\nftwo}{\vartwo}{\poltwo}$}
  \DisplayProof
  &
  \longv{\scriptsize}
  $
  \srvp{\prooftwo}{\nltone_1}\cup\{\vartwo_1\},\ldots,\srvp{\prooftwo}{\nltone_n}\cup\{\vartwo_n\},\polone\cdot\polone^\prooftwo+
    \vartwo_1+\ldots+\vartwo_n
  $
  \\ \vspace{-5pt} & \\
  \longv{\scriptsize}
  \AxiomC{$\prooftwo\pof\seq \negcontextone, \LABEL{\subst{\pfone}{\vartwo}{0}}{\varone}{\subst{\polone}{y}{0}}$}
  \AxiomC{$\nltone\SUBTYPE\LABEL{?_{\varone<\polone}\pfone}{\vartwo}{1}$}
  \RightLabel{$?d$}
  \BinaryInfC{$\seq \negcontextone,\nltone$}
  \DisplayProof
  &
  \longv{\scriptsize}
  $
  \srvp{\prooftwo}{\negcontextone},\srvp{\prooftwo}{\LABEL{\subst{\pfone}{\vartwo}{0}}{\varone}{\subst{\polone}{y}{0}}}\cup\{\vartwo\},
    \polone^\prooftwo+\vartwo
  $
  \\ \vspace{-5pt} & \\
  \longv{\scriptsize}
  \AxiomC{$\prooftwo\pof\seq \contextone \vphantom{\LABEL{\nfone}{\varone}{\polone}}$}
  \RightLabel{$?w$}
  \UnaryInfC{$\seq \contextone,\nltone$}
  \DisplayProof
  &
  \longv{\scriptsize}
  $
  \srvp{\prooftwo}{\contextone},\{\vartwo\}
  $
  \\ \vspace{-5pt} & \\
  \AxiomC{$\prooftwo\pof\seq \contextone,\nltone,\nlttwo$}
  \AxiomC{$\nltthree\SUBTYPE\nltone\contsum\nlttwo$}
  \RightLabel{$?c$}
  \BinaryInfC{$\seq \contextone,\nltthree$}
  \DisplayProof
  &
  \longv{\scriptsize}
  $
  \srvp{\prooftwo}{\contextone},\srvp{\prooftwo}{\nltone}\cup\srvp{\prooftwo}{\nlttwo}\cup\{\vartwo\}
  $
  \\ \vspace{-5pt} & \\
  \longv{\scriptsize}
  \AxiomC{$\prooftwo\pof\seq \contextone$}
  \RightLabel{$\bot$}
  \UnaryInfC{$\seq \contextone,\LABEL{\bot}{\varone}{\polone}$}
  \DisplayProof
  &
  \longv{\scriptsize}
  $
  \srvp{\prooftwo}{\contextone},\{\vartwo\},\polone^\prooftwo+\vartwo
  $
  \\ \vspace{-5pt} & \\
  \longv{\scriptsize}
  \AxiomC{$\vphantom{\seq \contextone}$}
  \RightLabel{$1$}
  \UnaryInfC{$\seq \LABEL{1}{\varone}{\polone}$}
  \DisplayProof
  &
  \longv{\scriptsize}
  $
  \emptyset,0
  $
  \\ \vspace{-5pt} & \\ \hline\hline
\end{tabular}
\vspace{5pt}
\caption{Pre-weights for Proofs.}\label{figure:preweights}
\end{figure*}
Please notice how any negative formula $\nltone$ in the conclusion of $\proofone$ is associated with some fresh variables, each
accounting for the application of a rule to it. When $\nltone$ is then applied to a cut, all these
variables are set to $1$. 
}
\shortv
{
The pre-weight of a proof $\proofone$ is defined by induction on the structure of $\proofone$ (see~\cite{EV} for more details). The idea is that
every occurrence of negative formulas is attributed a fresh variable, which later is instantiated with either $0$ (if the
formula is passive) or $1$ (if it is active).
}
This allows to discriminate between the case in which rules can ``produce'' time
complexity along the cut-elimination, and the case in which they do not. Ultimately, this leads to:
\begin{lemma}\label{lemma:monotone}
If $\proofone\commeq\prooftwo$, then $\poltwo^\proofone=\poltwo^\prooftwo$.
If $\proofone\redcutelimarr\prooftwo$, then $\poltwo^\prooftwo\reslt\poltwo^\proofone$. 
\end{lemma}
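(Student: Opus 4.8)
The plan is to treat the two statements separately, establishing first that the equivalence $\commeq$ leaves the weight untouched, and then that every genuine reduction step in $\redcutelimarr$ makes it strictly smaller. Since a step $\proofone\redcutelimarr\prooftwo$ unfolds, via Lemma~\ref{lemma:logcut}, into $\proofone\commeq\proofthree\logred\prooftwo$ for a suitable $\proofthree$ subject to the passivity and externality restrictions, the second statement follows from the invariance of the weight under $\commeq$ together with a strict decrease under each admissible $\logred$ step. Both halves are proved by a case analysis mirroring the inductive definition of the pre-weight in Figure~\ref{figure:preweights}: recall that every negative formula occurrence is tagged with a fresh resource variable, that an application of $\CUT$ instantiates to $1$ all the variables decorating the two cut formulas, and that the weight $\poltwo^\proofone$ is obtained from the pre-weight polynomial $\polone^\proofone$ by setting to $0$ the variables of the conclusion formulas.

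For the first statement I would run through the generators of $\commeq$, i.e.\ the situations in which a $\CUT$ is permuted past the rule introducing one of the formulas of its context (the $\PAR$ case is worked out explicitly in the text, and $\TENSOR$, $\bot$, $1$, $?w$, $?c$, $?d$, $!$ and the commutation of two cuts are analogous). In each case one computes the two pre-weights and checks that they differ only in the placement of the freshly introduced variables: such a variable is either attached to a formula that survives in the conclusion, and is therefore set to $0$ on both sides when passing to the weight, or it decorates the formula on which the cut acts, in which case it is set to $1$ on both sides. Because the variable sets involved are disjoint, the order of the zeroing and of the instantiation-to-$1$ is irrelevant, so $\polone^\proofone$ and $\polone^\prooftwo$ collapse to the same polynomial after substitution, i.e.\ $\poltwo^\proofone=\poltwo^\prooftwo$.

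For the second statement I would go through the logical reductions of Figure~\ref{fig:logcutelim}. In the multiplicative case the $\PAR$ rule contributes a summand $+\vartwo$ to the pre-weight and its principal formula is exactly the one being cut, so $\vartwo$ is forced to $1$; after the step the single cut is replaced by two cuts on the immediate subformulas and this summand has vanished, giving a strict decrease by at least $1$. The $\bot/1$ cut is handled identically, through the fresh variable of the $\bot$ rule. For dereliction one uses Lemma~\ref{lem:subst} to specialise the box body at $0$ and Lemma~\ref{lem:tensornew} (together with Lemma~\ref{lem:subtyping}) to see that opening the box replaces the factor $\polone\cdot\polone^\prooftwo$ by a single instance of the body, while the $?d$ rule again donates a $+\vartwo$ that is set to $1$ and then removed; both effects point downward. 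The contraction and box-digging cases are where the real work lies: here the right-hand $\TENSOR$-tree is duplicated, so one invokes Splitting (Lemma~\ref{lem:splitting}) and Parametric Splitting (Lemma~\ref{Lem:parsplitting}) to produce the two copies, whose resource polynomials $\polthree,\polfour$ satisfy $\polthree+\polfour\resleq\polone$ and whose contexts satisfy $\negcontexttwo\contsum\negcontextthree\SUPTYPE\negcontextone$. The passivity hypothesis guarantees that the negative occurrences of the duplicated subproof are all passive, hence tagged by conclusion variables that are zeroed in the weight, so that the combined weight of the two copies does not exceed that of the original, and the variable released by the eliminated structural rule yields the remaining strict gap.

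The main obstacle is precisely this last point. Whereas in the multiplicative and dereliction cases the strict decrease is visible locally as the disappearance of a single fresh variable, in the duplicating cases one must rule out that copying a subproof \emph{increases} the weight. This is exactly the phenomenon the resource polynomials of Bounded Linear Logic are designed to control: the content of Lemma~\ref{lem:splitting} is that the polynomial $\polone$ decorating a $\TENSOR$-tree partitions as $\polthree+\polfour$ across the two copies, so that no resource is created by the duplication, and the externality and passivity restrictions built into $\redcutelimarr$ are what make the accompanying structural variables passive and therefore harmless. Verifying that these bounds combine into a genuinely strict inequality $\poltwo^\prooftwo\reslt\poltwo^\proofone$, uniformly over all admissible cuts, is the technically delicate heart of the argument and the place where the strengthening over the original $\BLL$ soundness theorem is concentrated.
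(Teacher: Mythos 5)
Your proposal is correct and follows essentially the same route as the paper, which in fact only sketches this proof: the decomposition of a $\redcutelimarr$ step into a $\commeq$-rearrangement followed by a $\logred$ step, the rule-by-rule bookkeeping of the fresh variables (zeroed when passive, set to $1$ when consumed by a cut), and, for the duplicating cases, the combination of Splitting with the observation that the residual structural rules on the $\TENSOR$-tree's context are passive and hence weight-null --- which is precisely the one idea the paper states explicitly. The step you rightly single out as delicate, namely that partitioning the label polynomial via Lemma~\ref{lem:splitting} yields a corresponding partition of the \emph{weights} of the two copies, is also left implicit by the paper, so your account is faithful to (and no less detailed than) the argument the authors intend.
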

The main idea behind Lemma~\ref{lemma:monotone} is that even if the logical cut we perform when
going from $\proofone$ to $\prooftwo$ is ``dangerous'' (e.g. a contraction) \emph{and} the involved
$\otimes$-tree is not closed, the residual negative rules have null weight, because they are passive.

We can conclude that:
\begin{theorem}[Polystep Soundness]\label{theo:polystepbllp}
For every proof $\proofone$, if $\proofone\redcutelimarr^n\prooftwo$, then $n\leq\poltwo^\proofone$.
\end{theorem}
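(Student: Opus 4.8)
The plan is to prove Theorem~\ref{theo:polystepbllp} by a straightforward induction on the length $n$ of the reduction sequence $\proofone\redcutelimarr^n\prooftwo$, using Lemma~\ref{lemma:monotone} as the essential engine. The key observation is that the weight $\poltwo^\proofone$ is a resource polynomial (hence takes non-negative integer values when its free variables are instantiated with naturals), and that each $\redcutelimarr$ step \emph{strictly} decreases it in the sense of $\reslt$.

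First I would fix the correct reading of the statement: the inequality $n\leq\poltwo^\proofone$ is an inequality between a natural number $n$ and the resource polynomial $\poltwo^\proofone$, understood to hold for every instantiation of the free resource variables (equivalently, $\poltwo^\proofone - n$ is itself a resource polynomial, in the sense of $\resleq$ used throughout). The base case $n=0$ is immediate since $0\leq\poltwo^\proofone$ always holds, resource polynomials being non-negative. For the inductive step, suppose $\proofone\redcutelimarr^{n+1}\prooftwo$. By Lemma~\ref{lemma:logcut}, the first reduction step is realized by first applying a sequence of commutative conversions $\commeq$ to expose a logical cut and then firing it with $\logred$; write this as $\proofone\commeq\proofthree\logred\prooffour\redcutelimarr^n\prooftwo$. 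By the first clause of Lemma~\ref{lemma:monotone}, commutative conversions preserve the weight, so $\poltwo^\proofone=\poltwo^\proofthree$. By the second clause, the single $\redcutelimarr$ step (i.e. $\proofthree\logred\prooffour$, which is a special cut) strictly decreases the weight: $\poltwo^\prooffour\reslt\poltwo^\proofthree$. Since both weights are resource polynomials and the decrease is strict, we obtain $\poltwo^\prooffour\resleq\poltwo^\proofthree-1=\poltwo^\proofone-1$. Applying the induction hypothesis to $\prooffour\redcutelimarr^n\prooftwo$ gives $n\leq\poltwo^\prooffour$, and combining the two inequalities yields $n+1\leq\poltwo^\prooffour+1\leq\poltwo^\proofone$, as required.

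The main obstacle I anticipate is not the inductive bookkeeping but justifying that $\reslt$ can be promoted to a decrease by at least $1$, i.e. that $\poltwo^\prooffour\reslt\poltwo^\proofthree$ entails $\poltwo^\prooffour+1\resleq\poltwo^\proofthree$. This requires knowing that weights are \emph{integer-valued} resource polynomials (so that a strict difference is a resource polynomial with constant term at least $1$, uniformly in the free variables); this follows from the fact that weights are built from the pre-weights via sums, products and substitutions of binomial coefficients, all of which are resource polynomials taking natural-number values. A secondary subtlety is ensuring that Lemma~\ref{lemma:logcut} genuinely decomposes each $\redcutelimarr$ step in the claimed form; here one uses that $\redcutelimarr$ is defined as the special-cut restriction of $\cutelimarr$, and that the commutations used to expose the logical cut are exactly the $\commeq$ equivalences, which leave the weight untouched. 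Once these two points are in place, the theorem follows directly by the induction outlined above.
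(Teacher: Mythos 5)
Your proof is correct and follows exactly the route the paper intends (the paper in fact states the theorem as an immediate consequence of Lemma~\ref{lemma:monotone}, with no further argument): induction on $n$, weight invariance under $\commeq$, and strict weight decrease under each $\redcutelimarr$ step. The only imprecision is your parenthetical claim that a non-zero resource polynomial has ``constant term at least $1$ uniformly'' (e.g.\ $\left({x \atop 1}\right)$ vanishes at $x=0$); the unit decrease really comes from reading $\reslt$ as $\polone+1\resleq\poltwo$, which is what the pre-weight definitions deliver since the eliminated rules contribute an active variable instantiated to $1$.
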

In a sense, then, the weight of any proof $\proofone$ is a resource polynomial which can be easily computed
from $\proofone$ \longv{(rules in \longv{Figure~\ref{figure:preweights}}\shortv{\cite{EV}} are anyway inductively defined)} but which is also
an upper bound on the number of logical cut-elimination steps separating $\proofone$ from its normal form.
Please observe that $\poltwo^\proofone$ continues to be such an upper bound even if any natural number 
is substituted for any of its free variables, \remove{as }an easy consequence of Lemma~\ref{lem:subst}.

Why then, are we talking about \emph{polynomial} bounds? In $\BLL$, and as a consequence also in 
$\BLLP$, one can write programs in such a way that the size of the input is reflected by a resource variable
occurring in its type. 
\shortv{
As an example, the type of (Church encodings of) binary strings of length at most $\varone$ 
could be the following in $\BLLP$:
$$
(\COATOMONE\arrow{}{1}\COATOMONE)\arrow{}{\varone}(\COATOMONE\arrow{}{1}\COATOMONE)\arrow{}{\varone}(\COATOMONE\arrow{}{1}\COATOMONE)
$$
(where $\nfone\arrow{}{\polone}\nftwo$ stands for $?_\polone\NOT{\nfone}\PAR\nftwo$). The weight, then, turns out to be a tool 
to study the behavior of terms seen as functions taking arguments of varying length. A more in-depth discussion about 
these issues is outside the scope of this paper.} 
Please refer to \cite{GSS92TCS}.
\section{A Type System for the $\lambda\mu$-Calculus}
We describe here a version of the $\lambda\mu$-calculus as introduced by de Groote~\cite{DBLP:conf/caap/Groote94}. Terms are as follows
$$
\termone,\termtwo\;::=\ \lvarone\midd\lambda\lvarone.\termone\midd\mu\mvarone.\termone\midd\ [\mvarone]\termone\midd(\termone)\termone,
$$
where $\lvarone$ and $\mvarone$ range over two infinite disjoint sets of variables (called $\lambda$-variables and $\mu$-variables, respectively).
In contrast with the $\lambda\mu$-calculus as originally formulated by Parigot~\cite{Parigot}, $\mu$-abstraction is not restricted to 
terms of the form $[\mvarone]\termone$ here.
\subsection{Notions of Reduction}
The reduction rules we consider are the following 
ones:
\newcommand{\tow}{\to_{\mathsf{w}}}
\newcommand{\toh}{\to_{\mathsf{h}}}
\newcommand{\lsubst}[3]{{#1}[{}^{#3}/{}_{#2}]}
\begin{align*}
  (\lambda x.\termone)\termtwo&\to_\beta \termone[{}^\termtwo/{}_x]; & & &
  (\mu \alpha.\termone)\termtwo&\to_\mu \mu\alpha.\termone[{}^{[\alpha](\termthree)\termtwo}/{}_{[\alpha]\termthree}]; & & &
  \mu\alpha.[\alpha]\termone&\to_\theta \termone; 
\end{align*}
where, as usual, $\to_\theta$ can be fired only if $\muvarone \not \in \FV{\termone}$. 
In the following, $\to$ is just $\to_{\beta\mu\theta}$. In so-called \emph{weak reduction}, 
denoted $\tow$, reduction simply cannot take place in the scope of binders,
while \emph{head reduction}, denoted $\toh$, is a generalization of the same concept from
pure $\lambda$-calculus~\cite{de1998environment}. Details are in Figure~\ref{fig:whred}.
\begin{figure*}
\fbox{
\begin{minipage}{.98\textwidth}
\vspace{5pt}
\begin{center}
  \AxiomC{$\termone\to\termtwo$}
  \UnaryInfC{$\termone\tow\termtwo$}
  \DisplayProof
  \hspace{1pt}
  \AxiomC{$\termone\tow\termtwo$}
  \UnaryInfC{$\termone\termthree\tow\termtwo\termthree$}
  \DisplayProof
  \hspace{1pt}
  \AxiomC{$\termone\tow\termtwo$}
  \UnaryInfC{$[\alpha]\termone\tow[\alpha]\termtwo$}
  \DisplayProof
  \longv{\\ \ \vspace{5pt} \\}\shortv{\hspace{1pt}}
  \AxiomC{$\termone\tow\termtwo$}
  \UnaryInfC{$\termone\toh\termtwo$}
  \DisplayProof
  \hspace{1pt}
  \AxiomC{$\termone\toh\termtwo$}
  \UnaryInfC{$\lambda\varone.\termone\toh\lambda\varone.\termtwo$}
  \DisplayProof
  \hspace{1pt}
  \AxiomC{$\termone\toh\termtwo$}
  \UnaryInfC{$\mu\alpha.\termone\toh\mu\alpha.\termtwo$}
  \DisplayProof
\end{center}
\vspace{-3pt}
\end{minipage}}
\caption{Weak and Head Notions of Reduction}\label{fig:whred}
\end{figure*}
Please notice how in head reduction, redexes can indeed be fired even if they lie in the scope of $\lambda$-or-$\mu$-abstractions, which,
however, cannot themselves be involved in a redex. This harmless restriction, which corresponds to taking the \emph{outermost} reduction
order, is needed for technical reasons that will become apparent soon.
\subsection{The Type System}
Following Laurent~\cite{laurent2003polarized}, types are just negative formulas. Not all of them can be used as types, however: in particular,
$\nfone\PAR\nftwo$ is a legal type only if $\nfone$ is in the form $?_{\varone<\polone}\NOT{\nfthree}$, and we use the following
abbreviation in this case: $\nfone \arrow{\varone}{\polone} \nftwo\equiv (?_{\varone<\polone}\NOT{\nfone})\PAR \nftwo$.
In particular, if $\nftwo$ is $\bot$ then
$\nfone \arrow{\varone}{\polone} \bot$ can be abbreviated as $\pneg{\varone}{\polone}{\nfone}$. 
\emph{Typing formulas} are negative formulas which are either $\bot$, or $\COATOMONE$, 
or in the form $\nfone \arrow{\varone}{\polone}\nftwo$ (where $\nfone$ and $\nftwo$
are typing formulas themselves). 
A \emph{modal formula} is one in the form $?_{\varone<\polone}\NOT{\nfone}$ (where $\nfone$ is a typing formula).
Please observe that all the constructions from Section~\ref{sect:polform} (including
labellings, sums, etc.) easily apply to typing formulas. Finally, we use the following
abbreviation for labeled modal formulas:
$\DLABEL{\vartwo}{\poltwo}{\nfone}{\varone}{\polone} \equiv \LABEL{?_{\vartwo<\poltwo}\NOT{\nfone}}{\varone}{\polone}$.

A \emph{typing judgement} is a statement in the form 
$\tj{\contextone}{\termone}{\nltone}{\contexttwo}$, where:
\begin{varitemize} 
\item
  $\contextone$ is a context assigning labelled modal formulas to
  $\lambda$-variables;
\item
  $\termone$ is a $\lambda\mu$-term;
\item
  $\nltone$ is a typing formula;
\item
  $\contexttwo$ is a context assigning labelled typing formulas to 
  $\mu$-variables.
\end{varitemize}
The way typing judgments are defined allows to see them as $\BLLP$ sequents. This way, again, various concepts
from Section~\ref{sect:sequents} can be lifted up from sequents to judgments, and this remarkably includes the subtyping relation
$\SUBTYPE$. 

Typing rules are in Figure~\ref{fig:typingadditive}.
\begin{figure*}
  \fbox{\shortv{\scriptsize}
  \begin{minipage}{.98\textwidth}
  \centering
    \ \vspace{10pt}
    \\
    \AxiomC{$1\resleq \polone, \subst{\polthree}{\vartwo}{0}\resleq \poltwo, \nftwo \SUBTYPE \subst{\nfone}{\vartwo}{0}$}
    \RightLabel{\textsf{var}}
    \UnaryInfC{$\contextone, \varone: \DLABEL{\varthree}{\polthree}{\nfone}{\vartwo}{\polone} \seq \varone: \LABEL{\nftwo}{\varthree}{\poltwo} \mid\contexttwo$}
    \DisplayProof
    \qquad
    \AxiomC{$\contextone, \varone : \DLABEL{\varthree}{\polfour}{\nfone}{\vartwo}{\polone} \seq \termone: \LABEL{\nftwo}{\vartwo}{\poltwo}\mid\contexttwo$}
    \AxiomC{$\polthree\resgeq \poltwo, \polthree\resgeq \polone$}
    \RightLabel{\textsf{abs}}
    \BinaryInfC{$\contextone \seq \lambda \varone.\termone: \LABEL{\nfone \arrow{\varthree}{\polfour}\nftwo}{\vartwo}{\polthree}\mid\contexttwo$}
    \DisplayProof
    \\
    \ \vspace{10pt}
    \\
    \AxiomC{$\contextthree \seq \termone : \LABEL{\nfone \arrow{\varone}{\polone} \nftwo}{\vartwo}{\poltwo}\mid\contextfive$}
    \AxiomC{$\contextfour \seq \termtwo : \LABEL{\nfone}{\varone}{\polone}\mid\contextsix$}
    \AxiomC{\parbox{120pt}{
        $
        \polnine \resgeq \poltwo\qquad \polten \resgeq \poltwo\\
        \contextone \SUBTYPE \contextthree \contsum \contextseven\qquad\contextseven \SUBTYPE \SUM_{\varnine<\polnine} \contextfour\\
        \contexttwo \SUBTYPE \contextfive \contsum \contexteight\qquad\contexteight \SUBTYPE \SUM_{\varnine<\polnine}\contextsix
        $}}
    \RightLabel{\textsf{app}}
    \TrinaryInfC{$\contextone \seq (\termone)\termtwo: \LABEL{\nftwo}{\vartwo}{\polten}\mid\contexttwo$}
    \DisplayProof
    \\
    \ \vspace{10pt}
    \\
    \AxiomC{$\contextone \seq \termone : \nltone\mid\muvarone: \nlttwo, \contexttwo$}
    \AxiomC{$\nltthree\SUBTYPE\nltone\contsum\nlttwo$}
    \RightLabel{$\mu$\textsf{-name}}
    \BinaryInfC{$\contextone \seq [\muvarone]\termone: \LABEL{\bot}{\varthree}{\poltwo}\mid\muvarone: \nltthree, \contexttwo$}
    \DisplayProof
    \qquad
    \AxiomC{$\contextone \seq \termone: \LABEL{\bot}{\varthree}{\poltwo}\mid\muvartwo: \nltone, \contexttwo$}
    \RightLabel{$\mu$\textsf{-abs}}
    \UnaryInfC{$\contextone \seq \mu \muvartwo \termone: \nltone\mid\contexttwo$}
    \DisplayProof
    \\
    \ \vspace{10pt}
  \end{minipage}}
\longv{\caption{(Additive) Type Assignment Rules}}\shortv{\caption{Type Assignment Rules}}\label{fig:typingadditive}
\end{figure*}
The typing rule for applications, in particular, can be seen as overly complicated. In fact, all premises except the first two are
there to allow the necessary degree of malleability for contexts, without which even subject reduction would be in danger.
Alternatively, one could consider an explicit subtyping rule, the price being the loss of syntax directness. Indeed, all malleability results
from Section~\ref{sect:malleability} can be transferred to the just defined type assignment system.
\subsection{Subject Reduction and Polystep Soundness}
The aim of this Section is to show that \emph{head} reduction preserves types, and as a corollary, that the number of reduction
steps to normal form is bounded by a polynomial, along the same lines as in Theorem~\ref{theo:polystepbllp}. Actually, the latter will
easily follow from the former, because so-called Subject Reduction will be formulated (and in a sense proved) with a precise correspondence
between type derivations and proofs in mind.

In order to facilitate this task, Subject Reduction is proved on a modified type-assignment system, called $\BLLPLMM$ which 
can be proved equivalent to $\BLLPLM$. The only fundamental difference between the two systems lies in how structural rules, 
i.e., contraction and weakening, are reflected into the type system. As we have already noticed, $\BLLPLM$ has an 
\emph{additive} flavour, since structural rules are implicitly applied in binary and $0$-ary typing rules. This, in 
particular, makes the system syntax directed and type derivations more compact. The only problem with this approach is 
that the correspondence between type derivations and proofs is too weak to be directly lifted to a dynamic level 
(e.g., one step in $\toh$ could correspond to possibly many steps in $\redcutelimarr$). In $\BLLPLMM$, on the contrary, 
structural rules are explicit, and turns it into a useful technical tool to prove properties of $\BLLPLM$. \shortv{The
rules of $\BLLPLMM$ are in~\cite{EV}.}

\longv{
$\BLLPLMM$'s typing judgments are precisely the ones of $\BLLPLM$. What changes are typing rules, which are in Figure~\ref{fig:typingmultiplicative}.
\begin{figure*}
\centering
    \fbox{
      \begin{minipage}{.98\textwidth}
        \begin{center}
          \ \vspace{5pt} \\
          \AxiomC{$1\resleq \polone, \subst{\polthree}{\vartwo}{0}\resleq \poltwo, \nftwo \SUBTYPE \subst{\nfone}{\vartwo}{0}$}
          \RightLabel{\textsf{var}}
          \UnaryInfC{$\varone: \DLABEL{\varthree}{\polthree}{\nfone}{\vartwo}{\polone} \seq \varone: \LABEL{\nftwo}{\varthree}{\poltwo} |$}
          \DisplayProof
          \qquad
          \AxiomC{$\contextone, \varone : \DLABEL{\varthree}{\polfour}{\nfone}{\vartwo}{\polone} \seq \termone: \LABEL{\nftwo}{\vartwo}{\poltwo} | \contexttwo$}
          \AxiomC{$\polthree\resgeq \poltwo, \polthree\resgeq \polone$}
          \RightLabel{\textsf{abs}}
          \BinaryInfC{$\contextone \seq \lambda \varone.\termone: \LABEL{\nfone \arrow{\varthree}{\polfour}\nftwo}{\vartwo}{\polthree} | \contexttwo$}
          \DisplayProof
          \\
          \ \vspace{10pt}
          \\
          \AxiomC{$\contextone \seq \termone : \LABEL{\nfone \arrow{\varone}{\polone} \nftwo}{\vartwo}{\poltwo} | \contexttwo$}
          \AxiomC{$\contextthree \seq \termtwo : \LABEL{\nfone}{\varone}{\polone} | \contextfour$}
          \AxiomC{\parbox{2.6cm}{
              $
              \polnine \resgeq \poltwo, \polten \resgeq \poltwo,\\
              \contextfive \SUBTYPE \SUM_{\varnine<\polnine} \contextthree,\\
              \contextsix \SUBTYPE \SUM_{\varnine<\polnine}\contextfour
              $}}
          \RightLabel{\textsf{app}}
          \TrinaryInfC{$\contextone, \contextfive \seq (\termone)\termtwo: \LABEL{\nftwo}{\vartwo}{\polten} | \contexttwo, \contextsix$}
          \DisplayProof
          \\
          \ \vspace{10pt}
          \\
          \AxiomC{$\contextone \seq \termone : \nltone | \contexttwo$}
          \RightLabel{$\mu$\textsf{-name}}
          \UnaryInfC{$\contextone \seq [\muvarone]\termone: \LABEL{\bot}{\varthree}{\poltwo} | \muvarone: \nltone, \contexttwo$}
          \DisplayProof
          \qquad
          \AxiomC{$\contextone \seq \termone: \LABEL{\bot}{\varthree}{\poltwo} | \muvartwo: \nltone, \contexttwo$}
          \RightLabel{$\mu$\textsf{-abs}}
          \UnaryInfC{$\contextone \seq \mu \muvartwo \termone: \nltone | \contexttwo$} 
          \DisplayProof
          \\
          \ \vspace{10pt}
          \\
          \AxiomC{$\contextone \seq \termone:\nltone \mid \contexttwo$}
          \RightLabel{$?w^\lambda$}
          \UnaryInfC{$\contextone, \vartwo:\nlttwo \seq \termone:\nltone \mid \contexttwo$}
          \DisplayProof
          \qquad
          \AxiomC{$\contextone, \varone: \nltone, \vartwo: \nlttwo \seq \termone: \nltfour \mid \contexttwo$}
          \AxiomC{$\nltthree \SUBTYPE \nltone \contsum \nlttwo$}
          \RightLabel{$?c^\lambda$}
          \BinaryInfC{$\contextone, \varthree: \nltthree \seq \subst{\subst{\termone}{\varone}{\varthree}}{\vartwo}{\varthree}:\nltfour \mid \contexttwo$}
          \DisplayProof
          \\
          \ \vspace{10pt}
          \\
          \AxiomC{$\contextone \seq \termone:\nltone \mid \contexttwo$}
          \RightLabel{$?w^\mu$}
          \UnaryInfC{$\contextone \seq \termone:\nltone \mid \contexttwo, \muvarone:\nlttwo $}
          \DisplayProof
          \qquad
          \AxiomC{$\contextone \seq \termone: \nltfour \mid \contexttwo, \muvarone: \nltone, \muvartwo: \nlttwo$}
          \AxiomC{$\nltthree \SUBTYPE \nltone \contsum \nlttwo$}
          \RightLabel{$?c^\mu$}
          \BinaryInfC{$\contextone\seq \subst{\subst{\termone}{\muvarone}{\muvarthree}}{\muvartwo}{\muvarthree}: \nltfour \mid \contexttwo, \muvarthree: \nltthree $}
          \DisplayProof\\
          \ \vspace{5pt} \\ 
        \end{center}
    \end{minipage}}
\caption{(Multiplicative) Type Assignment Rules}\label{fig:typingmultiplicative}
\end{figure*}}
Whenever derivability in one of the system needs to be distinguished from derivability on the
other, we will put the system's name in subscript position (e.g. $\tjp{\contextone}{\BLLPLMM}{\termone}{\nltone}{\contexttwo}$).
Not so surprisingly, the two $\BLLPLM$ and $\BLLPLMM$ type exactly the same class of terms:
\begin{lemma}
$\tjp{\contextone}{\BLLPLMM}{\termone}{\nltone}{\contexttwo}$ iff $\tjp{\contextone}{\BLLPLM}{\termone}{\nltone}{\contexttwo}$
\end{lemma}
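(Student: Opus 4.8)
The plan is to prove the two implications separately, each by induction on the structure of the given derivation. The whole argument rests on two auxiliary facts that I would establish first: that subtyping, in the sense of Lemma~\ref{lem:subtyping} lifted to typing judgements, is admissible in \emph{both} systems, and that the four explicit structural rules of $\BLLPLMM$ (namely $?w^\lambda,?c^\lambda,?w^\mu,?c^\mu$) are \emph{admissible} in $\BLLPLM$.

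\textbf{From $\BLLPLMM$ to $\BLLPLM$.} I would first show that $\BLLPLM$ is closed under the four structural rules. Weakening is immediate: the additive \textsf{var} rule already carries an arbitrary surrounding context $\contextone\mid\contexttwo$, and every other additive rule threads the ambient context through, so an unused labelled modal assumption can always be adjoined. Contraction is the substantial point: I would prove that from $\tjpA{\contextone,\varone:\nltone,\vartwo:\nlttwo}{\termone}{\nltfour}{\contexttwo}$ and $\nltthree\SUBTYPE\nltone\contsum\nlttwo$ one obtains $\tjpA{\contextone,\varthree:\nltthree}{\subst{\subst{\termone}{\varone}{\varthree}}{\vartwo}{\varthree}}{\nltfour}{\contexttwo}$, which follows by induction once one observes that the additive binary rules (\textsf{app}, $\mu$\textsf{-name}) already combine contexts by $\contsum$ up to $\SUBTYPE$; the $\mu$-side is symmetric. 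With these admissibilities in hand, the induction on a $\BLLPLMM$ derivation is routine: each multiplicative logical rule is a special case of the corresponding, strictly more permissive, additive rule, and each structural rule is replayed by its admissible additive counterpart.

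\textbf{From $\BLLPLM$ to $\BLLPLMM$.} Here the malleability fused into the additive rules must be reconstructed using explicit contractions and weakenings together with the admissibility of subtyping in $\BLLPLMM$ (the analogue of Lemma~\ref{lem:subtyping}, by an easy induction). The additive \textsf{var} rule is handled by deriving the minimal multiplicative axiom and re-introducing the surplus context via $?w^\lambda$ and $?w^\mu$. The delicate case is \textsf{app}: the additive side conditions $\contextone\SUBTYPE\contextthree\contsum\contextseven$ and $\contextseven\SUBTYPE\SUM_{\varnine<\polnine}\contextfour$ (and dually for the $\mu$-contexts) amalgamate two distinct phenomena, the \emph{sharing} of $\lambda$- and $\mu$-variables between function and argument, and the \emph{scaling} of the argument's context by the bounded sum $\SUM_{\varnine<\polnine}$. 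In $\BLLPLMM$ these must be separated. I would apply the induction hypothesis to the two premises, then use the multiplicative \textsf{app} rule, which performs exactly the $\SUM_{\varnine<\polnine}$ scaling on the argument's contexts, to reach $\contextone,\contextfive\mid\contexttwo,\contextsix$, and finally fold the shared assumptions together with a sequence of $?c^\lambda$ and $?c^\mu$ steps, closing the gap to the target context by subtyping and, where a variable occurs on only one side, by weakening.

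\textbf{Main obstacle.} The crux is precisely the \textsf{app} case of the second direction: matching the single additive side condition against the separate scaling-then-contracting performed in $\BLLPLMM$, while tracking the resource polynomials and labels through $\contsum$ and the bounded sums $\SUM_{\varnine<\polnine}$. Verifying that the conditions $\polnine\resgeq\poltwo$ and $\polten\resgeq\poltwo$ are preserved, and that the labelled-formula sum and bounded-sum constructions of Section~\ref{sect:polform} commute appropriately with substitution (Lemma~\ref{lem:subst}), is where the real bookkeeping lies; everything else is bureaucratic induction.
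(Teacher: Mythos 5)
Your proposal is correct and follows the same route as the paper: the $\BLLPLMM\Rightarrow\BLLPLM$ direction via admissibility of weakening and contraction in the additive system, and the converse by deriving the additive $\textsf{var}$ and $\textsf{app}$ rules multiplicatively with explicit structural rules. The paper states this in one line; your elaboration of the $\textsf{app}$ case (separating sharing from bounded-sum scaling) is exactly the bookkeeping the paper leaves implicit.
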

\begin{proof}
The left-to-right implication follows from weakening and contraction lemmas for $\BLLPLM$, which are easy to prove. The
right-to-left implication is more direct, since additive $\textsf{var}$ and $\textsf{app}$ are multiplicatively derivable.\shortv{\qed}
\end{proof}

Given a $\BLLPLMM$ type derivation $\proofone$, one can define a $\BLLP$ proof $\pfd{\proofone}$ \longv{following the rules in Figure~\ref{fig:mapping},
which work by induction on the structure of $\proofone$.
\begin{figure*}
 \centering
    {\footnotesize
    \begin{tabular}{|c|c|}\hline\hline
      \vspace{-5pt} & \\ 
      $\proofone$ & $\pfd{\proofone}$\\ 
      \vspace{-5pt} & \\ 
      \hline\hline
      \vspace{-1pt} & \\ 
      \AxiomC{$1\resleq \polone, \subst{\polthree}{\vartwo}{0}\resleq \poltwo, \nftwo \SUBTYPE \subst{\nfone}{\vartwo}{0}$}
      \RightLabel{\textsf{var}}
      \UnaryInfC{$\varone: \DLABEL{\varthree}{\polthree}{\nfone}{\vartwo}{\polone} \seq \varone: \LABEL{\nftwo}{\varthree}{\poltwo} |$}
      \DisplayProof
      &
      \AxiomC{}
      \UnaryInfC{$\seq\LABEL{\subst{\NOT{\nfone}}{\vartwo}{0}}{\varthree}{\subst{\polthree}{\vartwo}{0}},\LABEL{\nftwo}{\varthree}{\poltwo}$}
      \UnaryInfC{$\seq\LABEL{?_{\varthree<\polthree}\NOT{\nfone}}{\vartwo}{\polone},\LABEL{\nftwo}{\varthree}{\poltwo}$}
      \DisplayProof
      \\ \vspace{-1pt} & \\
      \AxiomC{$\prooftwo\pof\contextone, \varone : \DLABEL{\varthree}{\polfour}{\nfone}{\vartwo}{\polone} \seq \termone: \LABEL{\nftwo}{\vartwo}{\poltwo} | \contexttwo$}
      \RightLabel{\textsf{abs}}
      \UnaryInfC{$\contextone \seq \lambda \varone.\termone: \LABEL{\nfone \arrow{\varthree}{\polfour}\nftwo}{\vartwo}{\polthree} | \contexttwo$}
      \DisplayProof
      &
      \AxiomC{$\pfd{\prooftwo}\pof\seq\contextone, \LABEL{?_{\varthree<\polfour}\nfone}{\vartwo}{\polone},\LABEL{\nftwo}{\vartwo}{\poltwo},\contexttwo$}
      \UnaryInfC{$\seq\contextone, \LABEL{?_{\varthree<\polfour}\nfone\parr\nftwo}{\vartwo}{\polthree},\contexttwo$}
      \DisplayProof
      \\ \vspace{-1pt} & \\
      \AxiomC{
        \parbox{120pt}{
          \begin{center}
          $\prooftwo\pof\contextone \seq \termone : \LABEL{\nfone \arrow{\varone}{\polone} \nftwo}{\vartwo}{\poltwo} | \contexttwo$\\ 
          $\proofthree\pof\contextthree \seq \termtwo : \LABEL{\nfone}{\varone}{\polone} | \contextfour$\end{center}}}
                \RightLabel{\textsf{app}}
      \UnaryInfC{$\contextone, \contextfive \seq (\termone)\termtwo: \LABEL{\nftwo}{\vartwo}{\polten} | \contexttwo, \contextsix$}
      \DisplayProof
      &
      \AxiomC{$\pfd{\prooftwo}\pof\seq\contextone,\LABEL{?_{\varone<\polone}\NOT{\nfone}\parr\nftwo}{\vartwo}{\poltwo},\contexttwo$}
      \AxiomC{$\pfd{\proofthree}\pof\seq\contextthree,\LABEL{\nfone}{\varone}{\polone},\contextfour$}
      \UnaryInfC{$\seq\contextfive,\LABEL{!_{\varone<\polone}\nfone}{\vartwo}{\polnine},\contextsix$}
      \AxiomC{$\seq\LABEL{\NOT{\nftwo}}{\vartwo}{\polten},\LABEL{\nftwo}{\vartwo}{\polten}$}
      \BinaryInfC{$\seq\contextfive,\LABEL{!_{\varone<\polone}\nfone\otimes\NOT{\nftwo}}{\vartwo}{\poltwo},\contextsix,\LABEL{\nftwo}{\vartwo}{\polten}$}
      \BinaryInfC{$\seq\contextone,\contextfive,\LABEL{\nftwo}{\vartwo}{\polten},\contexttwo,\contextsix$}
      \DisplayProof
      \\ \vspace{-1pt} & \\
      \AxiomC{$\prooftwo\pof\contextone \seq \termone : \nltone | \contexttwo$}
      \RightLabel{$\mu$\textsf{-name}}
      \UnaryInfC{$\contextone \seq [\muvarone]\termone: \LABEL{\bot}{\varthree}{\poltwo} | \muvarone: \nltone, \contexttwo$}
      \DisplayProof
      &
      \AxiomC{$\pfd{\prooftwo}\pof\contextone,\nltone,\contexttwo$}
      \UnaryInfC{$\pfd{\prooftwo}\pof\contextone,\LABEL{\bot}{\varthree}{\poltwo},\nltone,\contexttwo$}
      \DisplayProof
      \\ \vspace{-1pt} & \\
      \AxiomC{$\prooftwo\pof\contextone \seq \termone: \LABEL{\bot}{\varthree}{\poltwo} | \muvartwo: \nltone, \contexttwo$}
      \RightLabel{$\mu$\textsf{-abs}}
      \UnaryInfC{$\contextone \seq \mu \muvartwo \termone: \nltone | \contexttwo$} 
      \DisplayProof
      &
      \AxiomC{$\pfd{\prooftwo}\pof\seq\contextone,\LABEL{\bot}{\varthree}{\poltwo},\nltone,\contexttwo$}
      \AxiomC{$\seq\LABEL{1}{\varthree}{\poltwo}$}
      \BinaryInfC{$\seq\contextone,\nltone,\contexttwo$}
      \DisplayProof
      \\ \vspace{-1pt} & \\
      \AxiomC{$\prooftwo\pof\contextone \seq \termone:\nltone \mid \contexttwo$}
      \RightLabel{$?w^\lambda$}
      \UnaryInfC{$\contextone, \vartwo:\nlttwo \seq \termone:\nltone \mid \contexttwo$}
      \DisplayProof
      &
      \AxiomC{$\pfd{\prooftwo}\pof\seq\contextone,\nltone,\contexttwo$}
      \UnaryInfC{$\seq\contextone,\nlttwo,\nltone,\contexttwo$}
      \DisplayProof
      \\ \vspace{-1pt} & \\
      \AxiomC{$\prooftwo\pof\contextone \seq \termone:\nltone \mid \contexttwo$}
      \RightLabel{$?w^\mu$}
      \UnaryInfC{$\contextone \seq \termone:\nltone \mid \contexttwo, \muvarone:\nlttwo $}
      \DisplayProof
      &
      \AxiomC{$\pfd{\prooftwo}\pof\seq\contextone,\nltone,\contexttwo$}
      \UnaryInfC{$\seq\contextone,\nltone,\nlttwo,\contexttwo$}     
      \DisplayProof
      \\ \vspace{-1pt} & \\
      \AxiomC{$\prooftwo\pof\contextone, \varone: \nltone, \vartwo: \nlttwo \seq \termone: \nltfour \mid \contexttwo$}
      \RightLabel{$?c^\lambda$}
      \UnaryInfC{$\contextone, \varthree: \nltthree \seq \subst{\subst{\termone}{\varone}{\varthree}}{\vartwo}{\varthree}: \nltfour \mid \contexttwo$}
      \DisplayProof
      &
      \AxiomC{$\pfd{\prooftwo}\pof\seq\contextone,\nltone,\nlttwo,\nltfour,\contexttwo$}
      \UnaryInfC{$\seq\contextone,\nltthree,\nltfour,\contexttwo$}     
      \DisplayProof
      \\ \vspace{-1pt} & \\
      \AxiomC{$\prooftwo\pof\contextone \seq \termone: \nltfour \mid \contexttwo, \muvarone: \nltone, \muvartwo: \nlttwo$}
      \RightLabel{$?c^\mu$}
      \UnaryInfC{$\contextone\seq \subst{\subst{\termone}{\muvarone}{\muvarthree}}{\muvartwo}{\muvarthree}: \nltfour \mid \contexttwo, \muvarthree: \nltthree $}
      \DisplayProof
      &
      \AxiomC{$\pfd{\prooftwo}\pof\seq\contextone,\nltfour,\nltone,\nlttwo,\contexttwo$}
      \UnaryInfC{$\seq\contextone,\nltfour,\nltthree,\contexttwo$} 
      \DisplayProof
      \\ \vspace{-1pt} & \\ \hline\hline
    \end{tabular}}
  \vspace{5pt}
\caption{Mapping of (multiplicative) derivations into $\BLLP$ proofs}
\label{fig:mapping}
\end{figure*}}
\shortv{by induction on the structure of $\proofone$, closely following Laurent's translation~\cite{laurent2003polarized}.}
This way one not only gets some guiding principles for subject-reduction, but can also prove that the underlying
transformation process is nothing more than cut-elimination:
\longv{
\begin{lemma}[$\lambda$-Substitution]\label{Lem:lamsub}
If $\proofone\pof\contextone, \varone: \DLABEL{\varthree}{\polfour}{\nfone}{\vartwo}{\polone}\seq \termone: \LABEL{\nftwo}{\vartwo}{\poltwo}\mid\contexttwo$ 
and $\prooftwo\pof\contextthree\seq \termtwo: \LABEL{\nfone}{\varthree}{\polfour}\mid\contextfour$, then for all $\polnine\resleq\poltwo$ there is
$\proofthree_\polnine$ such that
$$
\proofthree_\polnine\pof\contextone, \SUM_{\varnine <\polnine}\seq \subst{\termone}{\varone}{\termtwo}: \LABEL{\nftwo}{\vartwo}{\poltwo}\mid\contexttwo, \SUM_{\varnine< \polnine} \contextfour.
$$
Moreover, the proof obtained by $\polnine$-boxing $\pfd{\prooftwo}$ and cutting it against $\pfd{\proofone}$ is guaranteed to $\redcutelimarr$-reduce
to $\proofthree_\polnine$. 
\end{lemma}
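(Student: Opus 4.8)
The plan is to argue by induction on the structure of the $\BLLPLMM$ derivation $\proofone$, establishing the claim for \emph{every} $\polnine\resleq\poltwo$ simultaneously; this universal quantification over $\polnine$ is exactly what is needed to close the induction in the contraction case below. The guiding intuition is furnished by the mapping $\pfd{\cdot}$ of Figure~\ref{fig:mapping}: $\pfd{\proofone}$ uses the hypothesis on $\varone$ through the negative modal formula $\LABEL{?_{\varthree<\polfour}\NOT{\nfone}}{\vartwo}{\polone}$, while $\polnine$-boxing $\pfd{\prooftwo}$ produces a $\TENSOR$-tree ending in the $!$-rule with outer label $\polnine$. Cutting this box against $\pfd{\proofone}$ therefore creates a cut between a $!$-box and the corresponding $?$-formula, and the whole point is that the cut-elimination steps of Figure~\ref{fig:logcutelim} mirror, rule by rule, the inductive construction of $\proofthree_\polnine$.

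First I would dispose of the base case $\termone=\varone$, where $\proofone$ is an instance of \textsf{var}: then $\subst{\termone}{\varone}{\termtwo}=\termtwo$ and $\proofthree_\polnine$ is $\prooftwo$ adjusted by the Subtyping Lemma (Lemma~\ref{lem:subtyping}) to match $\LABEL{\nftwo}{\vartwo}{\poltwo}$, while on the proof side the cut of the box against $\pfd{\proofone}$ is exactly a \emph{dereliction} cut and $\redcutelimarr$-reduces as in Figure~\ref{fig:logcutelim}. The congruence cases --- \textsf{abs}, \textsf{app}, $\mu$\textsf{-name}, $\mu$\textsf{-abs}, together with the weakening and contraction rules that do not act on $\varone$ --- are routine: $\varone$ occurs in a single premise, so I apply the induction hypothesis there and reassemble the last rule, using a commutation $\commeq$ to push the box-cut inward before firing it. The case where $?w^\lambda$ weakens $\varone$ itself is equally easy: $\termone$ does not contain $\varone$, so $\proofthree_\polnine$ is the premise derivation weakened by $\SUM_{\varnine<\polnine}\contextthree$ and $\SUM_{\varnine<\polnine}\contextfour$, and the box is simply erased by the cut.

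The crux is the contraction rule $?c^\lambda$ acting on $\varone$: here $\proofone$ merges two premise hypotheses $\varfour:\nltone$ and $\varfive:\nlttwo$ into $\varone:\nltthree$ with $\nltthree\SUBTYPE\nltone\contsum\nlttwo$, substituting $\varone$ for both. Since $\polnine$-boxing $\pfd{\prooftwo}$ yields a $\TENSOR$-tree (it ends in the $!$-rule), the Splitting Lemma (Lemma~\ref{lem:splitting}) cuts its conclusion into two boxes with labels $\polnine_1,\polnine_2$ satisfying $\polnine_1+\polnine_2\resleq\polnine$, after the resource shift it prescribes. Substituting $\termtwo$ for $\varfour$ and then for $\varfive$ by two applications of the induction hypothesis --- one per split box, which is precisely where quantifying over all $\polnine$ is indispensable --- gives a derivation that the contraction then recombines, the contexts matching through $\negcontextone\SUBTYPE\negcontexttwo\contsum\negcontextthree$ and Lemma~\ref{lem:subtyping}. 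On the proof side this is the \emph{Contraction} step of Figure~\ref{fig:logcutelim}, which duplicates the box in exactly the same fashion.

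The main obstacle I foresee is concentrated in this contraction case and is twofold. First there is the bounded-sum bookkeeping: I must check that the shifted resource polynomials handed back by Lemma~\ref{lem:splitting} --- and, for occurrences of $\varone$ lying under a box, by Parametric Splitting (Lemma~\ref{Lem:parsplitting}) --- reassemble into exactly the $\SUM_{\varnine<\polnine}\contextthree$ and $\SUM_{\varnine<\polnine}\contextfour$ demanded in the conclusion. Second, and more delicate, is the \emph{dynamic} half of the statement: to see that the box-and-cut proof $\redcutelimarr$-reduces to $\proofthree_\polnine$ I must match each inductive step to its cut-elimination counterpart in Figure~\ref{fig:logcutelim}, up to the commutations $\commeq$ needed to expose logical cuts (Lemma~\ref{lemma:logcut}), and verify that the special-cut side condition of $\redcutelimarr$ is satisfied --- which holds because the duplicated box is always cut against negative occurrences that are passive in the relevant subproof. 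Everything else rests on the malleability results of Section~\ref{sect:malleability} together with the substitution-of-polynomials Lemma~\ref{lem:subst}.
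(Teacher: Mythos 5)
Your overall strategy --- induction on the structure of $\proofone$, with the \textsf{var} case handled as a dereliction cut, the congruence cases by reassembly, and the contraction case via the Splitting Lemma --- is the same as the paper's, and your reading of the dynamic half (matching each inductive step against a $\redcutelimarr$-step on the translated proofs) is also the intended one. But there is a genuine gap exactly where you locate the crux. In the $?c^\lambda$ case you propose to substitute $\termtwo$ ``for $\varfour$ and then for $\varfive$ by two applications of the induction hypothesis.'' The induction is on the structure of $\proofone$; the first application of the hypothesis is legitimately made to the premise of the contraction, but its output --- a derivation of $\subst{\termone'}{\varfour}{\termtwo}$, which contains copies of $\prooftwo$ and is in general strictly larger than the premise --- is not a subderivation of $\proofone$, so the second application is not licensed. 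The universal quantification over $\polnine$, which you present as the strengthening that closes the induction, does not repair this: it is needed only so that the two halves of the split box can carry different (shifted) bounds $\polnine_1,\polnine_2$, and it says nothing about which derivations the hypothesis may be applied to.

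The paper's proof closes the induction differently: the statement is generalized to a \emph{simultaneous} substitution of $\termtwo_1,\ldots,\termtwo_\intfour$ for a list of variables $\varone_1,\ldots,\varone_\intfour$. With that strengthening, the contraction case needs only \emph{one} application of the induction hypothesis to the premise, with the list enlarged by replacing $\varone$ by the two contracted variables $\varfour,\varfive$ (both mapped to the same $\termtwo$); the resulting contexts $\SUM_{\varnine<\polnine_\varsix}\contextthree$ and $\SUM_{\varnine<\subst{\polnine_\varseven}{\varnine}{\varnine+\polnine_\varsix}}\subst{\contextthree}{\varnine}{\varnine+\polnine_\varsix}$ are then recombined by explicit $?c$ rules into $\SUM_{\varnine<\polnine}\contextthree$, which is where the splitting/shifting bookkeeping you anticipate actually lives. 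So the missing idea is not extra quantification over $\polnine$ but the generalization to simultaneous substitution; without it the inductive argument does not go through as you have stated it.
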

\begin{proof}
As usual, this is an induction on the structure of $\proofone$. We only need to be careful and generalize the statement to
the case in which a \emph{simultaneous} substitution for many variables is needed.
\end{proof}
\begin{lemma}[$\mu$-Substitution]\label{Lem:musub}
If $\proofone\pof\contextone \seq \termone: \LABEL{\bot}{\vartwo}{\poltwo}\mid\contexttwo, \muvarone: \LABEL{\nfone\arrow{\varthree}{\polfour} \nftwo}{\vartwo}{\polone}$ and 
$\prooftwo\pof \contextthree \seq \termtwo: \LABEL{\nfone}{\varthree}{\polfour}\mid\contextfour$, then for all $\polnine\resgeq\poltwo$ there is 
$\proofthree_\polnine$ such that
$$
\contextone,\SUM_{\varnine < \polnine}\contextthree\seq\subst{\termone}{[\muvarone]\termfour}{[\muvarone](\termfour)\termtwo}: \LABEL{\bot}{\vartwo}{\poltwo}\mid\contexttwo, \muvarone: \LABEL{\nftwo}{\vartwo}{\polone}, \SUM_{\varnine < \polnine} \contextfour
$$
Moreover, the proof obtained by $\polnine$-boxing $\pfd{\prooftwo}$, tensoring it with an axiom and cutting the result against 
$\pfd{\proofone}$ is guaranteed to $\redcutelimarr$-reduce to $\proofthree_\polnine$.
\end{lemma}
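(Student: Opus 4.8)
The plan is to prove the statement by induction on the structure of the $\BLLPLMM$ derivation $\proofone$, in close analogy with the $\lambda$-Substitution Lemma (Lemma~\ref{Lem:lamsub}). The structural substitution $\subst{\termone}{[\muvarone]\termfour}{[\muvarone](\termfour)\termtwo}$ rewrites each naming $[\muvarone]\termfour$ of $\muvarone$ into $[\muvarone](\termfour)\termtwo$, and at the level of types this is exactly the operation turning the hypothesis $\muvarone:\nfone\arrow{\varthree}{\polfour}\nftwo$ into the conclusion $\muvarone:\nftwo$: at each naming we are now applying the argument $\termtwo$ of type $\nfone$ to a term of function type $\nfone\arrow{\varthree}{\polfour}\nftwo$, leaving the result type $\nftwo$. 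The crucial design point --- without which the induction does not even type-check --- is to generalise the statement so that the named term is allowed an \emph{arbitrary} conclusion type $\nltfour$ in place of $\bot$ (the lemma is then recovered by taking $\nltfour=\bot$), together with the usual provision for simultaneous substitution. The parameter $\polnine\resgeq\poltwo$ records how many boxed copies of $\prooftwo$ are available to be distributed among the (possibly many) occurrences of $\muvarone$, and must be allowed to vary up the derivation.

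The key case is when the last rule is $\mu$\textsf{-name} on $\muvarone$, so that $\termone=[\muvarone]\termfour$ with premise $\prooffour\pof\contextone\seq\termfour:\LABEL{\nfone\arrow{\varthree}{\polfour}\nftwo}{\vartwo}{\polone}\mid\contexttwo$. Here the generalisation to an arbitrary conclusion type is what lets me first apply the induction hypothesis to $\prooffour$ (whose subject $\termfour$ carries a function type, not $\bot$) to deal with the occurrences of $\muvarone$ lying inside $\termfour$; I then feed the resulting proof and $\prooftwo$ into the \textsf{app} rule to type $(\termfour)\termtwo$ with result type $\nftwo$, and finally re-apply $\mu$\textsf{-name} to obtain $[\muvarone](\termfour)\termtwo$ with $\muvarone:\nftwo$. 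On the proof side this is precisely the logical $\TENSOR$-versus-$\PAR$ cut created by the \textsf{app}-clause of the mapping (Figure~\ref{fig:mapping}): the function type $(?_{\varthree<\polfour}\NOT{\nfone})\PAR\nftwo$ carried by $\termfour$ is cut against the positive $(!_{\varthree<\polfour}\nfone)\TENSOR\NOT{\nftwo}$ obtained by $\polnine$-boxing $\pfd{\prooftwo}$ and tensoring it with an axiom on $\NOT{\nftwo}$, with the labels realigned by Subtyping and Substitution (Lemmas~\ref{lem:subtyping} and~\ref{lem:subst}).

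All other rules are treated by pushing the substitution into the premises and reassembling under the same rule. The only genuinely delicate structural case is contraction $?c^\mu$ acting on $\muvarone$: the single boxed copy of $\prooftwo$ must be divided between the two contracted occurrences, which is exactly what the Splitting and Parametric Splitting Lemmas (Lemmas~\ref{lem:splitting} and~\ref{Lem:parsplitting}) deliver, the inequality $\polnine\resgeq\poltwo$ ensuring that enough copies are on hand and the division of $\polnine$ matching the division of the sub-boxes. Weakening $?w^\mu$ on $\muvarone$ simply discards the box ($\polnine$ copies become none), and every rule not mentioning $\muvarone$ commutes with the substitution immediately by the induction hypothesis.

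The hard part is the \emph{dynamic} half, i.e.\ the ``moreover'' claim that the single $\BLLP$ proof formed by $\polnine$-boxing $\pfd{\prooftwo}$, tensoring with an axiom, and cutting against $\pfd{\proofone}$ reduces under $\redcutelimarr$ to exactly the proof $\proofthree_\polnine$ built above. I would establish this in lockstep with the same induction: as the cut commutes upward through $\pfd{\proofone}$ it is split at each $?c^\mu$ (matching the static use of splitting) and fired as the multiplicative logical cut at each $\mu$\textsf{-name} leaf on $\muvarone$ (matching the \textsf{app} step). Two points need care. First, each such cut must qualify as a \emph{special} cut in the sense required by $\redcutelimarr$ --- this holds because the negative formulas in the box argument are passive, so no firing is blocked and nothing is forced under a box, which dovetails with the outermost restriction imposed on head reduction. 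Second, one must check that the residual $\bot$-versus-$1$ cuts arising from the $\mu$\textsf{-name}/$\mu$\textsf{-abs} clauses of the mapping disappear without contributing spurious steps, so that a single $\to_\mu$ step at the term level is matched by a controlled number of $\redcutelimarr$ steps --- the property that ultimately feeds Polystep Soundness.
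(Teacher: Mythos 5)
Your proposal matches the paper's own proof: an induction on the type derivation, generalised to simultaneous substitution and to an arbitrary subject type, whose key case replaces each $\mu$\textsf{-name} on $\muvarone$ by an application of \textsf{app} (to the induction hypothesis and $\prooftwo$) followed by $\mu$\textsf{-name} at the result type $\nftwo$, with $?c^\mu$ handled by splitting exactly as in the $\lambda$-substitution lemma and $?w^\mu$ by discarding the box, and with the $\redcutelimarr$-reduction claim tracked in lockstep. The only minor imprecision is that in $\BLLPLMM$ the $\mu$\textsf{-name} rule introduces its $\mu$-variable freshly, so occurrences of $\muvarone$ inside $\termfour$ are dealt with by the contraction case (after renaming) rather than by the induction hypothesis at the $\mu$\textsf{-name} step itself; this does not affect the argument.
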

}
\begin{theorem}[Subject Reduction]\label{theo:subjred}
Let $\proofone\pof\tj{\contextone}{\termone}{\nltone}{\contexttwo}$ and suppose
$\termone\toh\termtwo$. Then there is $\prooftwo\pof\tj{\contextone}{\termtwo}{\nltone}{\contexttwo}$. Moreover
$\pfd{\proofone}\redcutelimarr^+\pfd{\prooftwo}$.
\end{theorem}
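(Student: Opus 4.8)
The plan is to establish the statement inside $\BLLPLMM$, which by the preceding equivalence lemma types exactly the same terms as $\BLLPLM$ and on whose derivations the translation $\pfd{\cdot}$ of Figure~\ref{fig:mapping} is defined; only there does the clause $\pfd{\proofone}\redcutelimarr^+\pfd{\prooftwo}$ even make sense. I would argue by induction on the type derivation $\proofone$, with a case analysis on its last rule together with the derivation of $\termone\toh\termtwo$ from Figure~\ref{fig:whred}. Inducting on $\proofone$ rather than on the reduction is what lets me absorb the explicit structural rules $?w^\lambda,?c^\lambda,?w^\mu,?c^\mu$ of $\BLLPLMM$, which may sit between the conclusion and the rule that actually introduces the head redex: when $\proofone$ ends with one of them the subject is either unchanged (weakening) or a variable renaming such as $\subst{\subst{\termone}{\varone}{\varthree}}{\vartwo}{\varthree}$ (contraction); the head reduction pulls back to a reduction of the strictly smaller subderivation, I conclude by the induction hypothesis and reapply the same rule, and on the proof side these rules translate to weakening/contraction/subtyping manipulations that commute with the internal $\redcutelimarr$.

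The substantial work is in the three root-redex cases, reached when $\termone\toh\termtwo$ is the single step $\termone\tow\termtwo$ with $\termone\to\termtwo$ at the root. For $\to_\beta$, inverting $\proofone$ (modulo intervening structural rules, handled by standard generation lemmas) exposes an $\textsf{app}$ whose left premise is an $\textsf{abs}$; stripping the $\textsf{abs}$ yields a derivation of the body with $\varone$ in the context, while the right premise types the argument. I would feed these to Lemma~\ref{Lem:lamsub} ($\lambda$-Substitution), with the parameter supplied by the side conditions of the $\textsf{app}$ rule, obtaining $\prooftwo$ typing $\subst{\termone}{\varone}{\termtwo}$ with the same conclusion; the second half of that lemma states precisely that boxing $\pfd{\proofthree}$ and cutting it against $\pfd{\proofone}$, i.e.\ the translation of the redex, $\redcutelimarr$-reduces to $\proofthree_\polnine=\pfd{\prooftwo}$. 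The case $\to_\mu$ is identical but invokes Lemma~\ref{Lem:musub} ($\mu$-Substitution), the translated redex now being a box tensored with an axiom and cut against $\pfd{\proofone}$. For $\to_\theta$, the derivation of $\mu\alpha.[\alpha]\termone$ with $\alpha\notin\FV{\termone}$ is a $\mu$\textsf{-abs} over a $\mu$\textsf{-name}, which by Figure~\ref{fig:mapping} translate to a $\bot$ rule whose conclusion is cut against the $1$ rule; this $\bot/1$ pair is a logical cut firing in one $\redcutelimarr$ step to $\pfd{\prooftwo}$, the translation of the derivation of $\termone$.

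The remaining cases are the congruence rules of Figure~\ref{fig:whred}, each matched by the corresponding typing rule: left-application congruence by $\textsf{app}$, the $[\alpha]$-congruence by $\mu$\textsf{-name}, and the abstraction congruences by $\textsf{abs}$ and $\mu$\textsf{-abs}. In each I would apply the induction hypothesis to the subderivation typing the reduced subterm and rebuild $\prooftwo$ with the same rule; since head reduction never descends into the argument of an application, the reduced subproof always lies \emph{outside} every box of $\pfd{\proofone}$, so the external relation $\redcutelimarr$ lifts through the surrounding proof context and $\pfd{\proofone}\redcutelimarr^+\pfd{\prooftwo}$ follows from the inductive $\redcutelimarr^+$. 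I expect the main obstacle to be the two substitution lemmas underlying the $\beta$ and $\mu$ cases: one must thread the resource polynomials and the malleability side-conditions of the $\textsf{app}$ rule (the bounds $\polnine\resgeq\poltwo$, $\polten\resgeq\poltwo$ and the context-sum inclusions) through the substitution so that the indices on both sides of $\redcutelimarr^+$ coincide exactly, and the simultaneous substitution for several occurrences of a variable must be carried along for the induction in those lemmas to close. Guaranteeing externality — that the head redex and all cuts created during the simulation stay outside boxes, so that $\redcutelimarr$ rather than the unrestricted $\cutelimarr$ suffices — is the other delicate point.
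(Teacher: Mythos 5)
Your proposal is correct and follows essentially the same route as the paper: an induction on the type derivation (carried out with the multiplicative rules of $\BLLPLMM$, where $\pfd{\cdot}$ is defined), with the congruence cases discharged by the induction hypothesis, the $\beta$- and $\mu$-redex cases delegated to Lemma~\ref{Lem:lamsub} and Lemma~\ref{Lem:musub} (whose second halves supply the $\redcutelimarr^+$ clause), and the $\theta$-redex case handled by inspecting the $?w^\mu$/$\mu$\textsf{-name}/$\mu$\textsf{-abs} block and adjusting polynomials via subtyping. The only difference is one of emphasis --- you spell out the treatment of the explicit structural rules and of externality, which the paper leaves implicit --- but the decomposition and the key lemmas coincide.
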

\longv{
\begin{proof}
By induction on the structure of $\proofone$.
Here are some interesting cases:
\begin{varitemize}
\item 
  If $\termone$ is an application, reduction takes place inside $\termone$, and $\proofone$
  is as follows
  \begin{prooftree}
    \AxiomC{$\contextone \seq \termone : \LABEL{\nfone \arrow{\varone}{\polone} \nftwo}{\vartwo}{\poltwo} | \contexttwo$}
    \AxiomC{$\contextthree \seq \termthree : \LABEL{\nfone}{\vartwo}{\polone} | \contextfour$}
    \AxiomC{$\polnine \resgeq \poltwo, \polten \resgeq \poltwo$}
    \RightLabel{app}
    \TrinaryInfC{$\contextone \contsum \SUM_{\varnine<\polnine} \contextthree \seq (\termone)\termthree: \LABEL{\nftwo}{\varthree}{\polten} | \contexttwo \contsum \SUM_{\varnine<\polnine}\contextfour$}
  \end{prooftree}
  then $\prooftwo$ is
  \begin{prooftree}
    \AxiomC{$\contextone \seq \termtwo : \LABEL{\nfone \arrow{\varone}{\polone} \nftwo}{\vartwo}{\poltwo} | \contexttwo$}
    \AxiomC{$\contextthree \seq \termthree : \LABEL{\nfone}{\vartwo}{\polone} | \contextfour$}
    \AxiomC{$\polnine \resgeq \poltwo, \polten \resgeq \poltwo$}
    \RightLabel{app}
    \TrinaryInfC{$\contextone \contsum \SUM_{\varnine<\polnine} \contextthree \seq (\termtwo)\termthree: \LABEL{\nftwo}{\varthree}{\polten} | \contexttwo \contsum \SUM_{\varnine<\polnine}\contextfour$}
  \end{prooftree}
  which exists by induction hypothesis. We omit the other trivial cases.
\item 
  If $\termone$ is a $\beta$-redex, then $\proofone$ looks as follows:
  \begin{prooftree}
    \AxiomC{$\contextone, \varone : \DLABEL{\varthree}{\polfour}{\nfone}{\vartwo}{\polone} \seq \termone: \LABEL{\nftwo}{\vartwo}{\poltwo} | \contexttwo$}
    \AxiomC{$\polthree\resgeq \poltwo, \polthree\resgeq \polone$}
    \RightLabel{abs}
    \BinaryInfC{$\contextone \seq \lambda \varone.\termone: \LABEL{\nfone \arrow{\vartwo}{\polfour} \nftwo}{\varfour}{\polthree} | \contexttwo$}
    \AxiomC{$\contextthree \seq \termtwo : \LABEL{\nfone}{\varthree}{\polfour} | \contextfour$}
    \AxiomC{$\polnine \resgeq \poltwo, \polten \resgeq \poltwo$}
    \RightLabel{app}
    \TrinaryInfC{$\contextone,\SUM_{\varnine<\polnine} \contextthree \seq (\lambda \varone.\termone)\termtwo: \LABEL{\nftwo}{\vartwo}{\polten} | \contexttwo,\SUM_{\varnine<\polnine}\contextfour$}
  \end{prooftree}
  Lemma~\ref{Lem:lamsub} ensures that the required type derivation actually exists:
  \begin{prooftree}
    \AxiomC{$\contextone,\SUM_{\varnine<\polnine} \contextthree \seq \subst{\termone}{\varone}{\termtwo}: \LABEL{\nftwo}{\vartwo}{\polten} | \contexttwo,\SUM_{\varnine<\polnine}\contextfour$}
  \end{prooftree}
\item 
  If $\termone$ is a $\mu$-redex, then $\proofone$ looks as follows:
  \begin{prooftree}
    \AxiomC{$\contextone \seq \termone: \LABEL{\bot}{\varthree}{\poltwo} | \muvartwo: \LABEL{\nfone \arrow{\varthree}{\polfour} \nftwo}{\vartwo}{\polone}, \contexttwo$}
    \RightLabel{$\mu$-abs}
    \UnaryInfC{$\contextone \seq \mu \muvartwo \termone: \LABEL{\nfone \arrow{\varthree}{\polfour} \nftwo}{\vartwo}{\polone} | \contexttwo$}
    \AxiomC{$\contextthree \seq \termtwo : \LABEL{\nfone}{\vartwo}{\polfour} | \contextfour$}
    \AxiomC{$\polnine \resgeq \polone, \polten \resgeq \polone$}
    \RightLabel{app}
    \TrinaryInfC{$\contextone, \SUM_{\varnine<\polnine} \contextthree \seq (\mu \muvartwo.\termone)\termtwo: \LABEL{\nftwo}{\varthree}{\polten} | \contexttwo,\SUM_{\varnine<\polnine}\contextfour$}
  \end{prooftree}
  and Lemma~\ref{Lem:musub} ensures us that $\prooftwo$ exists for
  \begin{prooftree}
    \AxiomC{$\contextone \contsum \SUM_{\varnine<\polnine} \contextthree \seq \mu \muvartwo.\lsubst{\termone}{[\muvartwo]\termthree}{[\muvartwo](\termthree)\termtwo}: \LABEL{\nftwo}{\varthree}{\polten} | \contexttwo \contsum \SUM_{\varnine<\polnine}\contextfour$}
  \end{prooftree}
\item
  If $\termone$ is a $\theta$-redex, then $\proofone$ looks as follows:
  \begin{prooftree}
    \AxiomC{$\proofone \pof \contextone \seq \termone: \LABEL{\nfone}{\varone}{\polone} \mid \contexttwo$}
    \RightLabel{$?w^\mu$}
    \UnaryInfC{$\contextone \seq \termone: \LABEL{\nfone}{\varone}{\polone} \mid \contexttwo, \muvarone: \LABEL{\nfone}{\vartwo}{\poltwo}$}
    \AxiomC{$\polthree \resgeq \polone+\poltwo$}
    \RightLabel{\textsf{$\mu$-name}}
    \BinaryInfC{$\contextone \seq [\muvarone]\termone: \LABEL{\bot}{\varone}{\polfour} \mid \contexttwo, \muvarone: \LABEL{\nfone}{\vartwo}{\polthree}$}
    \RightLabel{\textsf{$\mu$-abs}}
    \UnaryInfC{$\contextone \seq \termone: \LABEL{\nfone}{\varone}{\polthree} \mid \contexttwo$}
  \end{prooftree}
  Since $\polthree=\polone+\poltwo \resgeq \polone$ we know that
  \begin{prooftree}
    \AxiomC{$\proofone^S \pof \contextone \seq \termone: \LABEL{\nfone}{\varone}{\polthree} \mid \contexttwo$}
  \end{prooftree}
  where $\proofone^S$ is the derivation obtained from $\proofone$, applying the Subtyping Lemma to the derivation $\proofone$.
\end{varitemize}
This concludes the proof.
\end{proof}
}
Observe how performing head reduction corresponds to \remove{following }$\redcutelimarr$, instead of
the more permissive $\cutelimarr$. The following, then, is an easy corollary of Theorem~\ref{theo:subjred} and Theorem~\ref{theo:polystepbllp}:
\begin{theorem}[Polystep Soundness for Terms]\label{theo:polystepterms}
Let $\proofone\pof\tj{\contextone}{\termone}{\nltone}{\contexttwo}$ and let $\termone\toh^n\termtwo$. 
Then $n\leq\polone_{\pfd{\proofone}}$. 
\end{theorem}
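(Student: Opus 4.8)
The plan is to treat Theorem~\ref{theo:polystepterms} as a direct corollary of Subject Reduction (Theorem~\ref{theo:subjred}) and of Polystep Soundness for proofs (Theorem~\ref{theo:polystepbllp}). The whole point is that each term-level head-reduction step is mirrored by \emph{at least one} cut-elimination step on the associated $\BLLP$ proof, so a bound on the number of proof-level steps transfers directly into a bound on the number of term-level steps. Concretely, I would push the hypothesis $\termone\toh^n\termtwo$ through $\diamond$ into a reduction sequence on $\BLLP$ proofs and then read off the already-established bound at the proof level.

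First I would decompose the hypothesis into a chain
$$
\termone=\termone_0\toh\termone_1\toh\cdots\toh\termone_n=\termtwo.
$$
Applying Theorem~\ref{theo:subjred} to $\proofone_0\DEF\proofone$ and the step $\termone_0\toh\termone_1$ yields a derivation $\proofone_1\pof\tj{\contextone}{\termone_1}{\nltone}{\contexttwo}$ (same contexts and type, only the subject changing) together with $\pfd{\proofone_0}\redcutelimarr^+\pfd{\proofone_1}$. Iterating this along the chain produces derivations $\proofone_0,\proofone_1,\ldots,\proofone_n$, each of conclusion $\tj{\contextone}{\termone_i}{\nltone}{\contexttwo}$, with $\pfd{\proofone_i}\redcutelimarr^+\pfd{\proofone_{i+1}}$ for every $i<n$. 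Since each of these $n$ segments contributes at least one cut-elimination step, concatenating them gives
$$
\pfd{\proofone}=\pfd{\proofone_0}\redcutelimarr^m\pfd{\proofone_n}
$$
for some $m\resgeq n$.

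Finally I would invoke Theorem~\ref{theo:polystepbllp} on the single $\BLLP$ proof $\pfd{\proofone}$: from $\pfd{\proofone}\redcutelimarr^m\pfd{\proofone_n}$ it follows that $m$ is bounded by the weight of $\pfd{\proofone}$, that is, $m\leq\polone_{\pfd{\proofone}}$. Combining with $m\resgeq n$ gives $n\leq m\leq\polone_{\pfd{\proofone}}$, which is exactly the claim. As remarked after Theorem~\ref{theo:polystepbllp}, this bound survives substitution of naturals for the free resource variables of $\polone_{\pfd{\proofone}}$ (by Lemma~\ref{lem:subst}), so the resulting estimate is genuinely uniform in the size of the inputs, justifying the ``poly'' in Polystep.

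The argument is essentially bookkeeping, with all the substantive content living in the two theorems being glued together. The one step that must not be glossed over is the strict positivity of the simulation: the counting $m\resgeq n$ relies crucially on Subject Reduction delivering $\pfd{\proofone}\redcutelimarr^+\pfd{\prooftwo}$ rather than merely $\redcutelimarr^*$; were zero proof steps allowed for a head step, the inequality would collapse and no bound on $n$ could be extracted. Thus the only genuine obstacle is discharged upstream, precisely by the ``Moreover $\pfd{\proofone}\redcutelimarr^+\pfd{\prooftwo}$'' clause of Theorem~\ref{theo:subjred}, which records that every head step costs at least one cut-elimination step — itself a consequence of working with the explicit (multiplicative) system $\BLLPLMM$ and of head reduction being outermost.
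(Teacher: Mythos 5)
Your proposal is correct and is exactly the argument the paper intends: the theorem is stated as "an easy corollary of Theorem~\ref{theo:subjred} and Theorem~\ref{theo:polystepbllp}", and your gluing — iterate Subject Reduction along the head-reduction chain, use the strictness of $\redcutelimarr^{+}$ to get $m\geq n$ proof-level steps, then bound $m$ by the weight via Polystep Soundness for proofs — is precisely that corollary spelled out, including the correct identification of the $\redcutelimarr^{+}$ clause as the load-bearing part.
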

\section{Control Operators}
In this section, we show that $\BLLPLM$ is powerful enough to type (the natural encoding of) two popular control operators, namely
\Scheme's \callcc\ and Felleisen's {\FellC} \cite{ariola2003minimal} \cite{laurent2003polarized}.

Control operators change the evaluation context of an expression. This is simulated by the operators 
$\mu$ and $[\cdot]$ which can, respectively, save and restore a stack of arguments to be passed to subterms. 
This idea, by the way, is the starting point of an extension of Krivine's machine for de Groote's 
$\lambda\mu$ \cite{de1998environment} (see Section~\ref{sec:absmac}).
\longv{

An extension of de Groote's calculus named $\Lambda\mu$-calculus \cite{saurin2005separation} satisfies a B\"ohm separation 
theorem that fails for Parigot's calculus \cite{david2001calculus}. Hence in an untyped setting the original $\lambda\mu$ of Parigot 
is strictly less expressive than de Groote's calculus. 
}
\subsection{\large \callcc}
An encoding of \callcc\ into the $\lambda\mu$-calculus could be, e.g., 
$\kappa=\lambda \varone.\mu\muvarone.[\muvarone](\varone)\lambda\vartwo.\mu\muvartwo.[\muvarone]\vartwo$. 
Does $\kappa$ have the operational behavior we would expect from \callcc? First of all, it should satisfy the following property 
(see \cite{felleisen1990expressive}). If 
$k\not\in\FV{e}$, 
then $(\kappa)\lambda k.e\to^* e$. Indeed:
$$
(\lambda \varone.\mu\muvarone.[\muvarone](\varone)\lambda\vartwo.\mu\muvartwo.[\muvarone]\vartwo)\lambda k.e\toh\mu\muvarone.[\muvarone](\lambda k.e)\lambda\vartwo.\mu\muvartwo.[\muvarone]\vartwo
\toh \mu\muvarone.[\muvarone]e\toh e,
$$
where the second $\beta$-reduction step replaces $\subst{e}{k}{\lambda\vartwo.\mu\muvartwo.[\muvarone]\vartwo}$ with 
$e$ since $k\not\in \FV{e}$ by hypothesis. It is important to observe that the second step replaces a variable for 
a term with a free $\mu$-variable, hence weak reduction gets stuck. (Actually, our notion of 
weak reduction is even more restrictive than the one proposed by de Groote in \cite{de1998environment}.)
Head reduction, on the contrary, is somehow more liberal. Moreover, it is also straightforward to check that the 
reduction of {\callcc} in \cite[\S 3.4]{Parigot} can be simulated by head reduction on $\kappa$.

But is $\kappa$ typable in $\BLLPLM$? The answer is positive: a derivation typing it with (an instance of) Pierce's law is
in Figure~\ref{fig:callcc}, where $\proofone$ is the obvious derivation of
$$
\varone: \DLABEL{}{\polthree}{(\COATOMONE \arrow{}{\polfour} \COATOMTWO)\arrow{}{1}\COATOMONE}{}{1} \seq 
\varone: \LABEL{(\COATOMONE \arrow{}{\polfour} \COATOMTWO)\arrow{}{1} \COATOMONE}{}{\polthree}\mid\muvarone: \LABEL{\COATOMONE}{}{0}.
$$ 
\begin{figure*}
\fbox{
\scriptsize
\begin{minipage}{\textwidth}
\centering
\vspace{5pt}
\begin{prooftree}
\AxiomC{$\proofone$}
\AxiomC{}
\RightLabel{\textsf{var}}
\UnaryInfC{$\vartwo: \DLABEL{}{\polfour}{\COATOMONE}{}{1} \seq \vartwo: \LABEL{\COATOMONE}{}{\polfour}\mid\muvarone: \LABEL{\COATOMONE}{}{0}, \muvartwo: \LABEL{\COATOMTWO}{}{0}$}
\RightLabel{$\mu$\textsf{-name}}
\UnaryInfC{$\vartwo: \DLABEL{}{\polfour}{\COATOMONE}{}{1} \seq [\muvarone]\vartwo: \LABEL{\bot}{}{0}\mid\muvarone: \LABEL{\COATOMONE}{}{\polfour}, \muvartwo: \LABEL{\COATOMTWO}{}{0}$}
\RightLabel{$\mu$\textsf{-abs}}
\UnaryInfC{$\vartwo: \DLABEL{}{\polfour}{\COATOMONE}{}{1} \seq \mu\muvartwo.[\muvarone]\vartwo: \LABEL{\COATOMTWO}{}{0}\mid\muvarone: \LABEL{\COATOMONE}{}{\polfour}$}
\RightLabel{\textsf{abs}}
\UnaryInfC{$\seq \lambda\vartwo.\mu\muvartwo.[\muvarone]\vartwo: \LABEL{\COATOMONE\arrow{}{\polfour} \ATOMTWO}{}{1}|\muvarone: \LABEL{\COATOMONE}{}{\polfour}$} 
\RightLabel{\textsf{app}}
\BinaryInfC{$\varone: \DLABEL{\varfive}{\polthree}{(\COATOMONE\arrow{}{\polfour} \COATOMTWO)\arrow{}{1} \COATOMONE}{}{1} \seq (\varone)\lambda\vartwo.\mu\muvartwo.[\muvarone]\vartwo: 
  \LABEL{\COATOMONE}{\varfive}{\polthree}|\muvarone: \LABEL{\COATOMONE}{\varfive}{\sum_{\varfive<\polthree} \polfour}$}
\AxiomC{$\begin{array}{rcl}\polten&\resgeq&\polthree+\sum_{\varfive<\polthree}\polfour \\ \polten &\resgeq& 1\end{array}$}
\RightLabel{$\mu$\textsf{-name}}
\BinaryInfC{$\varone: \DLABEL{\varfive}{\polthree}{(\COATOMONE\arrow{}{\polfour} \COATOMTWO)\arrow{}{1} \COATOMONE}{}{0} \seq 
  [\muvarone](\varone)\lambda\vartwo.\mu\muvartwo.[\muvarone]\vartwo: \LABEL{\bot}{}{1}\mid\muvarone: \LABEL{\COATOMONE}{}{\polten}$}
\RightLabel{$\mu$\textsf{-abs}}
\UnaryInfC{$\varone: \DLABEL{\varfive}{\polthree}{(\COATOMONE\arrow{\varfive}{\polfour} \COATOMTWO)\arrow{}{1} \COATOMONE}{}{1} \seq 
  \mu\muvarone.[\muvarone](\varone)\lambda\vartwo.\mu\muvartwo.[\muvarone]\vartwo: \LABEL{\COATOMONE}{}{\polten}\mid$}
\RightLabel{\textsf{abs}}
\UnaryInfC{$\seq \lambda \varone.\mu\muvarone.[\muvarone](\varone)\lambda\vartwo.\mu\muvartwo.[\muvarone]\vartwo: \LABEL{((\COATOMONE\arrow{}{\polfour} \COATOMTWO)\arrow{}{1} 
  \COATOMONE)\arrow{\varfive}{\polthree} \COATOMONE}{}{\polten}\mid$}
\end{prooftree}\
\end{minipage}}
\caption{A Type Derivation for $\kappa$}
\label{fig:callcc}
\end{figure*}

\subsection{Felleisen's {\large \FellC}}
The canonical way to encode Felleisen's \FellC\ as a $\lambda\mu$-term is as the term $\aleph=\lambda\varten.\mu\muvarone.(\varten)\lambda \varone.[\muvarone]\varone$.
Its behavior should be something like $(\aleph)\termfour t_1 \dots t_k\to (\termfour)\lambda x.(x)t_1 \dots t_k$, where $x\not\in FV(t_1, \dots, \linebreak[1] t_k)$, 
i.e., $x$ is a fresh variable. Indeed
$$
(\aleph)\termfour t_1 \dots t_k\toh(\mu\alpha\remove{:\nffour}.(\termfour)\lambda x.[\alpha](x))t_1 \dots t_k\toh^k
\mu\alpha\remove{:\nfone}.(\termfour)\lambda x\remove{:(\nfone_1\arrow{\vartwo_1}{\polone_1} \dots \nfone_k\arrow{\vartwo_k}{\polone_k} \nfone)}.[\alpha](x)t_1\dots t_k.
$$
A type derivation for $\aleph$ is in Figure~\ref{fig:fellc}, where $\proofthree$ is a derivation for
$$
\varten:\DLABEL{\varfive}{\polnine}{\pneg{}{1}{\pneg{}{\polthree}{\COATOMONE}}}{}{1} \seq 
\varten:\LABEL{\pneg{}{1}{\pneg{}{\polthree}{\COATOMONE}}}{\varfive}{\polnine}\mid\muvarone:\LABEL{\COATOMONE}{}{0}.
$$ 
\begin{figure*}[bpt]
\fbox{
\scriptsize
\begin{minipage}{0.98\textwidth}
\centering
\vspace{5pt}
\begin{prooftree}
\AxiomC{$\proofthree$}
\AxiomC{}
\RightLabel{\textsf{var}}
\UnaryInfC{$\varone: \DLABEL{}{\polthree}{\COATOMONE}{}{1} \seq \varone: \LABEL{\COATOMONE}{}{\polthree}$}
\RightLabel{$\mu$\textsf{-name}}
\UnaryInfC{$\varone: \DLABEL{}{\polthree}{\COATOMONE}{}{1} \seq [\muvarone]\varone: \LABEL{\bot}{}{0} \mid \muvarone: \LABEL{\COATOMONE}{}{\polthree}$}
\RightLabel{\textsf{abs}}
\UnaryInfC{$\seq \lambda \varone.[\muvarone]\varone: \LABEL{\pneg{}{\polthree}{\COATOMONE}}{}{1} \mid \muvarone: \LABEL{\COATOMONE}{}{\polthree}$}
\RightLabel{\textsf{app}}
\BinaryInfC{$\varten: \DLABEL{\varfive}{\polnine}{\pneg{}{1}{\pneg{}{\polthree}{\COATOMONE}}}{}{1} \seq (\varten)\lambda \varone.[\muvarone]\varone: \LABEL{\bot}{\varfive}{\polnine} 
   \mid \muvarone: \LABEL{\COATOMONE}{}{\sumlf{\varfive}{\polnine}{\polthree}}$}
\RightLabel{$\mu$\textsf{-abs}}
\UnaryInfC{$\varten: \DLABEL{\varfive}{\polnine}{\pneg{}{1}{\pneg{}{\polthree}{\COATOMONE}}}{}{1} \seq \mu\muvarone.(\varten)\lambda \varone.[\muvarone]\varone: \LABEL{\COATOMONE}{}{\sumlf{\varfive}{\polnine}{\polthree}}\mid$}
\AxiomC{$\begin{array}{rcl}\polten&\resgeq& 1\\\polten&\resgeq&\sumlf{\varfive}{\polnine}{\polthree}\end{array}$}
\RightLabel{\textsf{abs}}
\BinaryInfC{$\seq \lambda\varten.\mu\muvarone.(\varten)\lambda \varone.[\muvarone]\varone: 
   \LABEL{\pneg{}{1}{\pneg{}{\polthree}{\COATOMONE}}\arrow{\varfive}{\polnine}\COATOMONE}{}{\polten}$}
\end{prooftree}\ 
\end{minipage}
}
\caption{A Type Derivation for $\aleph$}
\label{fig:fellc}
\end{figure*}
It is worth noting that weak reduction is strong enough to properly simulating the operational behavior of $\FellC$.
It is not possible to type $\mathcal{C}$ in Parigot's $\lambda\mu$, unless an open term is used. Alternatively, a 
free continuation constant must be used (obtaining yet another calculus \cite{ariola2003minimal}). 
This is one of the reasons why we picked the version of $\lambda\mu$-calculus proposed by de Groote over other calculi. 
See \cite{de1994relation} for a discussion about $\lambda\mu$-and-$\lambda$-calculi and Felleisen's $\mathcal{C}$.

\section{Abstract Machines}\label{sec:absmac}
Theorem~\ref{theo:polystepterms}, the main result of this paper so far, tells us that the number 
of \emph{head-reduction steps} performed by terms typable in $\BLLPLM$ is bounded 
by the weight of the underlying type derivation. One may wonder, however, whether taking the 
number of reduction steps as a measure of term complexity is sensible or not --- substitutions
involve arguments which can possibly be much bigger than the original term. Recent work by Accattoli and the first 
author~\cite{accattoli2012invariance}, however, shows that in the case of $\lambda$-calculus endowed with head reduction, 
the unitary cost model is polynomially invariant with respect to Turing machines. We conjecture that 
those invariance results can be extended to the $\lambda\mu$-calculus.

\shortv{It can be shown that $\BLLPLM$ is polystep sound for another cost model, namely the one
induced by de Groote's $\K$, an abstract machine for the $\lambda\mu$-calculus. This is done
following a similar proof for $\PCF$ typed with linear dependent types~\cite{DalLago2011} and Krivine's Abstract
Machine (of which $\K$ is a natural extension). The main idea consists in extending $\BLLP$ to a type system for $\K$'s configurations,
this way defining a \emph{weight} for each of them in the form of a resource polynomial. The weight, as expected, can
then be shown to decrease at each $\K$'s computation step. It is worth noting that the weight defined this way is
fundamentally different than the one from Section~\ref{subsec:soundness}. See \cite{EV} for some more details.}
\longv{In this Section, we show that $\BLLPLM$ is polystep sound for another cost model, namely the one
induced by de Groote's $\K$, an abstract machine for the $\lambda\mu$-calculus. This will be done
following a similar proof for $\PCF$ typed with linear dependent types~\cite{DalLago2011} and Krivine's Abstract
Machine (of which $\K$ is a natural extension).
}

\longv{
Configurations of $\K$ are built around environments, closures and stacks, which are defined mutually
recursively as follows:
\begin{varitemize}
\item
  \emph{Environments} are partial functions which makes $\lambda$-variables correspond
  to closures and $\mu$-variables correspond to stacks; metavariables for
  environments are $\envone,\envtwo$, etc.;
\item
  \emph{Closures} are pairs whose first component is a $\lambda\mu$-term and whose second component
  is an environment; metavariables for closure are $\closone,\clostwo$, etc.
\item
  \emph{Stacks} are just finite sequences of closures; metavariables for stacks are $\stone,\sttwo$, etc.
\end{varitemize}
Configurations are pairs whose first component is a closure and whose second component is a stack, and are
indicated with $\confone,\conftwo$, etc.
Reduction rules for configurations are in Figure~\ref{fig:kmachine}. 
\begin{figure*}
  \begin{center}
  \fbox{
  \begin{minipage}{.98\textwidth}
    \begin{align*}
      ((\varone,\envone),\stone)&\toconf(\envone(\varone),\stone);\\
      ((\lambda\varone.\termone,\envone),\closone\cdot\stone)
        &\toconf((\termone,\subst{\envone}{\varone}{\closone}),\stone);\\
      ((\termone\termtwo,\envone),\stone)&\toconf
        ((\termone,\envone),(\termtwo,\envone)\cdot\stone);\\
      ((\mu\alpha.\termone,\envone),\stone)&\toconf
        ((\termone,\subst{\envone}{\alpha}{\stone}),\varepsilon);\\
      (([\alpha]\termone,\envone),\varepsilon)&\toconf
        ((\termone,\envone),\envone(\alpha)).\\
    \end{align*}
  \end{minipage}}
  \end{center}
  \caption{$\K$-machine Transitions.}\label{fig:kmachine}
\end{figure*}
The $\K$-machine is sound and complete with respect to head reduction~\cite{de1998environment}, where however, reduction 
can take place in the scope of $\mu$-abstractions, but not in the scope of $\lambda$-abstractions.\longv{\footnote{The authors are aware of the work in \cite{streicher1998classical}, in which a Krivine machine for $\lambda\mu$ 
is derived semantically rather than syntactically (independently of de Groote). In the same paper there is also a further extension of the machine which allows to reduce under $\mu$- and even $\lambda$-abstractions. The paper is not essential for our purposes since the abstract machine of de Groote is enough to work with control operators. Still, even though there are some important differences with respect to our setting (the calculus considered is an untyped variant of Parigot's $\lambda\mu$), it might be worthwhile to investigate in the future.}}

Actually, $\BLLPLM$ can be turned into a type system for $\K$'s \emph{configurations}.  
\longv{ We closely follow Laurent~\cite{Laurent03note} here. 
}
The next step is to assign a weight $\poltwo^\proofone$ to every type derivation
$\proofone\pof\confone:\nltone$, similarly to what we have done in type derivations for \emph{terms}.
The idea then is to prove that the weight of (typable) configurations decreases at every transition
step:
\begin{lemma}
If $\confone\toconf\conftwo$, then $\poltwo^\confone\resgt\poltwo^\conftwo$.
\end{lemma}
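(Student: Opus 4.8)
The plan is to establish the inequality by a case analysis on which of the five transition rules of Figure~\ref{fig:kmachine} produces the step $\confone\toconf\conftwo$. The type system for configurations follows the typing of environments, closures and stacks sketched above, and the weight $\poltwo^{(\cdot)}$ is assigned to configuration derivations along exactly the same pattern used for terms in Section~\ref{subsec:soundness}: each surface occurrence of a negative formula --- at the head of the current closure, but also inside the closures stored in environments and on the stack --- carries a fresh resource variable, and the weight is obtained by instantiating these variables according to whether the associated constructor is still to be consumed by the machine. The resulting polynomial is, intuitively, an upper bound on the number of transition steps the configuration can still perform, and the whole argument amounts to verifying that each rule strictly lowers this bound.

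In each case I would first invert $\proofone\pof\confone:\nltone$ according to the head constructor of the active closure (a variable, a $\lambda$-abstraction, an application, a $\mu$-abstraction, or a naming $[\alpha]$) and the shape of the stack (empty or a cons $\closone\cdot\stone$), thereby exposing the sub-derivations typing $\termone$, the environment $\envone$ and the stack. From these I reassemble a derivation $\prooftwo\pof\conftwo:\nltone$ for the target configuration and compare the two weights. The recurring principle is that every transition destroys exactly one head constructor --- whose fresh variable was instantiated to $1$ because that constructor sits in active position --- while merely \emph{relocating} the remaining data: rule $2$ moves the top closure $\closone$ from the stack into the slot $\subst{\envone}{\varone}{\closone}$, rule $4$ saves the whole stack into $\subst{\envone}{\alpha}{\stone}$, and rule $5$ restores $\envone(\alpha)$ onto the empty stack. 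Provided relocation is weight-neutral, the weight of $\conftwo$ is that of $\confone$ minus the contribution of the consumed constructor, and hence $\poltwo^\confone\resgt\poltwo^\conftwo$.

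Two cases carry the multiplicative subtlety of the bounded setting. In the variable rule $((\varone,\envone),\stone)\toconf(\envone(\varone),\stone)$ and in the $\lambda$-rule, a closure is retrieved or bound under a variable whose modal label carries a resource polynomial bounding how many times it may be looked up; consuming the dereliction at the head realizes exactly one of these potential uses, so the decrease is witnessed by the drop of the corresponding summand. Here the bounded-sum manipulations of Lemma~\ref{lem:tensornew} and the substitution bookkeeping of Lemma~\ref{lem:subst} are what guarantee that the remaining budget is again a genuine resource polynomial and is strictly smaller. The three administrative rules are then pure constructor-consumption steps, for which the computation is immediate once relocation invariance is in place.

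The step I expect to be the main obstacle is precisely proving that relocation is weight-neutral. One must show that a closure, or an entire stack, contributes the same amount to the total weight whether it sits on the stack, occurs as a subterm of the active closure, or is stored in an environment slot, and moreover that duplicating it with multiplicity $\polnine$ (as happens implicitly when a variable of budget $\polnine$ is evaluated) is accounted for exactly by the $\sum_{\varnine<\polnine}$ constructions on contexts. This requires a dedicated substitution-and-relocation lemma for the configuration weight, formulated so that the weight is additive over the components of a configuration and invariant under moving closures between them; once that invariance is available, the strict decrease in each of the five cases follows from the elimination of the single active head constructor.
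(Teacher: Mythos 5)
Your overall route---typing $\K$-configurations, assigning a weight to the configuration derivation in the style of Section~\ref{subsec:soundness}, and checking a strict decrease case by case over the five transitions of Figure~\ref{fig:kmachine}---is exactly the route the paper intends (it follows the analogous argument for Krivine's machine with linear dependent types in~\cite{DalLago2011}). Note, however, that the paper itself gives no proof of this lemma and does not even display the typing rules for environments, closures and stacks, so there is no detailed argument to measure you against; what can be assessed is whether your plan closes the gap, and it does not yet.

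The genuine gap is that the object the lemma speaks about, $\poltwo^\confone$, is never actually defined in your proposal: you describe it ``intuitively'' and then defer the one property everything rests on---that the weight is additive over the components of a configuration and invariant under moving a closure between the stack, an environment slot, and the active position---to a ``dedicated substitution-and-relocation lemma'' that you only conjecture. Without a concrete definition one cannot check the two places where strictness is genuinely at risk. First, in the lookup rule $((\varone,\envone),\stone)\toconf(\envone(\varone),\stone)$ the naive accounting gives $\polone\cdot\poltwo^{\prooftwo}$ before versus $\poltwo^{\prooftwo}+(\polone-1)\cdot\poltwo^{\prooftwo}$ after, which is an equality, not a strict inequality; the strict drop must come from the extra ``$+1$'' contributed by the now-active dereliction variable, i.e.\ from the same mechanism that makes Lemma~\ref{lemma:monotone} work for $\redcutelimarr$, and your text does not pin this down. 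Second, the three administrative rules (application, $\mu$-abstraction, naming) consume constructors whose weight contribution is a fresh variable instantiated to $0$ when passive; you must argue that in a configuration these occurrences count as active (hence contribute at least $1$), otherwise those steps yield only $\resgeq$ and the lemma fails as stated. Supplying the configuration typing rules, the explicit weight, and the additivity/relocation lemma---and then verifying strictness in these two spots---is the actual content of the proof, and it is precisely what is missing.
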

This allows to generalize polystep soundness to $\K$:
\begin{theorem}[Polystep Soundness for the $\K$]\label{theo:polystepmachine}
Let $\proofone\pof\tjc{\confone}{\nltone}$ and let $\confone\toconf^n\conftwo$. Then $n\leq\poltwo^{\proofone}$. 
\end{theorem}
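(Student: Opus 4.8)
The plan is to mirror the proof of Theorem~\ref{theo:polystepbllp} exactly, transporting it from the logical setting ($\redcutelimarr$ on proofs) to the machine setting ($\toconf$ on configurations). The single ingredient that does the real work is the per-step weight-decrease lemma stated just above the theorem: whenever $\confone \toconf \conftwo$ one has $\poltwo^\confone \resgt \poltwo^\conftwo$. Granting this, an induction on $n$ delivers the bound, just as the analogue of Lemma~\ref{lemma:monotone} yields Theorem~\ref{theo:polystepbllp}.

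First I would make precise that the weight is attached to a \emph{type derivation} of a configuration, not to the bare configuration, so that the statement $\poltwo^\confone \resgt \poltwo^\conftwo$ already presupposes that $\conftwo$ is itself typable. Thus the argument rests on an implicit subject-reduction property for $\K$: if $\proofone \pof \tjc{\confone}{\nltone}$ and $\confone \toconf \conftwo$, then there is a canonically determined derivation $\prooftwo \pof \tjc{\conftwo}{\nltone}$, the type being preserved along transitions exactly as in Theorem~\ref{theo:subjred} for $\toh$. Applying this along the whole reduction $\confone = \confone_0 \toconf \confone_1 \toconf \cdots \toconf \confone_n = \conftwo$ produces a chain of derivations $\proofone = \proofone_0, \proofone_1, \dots, \proofone_n$ with $\proofone_i \pof \tjc{\confone_i}{\nltone}$, so that every weight $\poltwo^{\confone_i}$ is well defined.

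Next I would apply the per-step lemma to each of the $n$ transitions, obtaining the strictly decreasing chain of resource polynomials
$$\poltwo^{\confone_0} \resgt \poltwo^{\confone_1} \resgt \cdots \resgt \poltwo^{\confone_n}.$$
The crucial observation, which is already the content of the per-step lemma and of Lemma~\ref{lemma:monotone}, is that each strict decrease is witnessed by a full unit: the fired transition turns an active negative formula into a passive or consumed one, and the fresh variable this formula contributed to the weight drops from the value $1$ to $0$, so that in fact $\poltwo^{\confone_i} \resgeq \poltwo^{\confone_{i+1}} + 1$. The theorem then follows by a routine induction on $n$: the base case $n=0$ is $0 \resleq \poltwo^\proofone$ since weights are resource polynomials and hence $\resgeq 0$, while the inductive step combines $\poltwo^{\confone_0} \resgeq \poltwo^{\confone_1} + 1$ with the induction hypothesis $n-1 \resleq \poltwo^{\confone_1}$ applied to $\confone_1 \toconf^{n-1} \conftwo$, yielding $n = (n-1)+1 \resleq \poltwo^{\confone_1}+1 \resleq \poltwo^{\confone_0} = \poltwo^\proofone$. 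As in Theorem~\ref{theo:polystepbllp}, Lemma~\ref{lem:subst} guarantees that this bound persists when any natural number is substituted for a free resource variable of $\poltwo^\proofone$, so in particular $n \leq \poltwo^\proofone$.

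The main obstacle is not the induction itself but the strengthening of ``$\resgt$'' to ``decrease by at least the constant $1$'', and this is where the design of the weight on $\K$-configurations is delicate. Because that weight is defined over environments, closures and stacks independently of (and differently from) the proof weight of Section~\ref{subsec:soundness}, the real work is to set it up so that each of the five transition rules of Figure~\ref{fig:kmachine} strictly, and by a full unit, lowers it; the stack-saving and stack-restoring rules for $\mu$ and $[\cdot]$ are the ones to watch, since there the discipline of active versus passive occurrences must be tracked through the machine state rather than through a cut path in a proof. Once the per-step lemma is secured in this sharp form, the theorem above is immediate.
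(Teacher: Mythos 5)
Your proposal matches the paper's own (only sketched) argument: the theorem is obtained exactly by iterating the per-step weight-decrease lemma for $\toconf$ along the $n$ transitions and concluding by induction, just as Lemma~\ref{lemma:monotone} yields Theorem~\ref{theo:polystepbllp}. Your added remarks --- that typability must be preserved along machine transitions so that each intermediate weight is defined, and that the strict decrease must in fact be by a full unit --- correctly identify the points the paper leaves implicit, but they do not change the route.
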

Please observe how Theorem~\ref{theo:polystepmachine} holds in particular when
$\confone$ is the initial configuration for a typable term $\termone$, i.e., 
$\pair{\pair{\termone}{\emptyenv}}{\emptystk}$.}
\section{Conclusions}
In this paper\remove{, some evidence has been given on the fact} we have presented some evidence that the enrichment to intuitionistic linear logic provided
by bounded linear logic is robust enough to be lifted to polarized linear logic and the $\lambda\mu$-calculus. This
paves the way towards a complexity-sensitive type system, which on the one hand guarantees that typable terms
can be reduced to their normal forms in a number of reduction steps which can be read from their type derivation, and
on the other allows to naturally type useful control operators.

Many questions have been purposely left open here: in particular, the language of programs is the pure, constant-free, $\lambda\mu$-calculus,
whereas the structure of types is minimal, not allowing any form of polymorphism. We expect that endowing $\BLLP$ with second
order quantification or $\BLLPLM$ with constants and recursion should not be particularly problematic, although laborious: the
same extensions have already been considered in similar settings in the absence of control~\cite{GSS92TCS,DalLago2011}. Actually,
a particularly interesting direction would be to turn $\BLLPLM$ into a type system for Ong and Stewart's $\muPCF$~\cite{OngS97POPL}, this
way extending the linear dependent paradigm to a language with control. This is of course outside the scope of this paper, whose purpose
was only to delineate the basic ingredients of the logic and the underlying type system.

As we stressed in the introduction, we are convinced this work to be the first one giving a time complexity analysis methodology
for a programming language with higher-order functions \emph{and control}.\longv{\footnote{Tatsuta has investigated the maximum length of $\mu$-reduction 
for a language \emph{without $\lambda$-abstractions} (RTA 2007).}} One could of course object that complexity analysis of
$\lambda\mu$-terms could be performed by translating them into equivalent $\lambda$ terms, e.g. by way of a suitable CPS-transform~\cite{DBLP:conf/caap/Groote94}.
This, however, would force the programmer (or whomever doing complexity analysis) to deal with programs which are structurally
different from the original one. And of course, translations could introduce inefficiencies, which are maybe harmless from a purely
qualitative viewpoint, but which could make a difference for complexity analysis.\remove{\longv{\footnote{CPS 
translations in general complicate types, e.g., the CPS translation in \cite{DBLP:conf/caap/Groote94} corresponds to the 
negative translation. Thus existing techniques for complexity analysis on $\lambda$ calculus are harder to apply.}}}
\longv{\bibliographystyle{plain}}
\shortv{\bibliographystyle{abbrv}}
\bibliography{lpar2013}
\end{document}